\theoremstyle{plain}
\newtheorem{thm}{Theorem}
\newtheorem{lemma}{Lemma}
\newtheorem{prop}{Proposition}
\newtheorem{assumption}{Assumption}
\theoremstyle{definition}
\newtheorem{remark}{Remark}
\begin{document}

\begin{frontmatter}
\title{Novel Subsampling Strategies for Heavily Censored Reliability Data\protect\thanksref{T1}}
\thankstext{T1}{This research is supported partly by National Key  R\&D Program of China 2021YFA1000300, 2021YFA1000301.}

\begin{aug}
\author{\inits{YX}\fnms{Yixiao} \snm{Ruan}\ead[label=e1]{ruanyixiao16@mails.ucas.ac.cn}},
    \address{Academy of Mathematics and Systems Science, Chinese Academy of Sciences, Beijing, China.\\
  School of Mathematical Sciences, University of Chinese Academy of Sciences, Beijing, China.\\}

   \author{\inits{Z}\fnms{Zan} \snm{Li}\ead[label=e2]{lizan2022@nankai.edu.cn}},
    \address{School of Statistics and Data Science, KLMDASR and LPMC, Nankai University, Tianjin, China.\\}
              
    \author{\inits{ZH}\fnms{Zhaohui} \snm{Li}\thanksref{t2}\ead[label=e3]{lizh@amss.ac.cn}},
   \address{Academy of Mathematics and Systems Science, Chinese Academy of Sciences, Beijing, China.\\
     School of Mathematical Sciences, University of Chinese Academy of Sciences, Beijing, China.\\}

   \author{\inits{DKJ}\fnms{Dennis K. J.} \snm{Lin}\ead[label=e4]{dkjlin@purdue.edu}},   \address{Department of Statistics, Purdue University.\\}
    
   \author{\inits{QP}\fnms{Qingpei} \snm{Hu}\ead[label=e5]{qingpeihu@amss.ac.cn}},
  \address{Academy of Mathematics and Systems Science, Chinese Academy of Sciences, Beijing, China.\\
  School of Mathematical Sciences, University of Chinese Academy of Sciences, Beijing, China.\\}
    \author{\inits{D.Y}\fnms{Dan} \snm{Yu}\ead[label=e6]{dyu@amss.ac.cn}}
  \address{Academy of Mathematics and Systems Science, Chinese Academy of Sciences, Beijing, China.\\
   School of Mathematical Sciences, University of Chinese Academy of Sciences, Beijing, China.\\}
     \thankstext{t2}{Corresponding author: lizh@amss.ac.cn.}
\end{aug}
%

\begin{abstract}
Computational capability often falls short when confronted with massive data, posing a common challenge in establishing a statistical model or statistical inference method dealing with big data. 
While subsampling techniques have been extensively developed to downsize the data volume, there is a notable gap in addressing the unique challenge of handling extensive reliability data, in which a common situation is that a large proportion of data is censored. 
In this article, we propose an efficient subsampling method for reliability analysis in the presence of censoring data, intending to estimate the parameters of lifetime distribution. 
Moreover, a novel subsampling method for subsampling from severely censored data is proposed, i.e., only a tiny proportion of data is complete. 
The subsampling-based estimators are given, and their asymptotic properties are derived.
The optimal subsampling probabilities are derived through the L-optimality criterion, which minimizes the trace of the product of the asymptotic covariance matrix and a constant matrix. 
Efficient algorithms are proposed to implement the proposed subsampling methods to address the challenge that optimal subsampling strategy depends on unknown parameter estimation from full data.
Real-world hard drive dataset case and simulative empirical studies are employed to demonstrate the superior performance of the proposed methods.
\end{abstract}

\begin{keyword}
\kwd{Subsampling Design}
\kwd{Massive Data}
\kwd{L-optimality}
\kwd{Reliability Analysis}
\end{keyword}

\end{frontmatter}

\section{Introdction}\label{intro} 
As the volume of data explodes, analyzing massive datasets becomes an urgent task. The huge amount of data presents significant challenges to the existing computing resources. This is also a trend in the reliability analysis, where an increase in censored data is becoming more prevalent \citep{Kleinbaum1996}. Confronted with such a phenomenon, \citet{Meeker2014} explored the opportunities and challenges when dealing with the massive reliability field data.

\citet{Backblaze} reports the daily operational hard disk drive (HDD) data on its website\footnote{\url{https://www.backblaze.com/cloud-storage/resources/hard-drive-test-data}.} every quarter, 
and a huge amount of data has been collected. Existing data reflects the product's high reliability, resulting in a severe imbalance, i.e., only a small proportion of HDDs fail (e.g., only 1.2\% of data are uncensored in 2016).
These are massive hard drive failure data with the majority of censored data, and we will surely encounter big data problems when we try to fit the HDDs' lifetime distribution. A few attempts have been made. For example, \citet{mittman2019} proposed a hierarchical model based on the generalized limited failure population (GLFP) to describe the failure time of HDD. \citet{Lewis-Beck2022} studied the prediction method using this model. These methods are effective. However, only a small amount of the data is analyzed. Thus, they still face the issue of massive data.

This example highlights the challenge of dealing with highly censored lifetime data, which is becoming a common occurrence. 
Censoring and truncating will frequently happen in collecting such data due to the high reliability of products and the limited time for data collection \citep{Emura_review}.  
In reliability analysis, the lifetime distribution model is a fundamental statistical model that describes the distribution of failure time/lifetime by using a parametric model such as Weibull distribution \citep{Lawless2011}. 
Estimating the parameters in these lifetime models and quantifying the uncertainty of these estimations are vital tasks for reliability assessment in real-world applications. 
In this example, however, the huge sample size makes it difficult to calculate full-data Maximum likelihood estimation (MLE) due to the limited computing resources. 

To address this important issue, one potential method is the \textit{subsampling}. The general idea of a subsampling scheme is to assign appropriate subsampling probabilities to each data point and draw a representative small-size subsample from the original data \citep{Wang2021}. 
Next, the parameter estimation and the hypothesis testing, etc, proceed, based on the subsample with appropriate adjustments to ensure some desired properties for estimation (such as unbiasedness).
As a result, subsampling can significantly reduce the computational burden by shrinking the volume of data.
Numerous work on massive data subsampling have been published recently.
Most focus on deriving optimal subsampling probabilities under various statistical models, such as linear regression \citep{Ma2015, Wang2021}, generalized linear models \citep{Zhang2021,Ai2021}, logistic regression \citep{Fithian2014,Cheng2020}, multiclass logistic regression \citep{Han2020}, and maximum quasi-likelihood estimators \citep{Yu2022}.
Other studies include the optimal subsampling procedure motivated by A-optimality criterion (OSMAC, \citealp{Wang2018}), influence-based subsampling method \citep{wang2020less} and trimmed-space-filling sampling procedure \citep{deng2023big} and  subsampling into the Bayesian hierarchical model \citep{Bradley2021}. However, all these methods focused on the uncensored data, which cannot be applied to the dataset in \cite{Backblaze}.

To the best of our knowledge, there is no existing subsampling methods have been proposed for analyzing \textit{highly censored massive lifetime data}. 
This article aims to fill this gap by proposing two subsampling strategies to obtain the subsampling-based estimators, called RDS (Reliability Data general Subsampling method) and RDCS (Reliability Data Censoring Subsampling method) for the massive censored data when facing different censoring rates. 
The main contributions of our proposed subsampling methods include three aspects: (\romannumeral1) we provide two optimal subsampling schemes for massive lifetime data based on the L-optimality criterion. In particular, the RDS is used when the censoring rate is moderately high, while the RDCS is used when the censoring rate is extremely high. Moreover, we derive the precise formulas for the subsampling probabilities of our two methods and propose two practical algorithms to implement the subsampling procedures. (\romannumeral2) The theoretical properties, including asymptotic normality, are provided, which are crucial for calculating optimal subsampling probabilities and constructing confidence intervals. And (\romannumeral3) the proposed two subsampling-based estimators are more computationally efficient than the full-data MLE and perform much better than the uniform subsampling method in statistical efficiency and estimation accuracy. 

The remainder of this article is organized as follows. In Section 2, we formulate the subsampling problem and propose two subsampling procedures. Two subsampling-based estimators are also provided to approximate the full-data MLE. In Section 3, we derive consistency and asymptotic normality of two subsampling-based estimators. Section 4 proposes the method to obtain the optimal subsampling probabilities in the context of the L-optimality criterion. Moreover, we give two practical algorithms for the two proposed methods and their theoretical properties are studied. 
In Section 5, simulations and practical applications are conducted to illustrate the superiority of our methods compared with uniform subsampling. Conclusions and future work are given in Section 6. Proofs and additional examples are provided in the appendix.

\section{The Subsampling Methods for Massive Data with Censoring}
In Section \ref{subsec:formulation}, we formulate the problem of the subsampling methods for censored lifetime data and introduce the notations used in this paper. 
This section then proposes a general subsampling method in Section \ref{subsec:gse} and a censored subsampling method in Section \ref{subsec:cse}.
\subsection{Notations and Problem Formulation}\label{subsec:formulation}

Here, we consider the lifetime data given by the form $\mathcal{D}_n=\lbrace \boldsymbol{x}_i=(t_i, C_i, t_{il}), i = 1,2,...,n\rbrace$. 
For each individual, either the failure time or the right censoring time can be observed and is denoted by $t_i$. 
$C_i$ is the indicator for censoring: $C_i=1$ if the $i$th observation is censored, otherwise $C_i=0$.
Moreover, $t_{il}$ denotes the left-truncation time, the age of the $i$th individual at the beginning of data reporting. 
Truncation is a common feature in survival data. If the observation is left-truncated, it means that it would not have
been observed if it had failed before a particular time. 
Biased estimations can be made if truncation is neglected \citep{Klein2003}. 
We set $t_{il}=0$ if the $i$th individual is not left-truncated.

The size of the dataset $n$ is huge.
Only a relatively small $n_0$ individuals fail and the rest of $n_1 = n - n_0$ individuals are right-censored. 
Define the censoring rate as $\alpha= n_1/n \in [0,1]$, to describe the proportion of the censoring data in the full dataset. 
Here, we focus on the situation where the data is heavily censored, that is, $\alpha$ approaches 1.
Moreover, it is assumed that the censoring time and the truncating time are independent of failure time, and all the individuals are independent and identically distributed. 
Consider the motivating example, we will choose the data to analyze from a certain period. 
The censoring and truncation time is determined by the install time and the certain period selected by the reliability engineer, which has no relation to the failure time. 
Thus, treating the truncation and the censoring time to be independent of the failure time is reasonable.

The lifetime of the product is typically modeled using a non-negative random variable. 
Let $f(t;\boldsymbol\theta)$ be the probability density function (PDF) of the lifetime, which is parameterized by $\boldsymbol\theta \in \mathbb{R}^d$. 
The following PDFs are commonly used lifetime models: 
\begin{itemize}
    \item Exponential Distribution: $f(t)=\theta e^{-\theta t}$, $\theta>0$.
    \item Weibull Distribution: $f(t)=(\beta/\eta)(t/\eta)^{\beta-1} e^{(-t/\eta)^{\beta}}$, $\beta>0$, $\eta>0$.
    \item GLFP Distribution \citep{Chan1999}: 
    $f(t) = \pi_gf_1(1-F_2)+f_2(1-\pi_g F_1)$, $F_1$, $F_2$ are the cumulative distribution function of two Weibull distributions with parameters $(\beta_1,\eta_1)$ and $(\beta_2,\eta_2)$, and $f_1$, $f_2$ are the corresponding PDFs. 
\end{itemize}

The first two distributions are widely used in the reliability analysis \citep{Lawless2011} and the last one turns out to have good performance when fitting the failures of hard drives \citep{mittman2019}. For each specific parametric lifetime model, the unknown parameters in the model have explicit reliability interpretations. 
For example, the parameter $\theta$ in the exponential model indicates the failure rate of the product. 
Estimating these parameters from the data is a vital task for reliability analysis \citep{Practical_reliability}.
One of the most popular parameter estimation methods is called maximum likelihood estimation (MLE).
Given the full data and the parametric PDF for the lifetime model, the likelihood function can be written as:
$L(\boldsymbol\theta)=\prod_{i=1}^{n}\left[\frac{f(t_i;\boldsymbol\theta)}{1-F(t_{il};\boldsymbol\theta)}\right]^{1-C_i}\left[\frac{1-F(t_i;\boldsymbol\theta)}{1-F(t_{il};\boldsymbol\theta)}\right]^{C_i},$
where $\boldsymbol\theta=(\theta_1,\theta_2,...,\theta_d)$ is an unknown $d$-dimension parameter.
The MLE is typically obtained by maximizing the log-likelihood function of the full data, i.e., $\boldsymbol{\hat{\theta}}=\mathop{\arg\max}\limits_{\theta}l_f(\boldsymbol\theta)$, where the log-likelihood is given by 
\begin{equation}
    \begin{aligned}
     l_f(\boldsymbol\theta)
     =&\frac{1}{n}\sum_{i=1}^{n}(1-C_i)\log f(t_i;\boldsymbol\theta)+
     C_i\log\left[1-F(t_i;\boldsymbol\theta)\right]\\
     -&\log\left[1-F(t_{il};\boldsymbol\theta)\right] \\
     \triangleq &\frac{1}{n}\sum_{i=1}^{n}l(\boldsymbol{x}_i,\boldsymbol\theta).
    \end{aligned}
    \label{Equ:nlog-likelihood}
\end{equation}
In most cases, there is no closed-form expression for MLE.
Numerical optimization algorithms, such as the Newton method, are required to solve the aforementioned optimization problem.
For example, the Newton method requires iterative number of evaluations of $\boldsymbol{\theta}^{(m)}=\boldsymbol{\theta}^{(m-1)}-\{\ddot{l}(\boldsymbol{\theta}^{(m-1)})\}^{-1}{\dot{l}(\boldsymbol{\theta}^{(m-1)})}$ for $m=1,2,\cdots$. 
Such an algorithm will be time-consuming if the size of the data is large.
Hence, it is vital to develop a subsampling method that draws a small representative subsample from the original massive data and then approximate the MLE based on the subsample. 

Generally, a subsampling algorithm is implemented by first assigning subsampling probabilities for each data point, and then drawing samples iteratively according to the probabilities determined in the first stage \citep{Politis_1999} until the desired sample size is reached. Therefore, the determination of the subsampling probabilities is crucial because the quality of the subsample relies on the subsampling probabilities.
The significant effects of the subsampling probabilities on the quality of the selected subsample will be demonstrated in Section \ref{simulation} by comparing the proposed subsampling strategy with the uniform subsampling.
In cohort follows, we will introduce two different subsampling strategies and the corresponding subsampling-based estimators.

\subsection{General Subsampling Based Estimator}\label{subsec:gse}
As stated, the General Subsampling procedure is first assigning subsampling probabilities $\lbrace\pi_i\rbrace_{i=1,2,...,n}$ to the corresponding individuals, where $0<\pi_i<1$, $\sum_{i=1}^n \pi_i = 1$, and then drawing a subsample with size $r$ from the full data according to the subsampling probabilities with replacement. 
We assume $n$ is large and $r\ll n$ significantly, which is helpful to reduce the computation burden because enough data are discarded.

Let $\mathcal{D}_n^* = \lbrace (\boldsymbol{x}_i^*, \pi_i^*)=(t^*_i, C_i^*, t_{il}^*, \pi_i^*), i=1,2,...,r\rbrace$ denote the selected subsample. Then the General Subsampling based estimator $\tilde{\boldsymbol{\theta}}_g$ is obtained by maximizing the weighted pseudo log-likelihood of the subsample, i.e., $\tilde{\boldsymbol{\theta}}_g=\mathop{\arg\max}\limits_{\theta}l_g^*(\boldsymbol\theta)$,
\begin{equation}
    \begin{aligned}
     l_g^*(\boldsymbol\theta)=& \frac{1}{r}\sum_{i=1}^{r}\frac{1}{\pi_i^*}l^*(\boldsymbol{x}^*_i,\boldsymbol\theta)\\
     =&\frac{1}{r}\sum_{i=1}^{r}\frac{1}{\pi_i^*} \{(1-C_i^*)\log f(t_i^*;\boldsymbol\theta)+
     C_i^*\log\left[1-F(t_i^*;\boldsymbol\theta)\right]\\
     -&\log[1-F(t_{il}^*;\boldsymbol\theta)] \}.
    \end{aligned}
    \label{Equ:rlog-likelihood}
\end{equation}
The inverse probability weight ${\pi_i^*}^{-1}$ ensures an unbiased or asymptotically unbiased estimation, which will be shown in the theoretical analysis of Section \ref{section:theorem supports}.

It can be seen that the main challenge of general subsampling is evaluating the subsampling quality and then determining the optimal subsample probabilities $\pi_i^*$ in terms of the evaluation metric.
One possible evaluation is the accuracy of subsample estimate $\tilde{\boldsymbol{\theta}}_g$ and its (asymptotic) variance. 
Sections 3 and 4 will discuss how to determine the optimal subsample probabilities regarding the L-optimality criterion. 

In the simulations of Appendix \ref{extra_sim}, we will show that the performance of this strategy (over uniform subsampling) diminishes significantly when the censoring rate becomes quite high. 
One important reason is that when the censoring rate approaches $1$, the subsample will ignore some of the complete data despite the superiority of the subsampling probabilities on them. 
This causes the proposed subsampling methods to draw abundant samples from the censored subset of the full data, similar to uniform subsampling.
To address this limitation, a refined subsampling method tailored for extremely high censoring rates is presented in the subsequent subsection.

\subsection{Censoring Subsampling Based Estimator}\label{subsec:cse}

In the reliability analysis for highly reliable products, extremely high censoring rate data (complete data are rare) are frequently encountered (see motivating example).
When implementing the general subsampling procedure to extract a small subsample, chances are that some uncensored data will not be included in the subsample, making the uncensored data even rarer. 
Compared to the censored data, the uncensored data possesses a specific failure time rather than a range and may contain more information.
Thus, if dropping partial of them due to the subsampling strategy, the accuracy of the estimation based on the subsample might deteriorate drastically. 
Moreover, the numerical investigation shows that the general subsampling method exhibits instability and inaccuracy reflected by the high variance of estimators and some outrageous estimators deviating significantly from the true value.
To address these problems, a novel method called the Censoring Subsampling is proposed for the case where $r>n_0$.
Broadly speaking, we intend to include as much complete data as we can.
A two-stage approach is adopted to implement the new subsampling strategy as follows, which is simple and straightforward. 

\noindent\textbf{Censoring Subsampling}: 
\begin{itemize}
\item Initialize the desired size of the subsample $r$ $(r>n_0)$. Collect all uncensored data into the subsample.
\item According to the designed subsampling probabilities $\tilde{\boldsymbol{\pi}}$, we can apply the General Subsampling procedure to the censored data and select $r-n_0$ censored data into the subsample with replacement.
\end{itemize}
Let $\tilde{\boldsymbol{\pi}} = (\tilde{\pi}_1,...,\tilde{\pi}_{n})$ be the subsampling probabilities assigned to each data.
When $C_i=0$, let $\tilde{\pi}_i = 1$, which corresponds to collecting the complete data with probability $1$. 
Assume that $\sum_{C_i = 1} \tilde{\pi}_i =1$, i.e., the acceptance probability is applied to the censored dataset.
For notational convenience, define $\boldsymbol{\omega}=(\omega_1,...,\omega_n)$ as a weight vector.  
We set $\omega_i=1$ if $C_i = 0$ and $\omega_i=\frac{1}{(r-n_0)\tilde{\pi}_i}$ if $C_i = 1$. Hence, $\omega_i$ can be written as $\omega_i = \frac{1}{(1-C_i)+C_i (r-n_0)\tilde{\pi}_i} $. Note that in the rest of the paper, without loss of generality, we will let the first $n_0$ individuals be the uncensored data and the others be censored. That is, for $i=1,2,...,n_0$, $C_i=0$ and $\omega_i=1$, while for $i>n_0$, $C_i=1$ and $\omega_i=\frac{1}{(r-n_0)\tilde{\pi}_i}$. 

Let $\lbrace (\boldsymbol{x}_i^c, \tilde{\pi}_i^c)=(t^c_i, C_i^c, t_{il}^c, \tilde{\pi}_i^c), i=1,2,...,r\rbrace$ represent the information of $r$ selected data and the first $n_0$ data are uncensored. Besides, $\omega^c_i=\frac{1}{(1-C_i)+C_i (r-n_0)\tilde{\pi}^c_i}$ for all $i$. Then, based on the Censoring Subsampling procedure, the weighted pseudo log-likelihood equation can be written in the following form:
\begin{equation}
    \begin{aligned}
        l^{*}_c(\boldsymbol{\theta}) =
        & \sum_{i=1}^r \omega_i^c l^c(\boldsymbol{x}^c_i,\boldsymbol{\theta})\\
        =& \sum_{i=1}^{n_0} l^c(\boldsymbol{x}_i^c,\boldsymbol{\theta}) + \sum_{i=n_0+1}^{r} \frac{1}{(r-n_0)\tilde{\pi}_i^c}l^c(\boldsymbol{x}_i^c,\boldsymbol{\theta})
        \\
        = & \sum_{i=1}^r \omega_i^c \{(1-C_i^c)\log f(t_i^c;\boldsymbol\theta)+ C_i^c\log\left[1-F(t_i^c;\boldsymbol\theta)\right] \\
        - &\log\left[1-F(t_{il}^c;\boldsymbol\theta)\right]\}.
    \end{aligned}
\end{equation}
Through maximizing $l^*_{c}(\boldsymbol{\theta})$, we obtain the subsampling--based MLE, i.e., $\tilde{\boldsymbol{\theta}}_c = \mathop{\arg\max}\limits_{\boldsymbol{\theta}}l^*_{c}(\boldsymbol{\theta})$. For convenience, we let $l_i(\boldsymbol{\theta})=l(\boldsymbol{x}_i,\boldsymbol\theta)$, $l_i^*(\boldsymbol\theta) = l^*(\boldsymbol{x}^*_i,\boldsymbol\theta)$ and $l^{c}_i(\boldsymbol{\theta}) = l^{c}(\boldsymbol{x}^c_i,\boldsymbol{\theta})$ in this article.

\section{Theoretical Supports}\label{section:theorem supports}

Section 2 proposes two types of subsampling procedures. 
As stated, the accuracy of the MLE obtained from the subsample is affected by the chosen subsampling probabilities $\pi_i$.
Therefore, determining the \textit{optimal subsampling probabilities $\pi_i$} becomes a vital task. 
In this section, we investigate the theoretical properties of the two subsampling-based estimators, $\tilde{\boldsymbol{\theta}}_g$ and $\tilde{\boldsymbol{\theta}}_c$. 
The theoretical results proposed in this section are crucial for designing our subsampling algorithm which attempts to find the subsampling probabilities under which the trace of the asymptotic variance-covariance matrix of the estimator is minimized. 
The detailed algorithm will be provided in Section 4.

Note that in this article, if $A$ is a matrix, then $A^2 \triangleq AA^\top$. Besides, $\ddot{l}_{(jk)} (\boldsymbol{\theta}) \triangleq \frac{\partial^2 l(\boldsymbol{\theta})}{\theta_j \theta_k}$ is the second order derivative of the log-likelihood function. The following assumptions are necessary for the proof:

\begin{assumption}\label{ass1}
   The $\hat{\boldsymbol\theta}$ is unique. As $n\xrightarrow{}\infty$, $\boldsymbol{M}_g =\frac{1}{n}\sum_{i=1}^n  \ddot{l}_i(\hat{\boldsymbol\theta}) $ is a negative definite matrix in probability, and $\frac{1}{n^2}\sum_{i=1}^n \frac{1}{\pi_i}
   \Vert\ddot{l}_i(\hat{\boldsymbol{\theta}})\Vert^2 = O_{p|\mathcal{D}_n}(1)$.
\end{assumption}

\begin{assumption}\label{ass2}
    $\frac{1}{n^2}\sum_{i=1}^n\frac{1}{\pi_i} \Vert \dot{l}_i(\hat{\boldsymbol\theta}) \Vert^2=O_{p|\mathcal{D}_n}(1)$.
\end{assumption}
   
\begin{assumption}\label{ass3}
     For all $\boldsymbol{\theta} = \hat{\boldsymbol{\theta}} + s(\tilde{\boldsymbol{\theta}}_g - \hat{\boldsymbol{\theta}})$, $s \in [0,1]$, $\frac{1}{n}\left\Vert  \frac{\partial \ddot{l}^*_{g}(\boldsymbol{\theta})}{\partial \theta_i} \right\Vert=O_{p|\mathcal{D}_n}(1)$, $i=1,2,...,d$. 
\end{assumption}

\begin{assumption}\label{ass4}
   Exist $\delta>0$, such that $\frac{1}{n^{2+\delta}} \sum_{i=1}^n \frac{1}{\pi_i^{1+\delta}} \Vert \dot{l}_i (\hat{\boldsymbol{\theta}})\Vert^{2+\delta} = O_{p}(1)$. 
\end{assumption}
These assumptions require some bound restrictions on $l(\hat{\boldsymbol\theta})$ and $l^*(\hat{\boldsymbol\theta})$, which are typically easily satisfied. These constraints are naturally met when the failure time adheres to commonly used lifetime distributions such as the Weibull and exponential distributions.
Indeed, when the subsampling probability is uniform, i.e., $\pi_1=\pi_2 = \dots = \pi_n = 1/n$, Assumptions \ref{ass1}, \ref{ass2}, and \ref{ass4} are essentially upheld. For further discussions, please refer to Appendix \ref{appendix:Exmaples}.

\subsection{Theoretical Properties of General Subsampling}
Assume the data are randomly censored. Under General Subsampling strategy with the certain subsampling probabilities $\{\pi_i\}_{i=1,2,...,n}$, suppose the assumptions 1-4 holds. 
Conditional on the full data $\mathcal{D}_n$, the asymptotic property of $\tilde{\boldsymbol\theta}_g$ is given below:
\begin{thm} \label{thm:genenral-mle}
    Under Assumption \ref{ass1}-\ref{ass4}, conditionally on $\mathcal{D}_n$ in probability, as $n\xrightarrow{}\infty$ and $r\xrightarrow{}\infty$, the following holds:
    \begin{align}
        \sqrt{r}\boldsymbol{V}_g^{-\frac{1}{2}}(\tilde{\boldsymbol\theta}_g-\hat{\boldsymbol\theta})\xrightarrow{d}N(0,\boldsymbol{I}),
    \end{align} 
    where $\boldsymbol{V}_g=\boldsymbol{M}_g^{-1}
         \boldsymbol{\Lambda}_g
         \boldsymbol{M}_g^{-1}$
 and $\boldsymbol{\Lambda}_g=\frac{1}{n^2}\sum_{i=1}^n\frac{1}{\pi_i}\dot{l}_i(\hat{\boldsymbol\theta})^2$.
\end{thm}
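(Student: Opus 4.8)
The plan is to treat $\tilde{\boldsymbol\theta}_g$ as an M-estimator solving the subsample estimating equation $\dot{l}_g^*(\tilde{\boldsymbol\theta}_g)=\boldsymbol{0}$, where $\dot{l}_g^*$ and $\ddot{l}_g^*$ denote the gradient and Hessian of $l_g^*$ in \eqref{Equ:rlog-likelihood}, and to linearize it around the full-data MLE $\hat{\boldsymbol\theta}$. The crucial structural fact is that, conditional on $\mathcal{D}_n$, each summand $\pi_i^{*-1}\dot{l}^*(\boldsymbol{x}_i^*,\boldsymbol\theta)$ is drawn i.i.d.\ from $\{\pi_j^{-1}\dot{l}_j(\boldsymbol\theta)\}_{j=1}^n$ with weights $\pi_j$, so the inverse-probability weighting yields $E[n^{-1}\dot{l}_g^*(\boldsymbol\theta)\mid\mathcal{D}_n]=\dot{l}_f(\boldsymbol\theta)$ for every $\boldsymbol\theta$; in particular at $\boldsymbol\theta=\hat{\boldsymbol\theta}$ this conditional mean is $\boldsymbol{0}$ because $\hat{\boldsymbol\theta}$ solves $\dot{l}_f(\hat{\boldsymbol\theta})=\boldsymbol{0}$. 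A one-term Taylor expansion then gives
\begin{equation}
\sqrt{r}(\tilde{\boldsymbol\theta}_g-\hat{\boldsymbol\theta})
= -\Bigl[\tfrac{1}{n}\ddot{l}_g^*(\hat{\boldsymbol\theta})\Bigr]^{-1}
\Bigl(\sqrt{r}\,\tfrac{1}{n}\dot{l}_g^*(\hat{\boldsymbol\theta})\Bigr)
+ o_{p|\mathcal{D}_n}(1),
\end{equation}
and the theorem reduces to (i) a law of large numbers for the scaled Hessian and (ii) a central limit theorem for the scaled score, plus control of the remainder.

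I would first establish consistency $\tilde{\boldsymbol\theta}_g\to\hat{\boldsymbol\theta}$ conditionally on $\mathcal{D}_n$ in probability, using the conditional unbiasedness above together with a conditional WLLN to get $n^{-1}\dot{l}_g^*(\boldsymbol\theta)\to\dot{l}_f(\boldsymbol\theta)$, and then invoking uniqueness of $\hat{\boldsymbol\theta}$ and negative-definiteness of $\boldsymbol{M}_g$ (Assumption \ref{ass1}) in the standard argmax argument. For the Hessian, the conditional mean of $n^{-1}\ddot{l}_g^*(\hat{\boldsymbol\theta})$ is exactly $\boldsymbol{M}_g$, and its conditional variance is of order $r^{-1}n^{-2}\sum_j\pi_j^{-1}\|\ddot{l}_j(\hat{\boldsymbol\theta})\|^2=O_{p|\mathcal{D}_n}(r^{-1})$ by Assumption \ref{ass1}, so $n^{-1}\ddot{l}_g^*(\hat{\boldsymbol\theta})\to\boldsymbol{M}_g$ and is invertible for large $r$. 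The second-order remainder in the expansion is handled by combining this consistency with the uniform third-derivative bound $n^{-1}\|\partial\ddot{l}_g^*/\partial\theta_i\|=O_{p|\mathcal{D}_n}(1)$ of Assumption \ref{ass3}.

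The heart of the argument is the score CLT: $n^{-1}\dot{l}_g^*(\hat{\boldsymbol\theta})$ is an average of $r$ conditionally i.i.d.\ mean-zero terms with conditional covariance $\boldsymbol\Lambda_g=n^{-2}\sum_j\pi_j^{-1}\dot{l}_j(\hat{\boldsymbol\theta})^2$, so I would apply the Lindeberg--Feller CLT to the triangular array and verify the Lyapunov condition using Assumption \ref{ass4}: the conditional $(2+\delta)$-th moment of a single term equals $n^{-(2+\delta)}\sum_j\pi_j^{-(1+\delta)}\|\dot{l}_j(\hat{\boldsymbol\theta})\|^{2+\delta}=O_p(1)$, whence the Lyapunov ratio is $O_p(r^{-\delta/2})\to0$. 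This yields $\sqrt{r}\,n^{-1}\dot{l}_g^*(\hat{\boldsymbol\theta})\xrightarrow{d}N(\boldsymbol{0},\boldsymbol\Lambda_g)$ conditionally. Combining with the Hessian limit via Slutsky's theorem gives $\sqrt{r}(\tilde{\boldsymbol\theta}_g-\hat{\boldsymbol\theta})\xrightarrow{d}N(\boldsymbol{0},\boldsymbol{M}_g^{-1}\boldsymbol\Lambda_g\boldsymbol{M}_g^{-1})$, i.e.\ the asserted form with $\boldsymbol{V}_g=\boldsymbol{M}_g^{-1}\boldsymbol\Lambda_g\boldsymbol{M}_g^{-1}$, and standardizing by $\boldsymbol{V}_g^{-1/2}$ finishes the proof.

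The main obstacle I anticipate is bookkeeping the two layers of randomness: every convergence statement is conditional on $\mathcal{D}_n$ yet must hold \emph{in probability} over the data, while the summands of the triangular array depend on $n$ through both $\pi_i$ and $\hat{\boldsymbol\theta}$. Making the conditional Lindeberg/Lyapunov verification rigorous, so that the conditional limit law transfers to a statement valid in $\mathcal{D}_n$-probability as $n,r\to\infty$, is the delicate step, and it is precisely where Assumption \ref{ass4} (rather than the bare second-moment Assumption \ref{ass2}) is needed.
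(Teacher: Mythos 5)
Your proposal follows essentially the same route as the paper's proof: conditional unbiasedness of the inverse-probability-weighted score, consistency via a conditional law of large numbers and the standard Z-estimator consistency theorem, a Taylor expansion whose Hessian term converges to $\boldsymbol{M}_g$ by a Chebyshev argument under Assumption \ref{ass1} and whose remainder is controlled by Assumption \ref{ass3}, a Lindeberg--Feller CLT for the score with the Lyapunov condition verified through Assumption \ref{ass4}, and Slutsky's theorem to conclude. The decomposition, the role of each assumption, and the handling of the conditional-on-$\mathcal{D}_n$ convergence all match the paper's argument.
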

Theorem \ref{thm:genenral-mle} shows that when $n$ and $r$ tend to infinity, the difference between $\tilde{\boldsymbol\theta}_g$ and $\hat{\boldsymbol\theta}$ diminishes. Note that we do not need $r/n \rightarrow 0$ to get this result. However, in practice, $r\ll n$ is necessary because we aim to reduce the computation burden. There are two reasons to approximate $\hat{\boldsymbol\theta}$ instead of the true value. First, the computation of $\hat{\boldsymbol\theta}$ involves all the available information and thus is supposed as one of the best estimations for the true parameter $\boldsymbol{\theta}_0$. Second, the true value is typically unavailable in practice, compared with $\hat{\boldsymbol\theta}$ provides an evaluation metric for the estimation. 
\begin{remark} \label{remark:general_true}
    It will be advantageous if  $\tilde{\boldsymbol\theta}_g$ possesses some asymptotic properties related to the true value $\boldsymbol{\theta}_0$. Moreover, these properties become valuable when providing interval estimations or other inferences for the true parameter. We derive the asymptotic normality property of $\tilde{\boldsymbol\theta}_g - \boldsymbol{\theta}_0$ in Appendix \ref{appendix:Theorems} (see Theorem \ref{thm:general-true}).
\end{remark}

In summary, Theorem \ref{thm:genenral-mle} and Theorem \ref{thm:general-true} in Appendix \ref{appendix:Theorems} illustrate that for any subsampling policy that satisfies some mild assumptions, the general subsampling-based weighted estimator from the subset can approximate the MLE of the full data or the true value fairly well, as long as $r\rightarrow \infty$, $r/n\rightarrow0$.  
The asymptotic variance is significantly affected by the choice of the subsampling probabilities. 
We will discuss in detail how to determine the subsampling probabilities to ensure the optimal performance subsampling-based weighted estimator in Section 4.1.

\subsection{Theoretical Properties of Censoring Subsampling}\label{subsec:Thm of CS}

This section investigates the theoretical properties of $\tilde{\boldsymbol{\theta}}_c$.
The data are assumed to be random censored. 
It should be pointed out that we only recommend the censoring subsampling method when $\alpha=n_0/n$ is small enough (close to $0$).
One sufficient condition to employ the censoring subsampling method is $n_0<r$, i.e., the size of the complete data set is smaller than the desired sample size of the subsample. 
Under such conditions, the properties of the estimator can still be maintained when using the censoring subsampling method, including consistency and asymptotic normality. 
The assumptions below are necessary for the theoretical results.  
\begin{assumption}\label{ass5}
    Assume $\hat{\boldsymbol\theta}$ is unique, $\boldsymbol{\theta} \in \boldsymbol{\Theta}$  and $\boldsymbol{\Theta}$ is compact.  For $\forall \boldsymbol{\theta} \in \boldsymbol{\Theta}$, $\mathbb{E}\{l^2(\boldsymbol{x},\boldsymbol{\theta})\}<\infty, \mathbb{E}\{l(\boldsymbol{x},\boldsymbol{\theta})\}<\infty$.
\end{assumption}

\begin{assumption} \label{ass6}
    Assume $-\mathbb{E}\{\ddot{l}(\boldsymbol{x},\boldsymbol{\theta}) \}$ is positive definite. For $\forall j,k=1,2,...,d$, $\mathbb{E}\{\ddot{l}^2_{(jk)}(\boldsymbol{x},\boldsymbol{\theta})\} < \infty$. Besides, there exists a function $\psi(\boldsymbol{x})$ with $\mathbb{E}\{\psi^2(\boldsymbol{x})\} < \infty$ such that for every $\boldsymbol{\theta}_1, \boldsymbol{\theta}_2 \in \Theta$, $\lvert\ddot{l}_{(jk)}(\boldsymbol{x},\boldsymbol{\theta}_1) - \ddot{l}_{(jk)}(\boldsymbol{x},\boldsymbol{\theta}_2) \rvert  \leq \psi(\boldsymbol{x}) \Vert \boldsymbol{\theta}_1 - \boldsymbol{\theta}_2 \Vert$.
\end{assumption}

\begin{assumption} \label{ass7}
    Assume $\boldsymbol{\Lambda}(\boldsymbol{\theta}) = \mathbb{E}\{\dot{l}(\boldsymbol{x},\boldsymbol{\theta})^2\}$ is positive definite, and $\exists \delta$ such that $\forall$$ \Vert \boldsymbol{\theta} - \hat{\boldsymbol{\theta}} \Vert < \delta$, $\frac{1}{n} \sum_{i=1}^n \Vert \dot{l}_i(\boldsymbol{\theta}) \Vert ^ 4 = O_p(1)$.
\end{assumption}

\begin{assumption} \label{ass8}
    The subsampling probabilities satisfy $\max_{i=n_0+1,...,n} (n\tilde{\pi}_i)^{-1} = O_p(1)$.
\end{assumption}

\begin{thm} \label{thm:censor-all}
    Under Assumptions \ref{ass5}-\ref{ass8}, conditionally on $\mathcal{D}_n$, when $n,r \rightarrow \infty$, $n_0/r=o(1)$, 
    \begin{equation}
        \sqrt{r}\boldsymbol{V}_c^{-\frac{1}{2}}(\tilde{\boldsymbol{\theta}}_c - \hat{\boldsymbol{\theta}}) \xrightarrow{d}N(0,\boldsymbol{I}),
    \end{equation}
    \justifying{where $\boldsymbol{V}_c = \boldsymbol{M}_g^{-1} \boldsymbol{\Lambda}_c \boldsymbol{M}_g^{-1}$ and $\boldsymbol{\Lambda}_c = \frac{1}{n^2} \sum_{i=n_0+1}^n \frac{1}{\tilde{\pi}_i}\dot{l}_i(\hat{\boldsymbol{\theta}})^2 - \left[ \frac{1}{n}\sum_{i=n_0+1}^n \dot{l}_i(\hat{\boldsymbol{\theta}}) \right]^2,$  
    $\dot{l}_i(\hat{\boldsymbol{\theta}})= \frac{\partial \log(1-F(t_i,\hat{\boldsymbol{\theta}}))}{\partial \boldsymbol{\theta}} -  \frac{\partial \log(1-F(t_{il},\hat{\boldsymbol{\theta}}))}{\partial \boldsymbol{\theta}}.$}
   
\end{thm}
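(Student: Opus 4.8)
The plan is to treat $\tilde{\boldsymbol{\theta}}_c$ as an M-estimator and argue conditionally on the full data $\mathcal{D}_n$, exploiting the structural feature that the $n_0$ uncensored observations enter $l^*_c(\boldsymbol{\theta})$ deterministically while the $r-n_0$ censored observations are drawn with replacement and are therefore i.i.d.\ given $\mathcal{D}_n$. First I would establish consistency: using the compactness of $\boldsymbol{\Theta}$ and the moment bounds in Assumption \ref{ass5}, together with the Lipschitz control of the Hessian in Assumption \ref{ass6}, one shows that $n^{-1}l^*_c(\boldsymbol{\theta})$ obeys a conditional uniform law of large numbers with the same limit as the full-data objective $l_f(\boldsymbol{\theta})$, whose unique maximizer is $\hat{\boldsymbol{\theta}}$; hence $\tilde{\boldsymbol{\theta}}_c\to\hat{\boldsymbol{\theta}}$ in conditional probability. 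The identity driving unbiasedness is that a single censored draw satisfies $\mathbb{E}[\tilde{\pi}^{-1}l^c(\boldsymbol{\theta})\,|\,\mathcal{D}_n]=\sum_{j>n_0}l_j(\boldsymbol{\theta})$, so that $\mathbb{E}[l^*_c(\boldsymbol{\theta})\,|\,\mathcal{D}_n]=\sum_{i=1}^n l_i(\boldsymbol{\theta})=n\,l_f(\boldsymbol{\theta})$.

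The heart of the argument is the score decomposition at $\hat{\boldsymbol{\theta}}$. Writing $\dot{l}^*_c(\hat{\boldsymbol{\theta}})=\sum_{i=1}^{n_0}\dot{l}_i(\hat{\boldsymbol{\theta}})+\sum_{k=1}^{r-n_0}\boldsymbol{\eta}_k$, where $\boldsymbol{\eta}_k=[(r-n_0)\tilde{\pi}^c_k]^{-1}\dot{l}^c_k(\hat{\boldsymbol{\theta}})$ are i.i.d.\ given $\mathcal{D}_n$, I would use the full-data likelihood equation $\sum_{i=1}^n\dot{l}_i(\hat{\boldsymbol{\theta}})=0$ to note that the deterministic uncensored term equals $-\sum_{j>n_0}\dot{l}_j(\hat{\boldsymbol{\theta}})$, which is exactly $-(r-n_0)\mathbb{E}[\boldsymbol{\eta}_1\,|\,\mathcal{D}_n]$. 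Thus $\dot{l}^*_c(\hat{\boldsymbol{\theta}})=\sum_{k=1}^{r-n_0}(\boldsymbol{\eta}_k-\mathbb{E}[\boldsymbol{\eta}_k\,|\,\mathcal{D}_n])$ is a centered i.i.d.\ sum with conditional mean zero. A direct computation of the single-draw covariance gives $\mathrm{Var}(\boldsymbol{\eta}_1\,|\,\mathcal{D}_n)=(r-n_0)^{-2}[\sum_{j>n_0}\tilde{\pi}_j^{-1}\dot{l}_j(\hat{\boldsymbol{\theta}})^2-(\sum_{j>n_0}\dot{l}_j(\hat{\boldsymbol{\theta}}))^2]$, so that $\mathrm{Var}(\dot{l}^*_c(\hat{\boldsymbol{\theta}})\,|\,\mathcal{D}_n)=\tfrac{n^2}{r-n_0}\boldsymbol{\Lambda}_c$. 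The subtracted square term in $\boldsymbol{\Lambda}_c$ is precisely the signature of censoring subsampling: because the resampled draws are restricted to the censored subset their conditional mean is nonzero, and it is this mean that the deterministic uncensored block cancels.

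Next I would apply a multivariate Lindeberg--Feller central limit theorem, conditional on $\mathcal{D}_n$, to the triangular array $\{\boldsymbol{\eta}_k-\mathbb{E}[\boldsymbol{\eta}_k\,|\,\mathcal{D}_n]\}_{k=1}^{r-n_0}$. Normalizing by $\tfrac{\sqrt{r-n_0}}{n}$, the conditional covariance converges to $\boldsymbol{\Lambda}_c$, and the Lindeberg condition is verified through a Lyapunov bound: the fourth-moment control $n^{-1}\sum_i\Vert\dot{l}_i\Vert^4=O_p(1)$ of Assumption \ref{ass7} together with the boundedness of the inverse weights $\max_i(n\tilde{\pi}_i)^{-1}=O_p(1)$ of Assumption \ref{ass8} ensures the summands place no asymptotic mass on any single draw. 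This yields $\tfrac{\sqrt{r-n_0}}{n}\boldsymbol{\Lambda}_c^{-1/2}\dot{l}^*_c(\hat{\boldsymbol{\theta}})\xrightarrow{d}N(0,\boldsymbol{I})$.

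Finally, a first-order Taylor expansion of $\dot{l}^*_c(\tilde{\boldsymbol{\theta}}_c)=0$ about $\hat{\boldsymbol{\theta}}$ gives $\tilde{\boldsymbol{\theta}}_c-\hat{\boldsymbol{\theta}}=-[\ddot{l}^*_c(\bar{\boldsymbol{\theta}})]^{-1}\dot{l}^*_c(\hat{\boldsymbol{\theta}})$ for an intermediate $\bar{\boldsymbol{\theta}}$. Using the consistency of $\tilde{\boldsymbol{\theta}}_c$, the Lipschitz bound on the Hessian in Assumption \ref{ass6}, and the concentration of the resampled Hessian about its conditional mean $n^{-1}\sum_i\ddot{l}_i$, I would show $n^{-1}\ddot{l}^*_c(\bar{\boldsymbol{\theta}})\xrightarrow{p}\boldsymbol{M}_g$, which is negative definite. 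Writing $\sqrt{r-n_0}(\tilde{\boldsymbol{\theta}}_c-\hat{\boldsymbol{\theta}})=-[n^{-1}\ddot{l}^*_c(\bar{\boldsymbol{\theta}})]^{-1}\tfrac{\sqrt{r-n_0}}{n}\dot{l}^*_c(\hat{\boldsymbol{\theta}})$ and invoking Slutsky's theorem delivers $\sqrt{r-n_0}(\tilde{\boldsymbol{\theta}}_c-\hat{\boldsymbol{\theta}})\xrightarrow{d}N(0,\boldsymbol{M}_g^{-1}\boldsymbol{\Lambda}_c\boldsymbol{M}_g^{-1})=N(0,\boldsymbol{V}_c)$; since $n_0/r=o(1)$ we have $\sqrt{r-n_0}/\sqrt{r}\to1$ and may replace $\sqrt{r-n_0}$ by $\sqrt{r}$, yielding the stated result. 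I expect the main obstacle to be the conditional CLT step, both the cancellation bookkeeping that recasts the full score as a centered i.i.d.\ sum and the verification of the Lindeberg/Lyapunov condition uniformly in $\mathcal{D}_n$, whereas the Hessian-remainder control and the passage through consistency are comparatively standard.
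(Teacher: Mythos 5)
Your proposal is correct and follows essentially the same route as the paper: conditional consistency of $\tilde{\boldsymbol{\theta}}_c$ via a uniform law of large numbers, a conditional Lindeberg--Feller CLT for the weighted score at $\hat{\boldsymbol{\theta}}$ (whose centering rests on exactly the cancellation $\sum_{i=1}^{n}\dot{l}_i(\hat{\boldsymbol{\theta}})=0$ that you make explicit), concentration of the integrated Hessian to $\boldsymbol{M}_g$ under the Lipschitz condition, and Slutsky. The only cosmetic differences are that you normalize by $\sqrt{r-n_0}$ and pass to $\sqrt{r}$ at the end (the paper absorbs the factor $r/(r-n_0)\to 1$ directly into $\boldsymbol{\Lambda}_c$), and that the multivariate Taylor step should be written with the integral form $\int_0^1\ddot{l}^*_c(\hat{\boldsymbol{\theta}}+s(\tilde{\boldsymbol{\theta}}_c-\hat{\boldsymbol{\theta}}))\,ds$ rather than a single intermediate point $\bar{\boldsymbol{\theta}}$, as the paper does.
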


The forms of the covariance matrix $\boldsymbol{V}_c$ reflect the significance of choosing subsampling probabilities $\tilde{\pi}_i$ in this case. Note that $\boldsymbol{\Lambda}_c$ only relates to the censored data. We also derive the unconditional asymptotic distribution between $\tilde{\boldsymbol{\theta}}_c$ and true parameter $\boldsymbol{\theta}_0$, which is given in Appendix \ref{appendix:Theorems} (see Theorem \ref{thm:censor-true}).

Similar to conclusion of the general subsampling, Theorem \ref{thm:censor-all} and Theorem \ref{thm:censor-true} in Appendix \ref{appendix:Theorems} show that the censoring subsampling-based estimator can be a good approximation of the MLE of the full data or the true value, as long as $n,r\rightarrow \infty$ and $n_0/r \rightarrow 0$. Both general subsampling and censoring subsampling estimators have a convergence rate $O_p(r^{-1/2})$ relating to the size of subsample $r$.
For a given computational budget, one is restricted to choosing a subsample with a sample size of no more than $r_M$, since a large $r$ will increase computation time.
Hence, it is crucial to select a set of `good' samples (with respect to some specific evaluation metrics) with the sample size constraint. 

\section{Optimal Subsampling Strategies}
In this section, we propose two subsampling methods designed for the reliability of data with moderate or high censoring rates, corresponding to the general and censoring subsampling strategy. 
The proposed subsampling method is designed based on the theoretical properties of the previous section. 
Section \ref{subsec:ogss} presents an optimal subsampling strategy for a moderate censoring rate case and 
Section \ref{opt_censor} presents an optimal censoring subsampling strategy for high censoring rates, which gives dramatically outstanding performance through numerical experiments.

\subsection{Optimal General Subsampling Strategy}\label{subsec:ogss}
Adopting the uniform subsampling strategy, i.e., $\pi_1= \cdots = \pi_n = 1/n$, is the simplest approach. However, the information content of each data point can vary significantly. 
Intuitively, it is reasonable to set a higher probability to the data point with more information. Additionally, we are eager to determine the optimal non-uniform subsampling probabilities (with respect to some specific criteria) such that more accurate estimates can be obtained with a smaller size of the subsample. The A-optimality criterion \citep{kiefer1959optimum} is used to minimize the trace of the asymptotic covariance of $\tilde{\boldsymbol{\theta}} - \hat{\boldsymbol{\theta}}$. This involves minimizing the trace of $\boldsymbol{V}_g$. Notably, this approach is equivalent to minimizing the asymptotic expectation of mean square error (MSE) of the resultant estimator. The following theorem provides the optimal subsampling probabilities that minimize $tr(\boldsymbol{V}_g)$.

\begin{thm} \label{ssp:general A-opt}
    If the subsampling probabilities $\boldsymbol{\pi}$ are given by 
    \begin{equation}
    \pi_i^{A} \propto  \left\Vert\boldsymbol{M}_g^{-1} \dot{l}_i(\hat{\boldsymbol{\theta}})\right\Vert, i=1,2,...,n,
    \end{equation}
    then the $tr(\boldsymbol{V}_g)$ reaches its minimum. 
\end{thm}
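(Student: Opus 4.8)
The plan is to turn $tr(\boldsymbol{V}_g)$ into an explicit scalar function of the probabilities $\{\pi_i\}$ and then minimize that function over the probability simplex $\{\pi_i>0,\ \sum_{i=1}^n\pi_i=1\}$. The key observation is that, conditionally on $\mathcal{D}_n$, both $\boldsymbol{M}_g$ and each score vector $\dot{l}_i(\hat{\boldsymbol{\theta}})$ are fixed quantities; the subsampling probabilities enter $\boldsymbol{V}_g=\boldsymbol{M}_g^{-1}\boldsymbol{\Lambda}_g\boldsymbol{M}_g^{-1}$ only through the weights $1/\pi_i$ inside $\boldsymbol{\Lambda}_g=\frac{1}{n^2}\sum_{i=1}^n\pi_i^{-1}\dot{l}_i(\hat{\boldsymbol{\theta}})\dot{l}_i(\hat{\boldsymbol{\theta}})^\top$. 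Substituting this into $\boldsymbol{V}_g$ and using linearity of the trace, I would first write
\[
tr(\boldsymbol{V}_g)=\frac{1}{n^2}\sum_{i=1}^n\frac{1}{\pi_i}\,tr\!\left(\boldsymbol{M}_g^{-1}\dot{l}_i(\hat{\boldsymbol{\theta}})\dot{l}_i(\hat{\boldsymbol{\theta}})^\top\boldsymbol{M}_g^{-1}\right).
\]
Since $\boldsymbol{M}_g$ is an average of Hessians it is symmetric, so $\boldsymbol{M}_g^{-1}$ is symmetric, and the cyclic property of the trace collapses each summand to a squared norm, yielding
\[
tr(\boldsymbol{V}_g)=\frac{1}{n^2}\sum_{i=1}^n\frac{\left\Vert\boldsymbol{M}_g^{-1}\dot{l}_i(\hat{\boldsymbol{\theta}})\right\Vert^2}{\pi_i}.
\]

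With the objective in this separable form, the next step is a standard constrained minimization. I would apply the Cauchy--Schwarz inequality with the pairing $u_i=\Vert\boldsymbol{M}_g^{-1}\dot{l}_i(\hat{\boldsymbol{\theta}})\Vert/\sqrt{\pi_i}$ and $v_i=\sqrt{\pi_i}$, giving
\[
\left(\sum_{i=1}^n\left\Vert\boldsymbol{M}_g^{-1}\dot{l}_i(\hat{\boldsymbol{\theta}})\right\Vert\right)^2\le\left(\sum_{i=1}^n\frac{\left\Vert\boldsymbol{M}_g^{-1}\dot{l}_i(\hat{\boldsymbol{\theta}})\right\Vert^2}{\pi_i}\right)\left(\sum_{i=1}^n\pi_i\right).
\]
Because $\sum_{i=1}^n\pi_i=1$, the right factor is $1$, so the left-hand side is a constant lower bound for $n^2\,tr(\boldsymbol{V}_g)$, and equality holds if and only if $u_i\propto v_i$, i.e. $\pi_i\propto\Vert\boldsymbol{M}_g^{-1}\dot{l}_i(\hat{\boldsymbol{\theta}})\Vert$. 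This is exactly the claimed $\pi_i^A$. Equivalently, one may use Lagrange multipliers: the stationarity condition $-\Vert\boldsymbol{M}_g^{-1}\dot{l}_i(\hat{\boldsymbol{\theta}})\Vert^2/\pi_i^2+\lambda=0$ forces $\pi_i\propto\Vert\boldsymbol{M}_g^{-1}\dot{l}_i(\hat{\boldsymbol{\theta}})\Vert$, and convexity of $\pi_i\mapsto a_i^2/\pi_i$ on $\pi_i>0$ guarantees this interior stationary point is the global minimizer.

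In truth this argument is a routine convex optimization and I do not expect a genuine obstacle; the only points requiring care are structural rather than difficult. First, I must justify the trace reduction by explicitly invoking the symmetry of $\boldsymbol{M}_g^{-1}$, since this is what turns $tr(\boldsymbol{M}_g^{-1}\dot{l}_i\dot{l}_i^\top\boldsymbol{M}_g^{-1})$ into $\Vert\boldsymbol{M}_g^{-1}\dot{l}_i(\hat{\boldsymbol{\theta}})\Vert^2$. Second, I should be clear that the whole argument is conditional on $\mathcal{D}_n$, so that $\boldsymbol{M}_g$ and the scores are treated as fixed constants and the minimization is genuinely over $\{\pi_i\}$ alone; the minimizer is automatically interior ($\pi_i^A>0$ whenever the corresponding score is nonzero), so the open-simplex constraint $\pi_i>0$ is not binding and no boundary analysis is needed. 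The $\propto$ in the statement absorbs the normalizing constant $\big(\sum_{j}\Vert\boldsymbol{M}_g^{-1}\dot{l}_j(\hat{\boldsymbol{\theta}})\Vert\big)^{-1}$ that makes the probabilities sum to one.
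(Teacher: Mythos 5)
Your proposal is correct and follows essentially the same route as the paper: reduce $tr(\boldsymbol{V}_g)$ to the separable form $\frac{1}{n^2}\sum_{i}\pi_i^{-1}\Vert\boldsymbol{M}_g^{-1}\dot{l}_i(\hat{\boldsymbol{\theta}})\Vert^2$ via linearity and the cyclic property of the trace, then apply the Cauchy--Schwarz inequality against $\sum_i\pi_i=1$, with equality exactly when $\pi_i\propto\Vert\boldsymbol{M}_g^{-1}\dot{l}_i(\hat{\boldsymbol{\theta}})\Vert$. The added remarks on the symmetry of $\boldsymbol{M}_g^{-1}$, conditioning on $\mathcal{D}_n$, and the interior-point/convexity check are sound but not substantively different from the paper's argument.
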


The calculation of $\boldsymbol{M}_g$ requires a second derivative of $l$, which is a heavy computation burden given that the log-likelihood $l$ is a summation of $n$ term.
Besides, $\boldsymbol{M}_g$ is a constant matrix which does not consist of subsampling probabilities $\{\pi_i\}_{i=1,2,...,n}$.
Thus, the alternative subsampling probabilities can be chosen by minimizing $tr(\boldsymbol{\Lambda}_g) = tr(\boldsymbol{M_g}\boldsymbol{V}_g \boldsymbol{M_g}) = tr(\boldsymbol{V}_g\boldsymbol{M_g}^2)$, corresponding to the L-optimality criterion \citep{Atkinson_2007}, which minimizes the trace of product of the asymptotic covariance matrix and a constant matrix. Compared to the former strategy, the latter is more computationally efficient and easier to implement. 
The specific subsampling probabilities are demonstrated below.

\begin{thm} \label{ssp:general L-opt}
    If the subsampling probabilities $\boldsymbol{\pi}$ are given by 
    \begin{equation}
        \pi_i^{RDS}(\hat{\boldsymbol{\theta}}) \propto \Vert  \dot{l}_i(\hat{\boldsymbol{\theta}}) \Vert, i=1,2,...,n,
    \end{equation}
    then the $tr(\boldsymbol{\Lambda}_g)$, reaches its minimum.
\end{thm}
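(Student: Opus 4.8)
The plan is to prove Theorem~\ref{ssp:general L-opt} by the same constrained-optimization argument used for the A-optimality result, but applied to the simpler objective $tr(\boldsymbol{\Lambda}_g)$ in place of $tr(\boldsymbol{V}_g)$. Recall from Theorem~\ref{thm:genenral-mle} that $\boldsymbol{\Lambda}_g = \frac{1}{n^2}\sum_{i=1}^n \frac{1}{\pi_i}\dot{l}_i(\hat{\boldsymbol{\theta}})^2$, where the squaring is the outer-product convention $A^2 = AA^\top$. Taking the trace and using the linearity of the trace together with $tr(\dot{l}_i \dot{l}_i^\top) = \dot{l}_i^\top \dot{l}_i = \Vert \dot{l}_i(\hat{\boldsymbol{\theta}})\Vert^2$, I would first reduce the objective to the scalar form
\begin{equation}
    tr(\boldsymbol{\Lambda}_g) = \frac{1}{n^2}\sum_{i=1}^n \frac{1}{\pi_i}\Vert \dot{l}_i(\hat{\boldsymbol{\theta}})\Vert^2.
\end{equation}
This is the key simplification: unlike $tr(\boldsymbol{V}_g)$, which carries the extra factor $\boldsymbol{M}_g^{-1}$ inside the norm, here the matrix $\boldsymbol{M}_g$ has dropped out entirely, which is precisely the computational advantage the paper advertises.

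Next I would set up the minimization of $tr(\boldsymbol{\Lambda}_g)$ over the probability simplex, namely subject to the constraint $\sum_{i=1}^n \pi_i = 1$ with $\pi_i > 0$. The cleanest route is the Cauchy--Schwarz inequality: for any positive weights $\pi_i$ summing to one,
\begin{equation}
    \sum_{i=1}^n \frac{\Vert \dot{l}_i(\hat{\boldsymbol{\theta}})\Vert^2}{\pi_i} = \left(\sum_{i=1}^n \pi_i\right)\left(\sum_{i=1}^n \frac{\Vert \dot{l}_i(\hat{\boldsymbol{\theta}})\Vert^2}{\pi_i}\right) \geq \left(\sum_{i=1}^n \Vert \dot{l}_i(\hat{\boldsymbol{\theta}})\Vert\right)^2,
\end{equation}
where I have inserted the factor $\sum_i \pi_i = 1$ and then applied Cauchy--Schwarz to the vectors with entries $\sqrt{\pi_i}$ and $\Vert \dot{l}_i\Vert/\sqrt{\pi_i}$. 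The right-hand side is a constant independent of the $\pi_i$, so it is a genuine lower bound on the objective. Equality in Cauchy--Schwarz holds exactly when the two vectors are proportional, i.e.\ when $\sqrt{\pi_i} \propto \Vert \dot{l}_i(\hat{\boldsymbol{\theta}})\Vert/\sqrt{\pi_i}$, which rearranges to $\pi_i \propto \Vert \dot{l}_i(\hat{\boldsymbol{\theta}})\Vert$. Normalizing to satisfy $\sum_i \pi_i = 1$ then yields exactly the stated formula $\pi_i^{RDS}(\hat{\boldsymbol{\theta}}) \propto \Vert \dot{l}_i(\hat{\boldsymbol{\theta}})\Vert$, and the attainment of equality confirms this is the minimizer rather than merely a stationary point.

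As an equivalent alternative I could use Lagrange multipliers, forming $\frac{1}{n^2}\sum_i \pi_i^{-1}\Vert\dot{l}_i\Vert^2 + \lambda(\sum_i \pi_i - 1)$, differentiating in $\pi_i$ to get $-\frac{1}{n^2}\pi_i^{-2}\Vert\dot{l}_i\Vert^2 + \lambda = 0$, hence $\pi_i \propto \Vert\dot{l}_i\Vert$, and then checking second-order conditions (or convexity of $\pi_i \mapsto \pi_i^{-1}$ on the positive orthant) to certify a minimum. I expect no genuine obstacle here, since the objective is strictly convex in $(\pi_1,\dots,\pi_n)$ on the simplex; the only point requiring care is the reduction from the matrix trace to the scalar sum of squared norms, where I must invoke the paper's squaring convention $A^2 = AA^\top$ correctly and confirm that $\boldsymbol{M}_g$ plays no role once the trace is taken. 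The Cauchy--Schwarz approach is preferable for the write-up because it simultaneously furnishes the lower bound, the minimizer, and the equality characterization in one stroke, avoiding any separate verification of second-order optimality.
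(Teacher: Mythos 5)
Your proposal is correct and matches the paper's own proof essentially verbatim: both reduce $tr(\boldsymbol{\Lambda}_g)$ to $\sum_i \pi_i^{-1}\Vert\dot{l}_i(\hat{\boldsymbol{\theta}})\Vert^2$ (up to a constant factor), insert $\sum_i\pi_i=1$, and apply Cauchy--Schwarz with equality iff $\pi_i\propto\Vert\dot{l}_i(\hat{\boldsymbol{\theta}})\Vert$. The Lagrange-multiplier alternative you mention is not needed, and the paper does not use it.
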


\begin{remark}
Here are some intuitions on the choice of L-optimality criterion. 
      Loewner order \citep{Pukelsheim_Optimal_design} is the partial order defined by the convex cone of positive semi-definite matrices. Let $A$ and $B$ be two Hermitian matrices of order $n$, and define $A \geq B$ if $A-B$ is positive semi-definite. For any subsampling probabilities $\boldsymbol{\pi}_0$,  $\boldsymbol{V}_g(\boldsymbol{\pi}_0)$ and $\boldsymbol{\Lambda}_g(\boldsymbol{\pi}_0)$ are used if we replace $\boldsymbol{\pi}$ by $\boldsymbol{\pi}_0$ in $\boldsymbol{V}_g$ and $\boldsymbol{\Lambda}_g$.
      In this sense, for two different choices of subsampling probabilities $\boldsymbol{\pi}^a$ and $\boldsymbol{\pi}^b$, $\boldsymbol{V}_g(\boldsymbol{\pi}^a)\geq\boldsymbol{V}_g(\boldsymbol{\pi}^b) \iff \boldsymbol{\Lambda}_g(\boldsymbol{\pi}^a)\geq\boldsymbol{\Lambda}_g(\boldsymbol{\pi}^b)$. 
      This fact also explains the rationality of minimizing $tr(\boldsymbol{\Lambda}_g)$ instead of $tr(\boldsymbol{V}_g)$, because they are in the same order in the sense of Loewner order.
\end{remark}

Although we derive the form of subsampling probabilities, it contains full-data maximum likelihood estimation $\hat{\boldsymbol{\theta}}$, which cannot be obtained in practice. 
To address this, a three-step algorithm to approximate the subsampling probabilities is proposed. Firstly, a pilot estimation $\tilde{\boldsymbol{\theta}}_0$ is obtained from a subsample of $r_0$ ($r_0 \ll r$). Secondly, substituting $\tilde{\boldsymbol{\theta}}_0$ to $\hat{\boldsymbol{\theta}}$ to obtain the approximately optimal subsampling probabilities. Finally, use the subsampling probabilities obtained in the second step to choose a subsample and obtain the final estimation. This subsampling policy is named as RDS. 
The algorithm is presented below.

The constant $\xi$ ensures $n\pi_{\xi,i}^{RDS}\geq\xi,\quad i=1,2,...,n,$ so that $\max_{i=1,...,n}(n\pi_{\xi,i}^{RDS})^{-1} = O_{p|\mathcal{D}_n,\tilde{\boldsymbol{\theta}}_p}(1)$. This can prevent $(n\pi_{\xi,i}^{RDS})^{-1}$ from getting too large to make the estimation unstable. 
The asymptotic property of $\tilde{\boldsymbol{\theta}}_{RDS}$ towards the full-data MLE $\hat{\boldsymbol{\theta}}$ is derived as follows.
\begin{prop} \label{thm:RDS-mle}
    Assume the pilot estimation $\tilde{\boldsymbol{\theta}}_p$ is obtained from Algorithm \ref{ass1}.
    Under assumptions \ref{ass5}-\ref{ass7}, as $r_0, r$ and $n\rightarrow \infty$, conditionally on $\mathcal{D}_n$ and $ \tilde{\boldsymbol{\theta}}_p$, we have:
    \begin{equation}
        \sqrt{r}\boldsymbol{V}_{RDS}^{-\frac{1}{2}}(\tilde{\boldsymbol{\theta}}_{RDS} - \hat{\boldsymbol{\theta}}) \xrightarrow{d} N(0,\boldsymbol{I}),
    \end{equation}
where $\boldsymbol{V}_{RDS}=\boldsymbol{M}_{g}^{-1}\boldsymbol{\Lambda}_{RDS}\boldsymbol{M}_{g}^{-1}$, and $\boldsymbol{\Lambda}_{RDS}=\frac{1}{n^2}\sum_{i=1}^n \frac{1}{\pi_{\xi,i}^{RDS}(\hat{\boldsymbol{\theta}})}\dot{l}_i(\hat{\boldsymbol{\theta}})^2$.
\end{prop}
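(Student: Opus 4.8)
The plan is to recognise $\tilde{\boldsymbol{\theta}}_{RDS}$ as an instance of the general subsampling estimator of Theorem \ref{thm:genenral-mle}, whose subsampling probabilities are merely data-driven through the pilot $\tilde{\boldsymbol{\theta}}_p$, and then to reduce the statement to that theorem by an extra conditioning argument. First I would condition on both $\mathcal{D}_n$ and $\tilde{\boldsymbol{\theta}}_p$; given this $\sigma$-field the probabilities $\pi_{\xi,i}^{RDS}(\tilde{\boldsymbol{\theta}}_p)$ are frozen, so the third-stage subsample of size $r$ is drawn from the full data with fixed probabilities and $\tilde{\boldsymbol{\theta}}_{RDS}$ is exactly a weighted subsampling estimator of the type analysed in Theorem \ref{thm:genenral-mle}.

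The key intermediate step is to verify that, conditionally on $\mathcal{D}_n$ and $\tilde{\boldsymbol{\theta}}_p$, the frozen probabilities satisfy the hypotheses of Theorem \ref{thm:genenral-mle}, i.e.\ that Assumptions \ref{ass5}--\ref{ass7} together with the flooring device imply the data-dependent Assumptions \ref{ass1}--\ref{ass4} in probability. Here the constant $\xi$ is essential: since $n\pi_{\xi,i}^{RDS}\geq\xi$, the inverse weights obey $(n\pi_{\xi,i}^{RDS})^{-1}\leq \xi^{-1}$, so each weighted sum collapses to $\xi^{-1}$ times an ordinary average, e.g.\ $\frac{1}{n^2}\sum_i \tfrac{1}{\pi_i}\|\dot{l}_i\|^2 \leq \xi^{-1}\cdot\frac{1}{n}\sum_i\|\dot{l}_i\|^2$, and similarly for $\|\ddot{l}_i\|^2$ and for the $(2+\delta)$-moment of Assumption \ref{ass4} with $\delta=2$. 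These averages are $O_p(1)$ by the finite-moment conditions in Assumptions \ref{ass5}--\ref{ass7} and a law of large numbers, while the fourth-moment bound of Assumption \ref{ass7} supplies Assumption \ref{ass4}; negative definiteness of $\boldsymbol{M}_g$ follows from Assumption \ref{ass6}, and the Lipschitz control of $\ddot{l}_{(jk)}$ in Assumption \ref{ass6} serves as the surrogate for the third-derivative bound of Assumption \ref{ass3} by bounding the Taylor remainder. Invoking Theorem \ref{thm:genenral-mle} on this conditional probability space then yields the conditional central limit theorem, with limiting variance built from the weights $\pi_{\xi,i}^{RDS}(\tilde{\boldsymbol{\theta}}_p)$ and scores $\dot{l}_i(\hat{\boldsymbol{\theta}})$.

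It remains to pass from this pilot-dependent variance to the stated $\boldsymbol{V}_{RDS}$, in which only the weight denominators differ (both carry the score $\dot{l}_i(\hat{\boldsymbol{\theta}})$). I would first note that the pilot is itself a uniform subsampling estimator, so $\tilde{\boldsymbol{\theta}}_p\xrightarrow{p}\hat{\boldsymbol{\theta}}$ as $r_0\to\infty$ by the consistency content of Theorem \ref{thm:genenral-mle}. Because $\boldsymbol{\theta}\mapsto\pi_{\xi,i}^{RDS}(\boldsymbol{\theta})\propto\|\dot{l}_i(\boldsymbol{\theta})\|$ (floored) is continuous, and the uniform domination from the $\xi$-flooring plus the fourth-moment bound of Assumption \ref{ass7} provides the required integrability, a continuous-mapping argument gives $\boldsymbol{\Lambda}_{RDS}(\tilde{\boldsymbol{\theta}}_p)\xrightarrow{p}\boldsymbol{\Lambda}_{RDS}(\hat{\boldsymbol{\theta}})$ and hence $\boldsymbol{V}_{RDS}(\tilde{\boldsymbol{\theta}}_p)\xrightarrow{p}\boldsymbol{V}_{RDS}(\hat{\boldsymbol{\theta}})$. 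A Slutsky argument then substitutes the random standardising matrix by its deterministic limit and delivers the claimed convergence.

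The hard part will be the joint control of the two sources of randomness, the pilot $\tilde{\boldsymbol{\theta}}_p$ and the second-stage subsample, so that the conditional CLT may legitimately be applied with probabilities that are themselves random. Concretely, the delicate point is to establish the moment bounds not merely at $\hat{\boldsymbol{\theta}}$ but uniformly over a shrinking neighbourhood that contains $\tilde{\boldsymbol{\theta}}_p$ with probability tending to one, and to argue that the conditional distribution converges in the conditional-in-probability sense used throughout the paper rather than only for a fixed weight sequence. Once this uniformity is secured, the remaining steps are a routine combination of the continuous mapping theorem, Slutsky's theorem, and the already-assumed moment conditions.
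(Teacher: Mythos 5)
Your proposal is correct and follows essentially the same route as the paper: a conditional CLT for the weighted score with the pilot-frozen probabilities (where the $\xi$-floor bounds the inverse weights and Assumptions \ref{ass5}--\ref{ass7} supply the moment and Lipschitz controls, the latter standing in for the third-derivative bound exactly as you describe), followed by replacing the pilot-based asymptotic variance with $\boldsymbol{\Lambda}_{RDS}$ via consistency of $\tilde{\boldsymbol{\theta}}_p$ and Slutsky. The only difference is presentational: the paper re-derives the two intermediate lemmas directly under Assumptions \ref{ass5}--\ref{ass7} rather than formally re-verifying Assumptions \ref{ass1}--\ref{ass4}, and it makes your final continuity step quantitative by showing $\Vert\boldsymbol{\Lambda}_{RDS}^0-\boldsymbol{\Lambda}_{RDS}\Vert=O_p(\Vert\tilde{\boldsymbol{\theta}}_p-\hat{\boldsymbol{\theta}}\Vert)$ through an explicit Taylor expansion and H\"older bound on $\vert\pi_i^{RDS}(\tilde{\boldsymbol{\theta}}_p)-\pi_i^{RDS}(\hat{\boldsymbol{\theta}})\vert$.
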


 \begin{algorithm}[H]
    \caption{Three-step algorithm for RDS} \label{algorithm1}
    \justifying{
        \begin{description}
        \item[Pilot estimation]Use uniform subsampling to obtain subsample of $r_0$ ($r_0 \ll r$). Obtain a pilot estimate $\tilde{\boldsymbol{\theta}}_{p}$ by
        calculating the pseudo maximum likelihood estimate of this subsample:
        $\tilde{\boldsymbol{\theta}}_{p} = \mathop{\mathrm{argmax}}\limits_{\boldsymbol{\theta}} \frac{1}{r_0}\sum_{i=1}^{r_0} l_{i}^p(\boldsymbol{\theta})$, where $l_{i}^p, i = 1,...,r_0$ is likelihood of the chosen data.
        \item[Approximately optimal subsampling probabilities]
        Replace $\hat{\boldsymbol{\theta}}$ with $\tilde{\boldsymbol{\theta}}_p$ in $\pi_i^{RDS}(\cdot)$. Then, let $\pi_{\xi,i}^{RDS}(\tilde{\boldsymbol{\theta}}_p)= (1-\xi)\pi_i^{RDS} (\tilde{\boldsymbol{\theta}}_p)+\frac{\xi}{n}$ to obtain approximately optimal subsampling probabilities for $i=1,2...,n$, where $\xi \in (0,1)$ is a constant.   
        \item[RDS-based estimation] Use $\pi_{\xi,i}^{RDS}(\tilde{\boldsymbol{\theta}}_p)$ to obtain a subsample of size $r$, with the corresponding subsampling probabilities $\pi_{\xi,i}^{RDS*}(\tilde{\boldsymbol{\theta}}_p)$ and likelihood $l_{i}^*(\boldsymbol{\theta})$ of the subsample, $i=1,...,r$. Calculating the estimate $\tilde{\boldsymbol{\theta}}_{RDS}$ by solving the weighted pseudo maximum likelihood of this subsample: $\tilde{\boldsymbol{\theta}}_{RDS} = \mathop{\mathrm{argmax}}\limits_{\boldsymbol{\theta}} \frac{1}{r}\sum_{i=1}^{r} \frac{1}{\pi_{\xi,i}^{RDS*}(\tilde{\boldsymbol{\theta}}_p)}l_{i}^*(\boldsymbol{\theta}). $
    \end{description}
    }    
\end{algorithm}

\begin{remark}\label{remark:lambda_g}
Let $\boldsymbol{\Lambda}_{g}^{opt}=\frac{1}{n^2}\sum_{i=1}^n \frac{1}{\pi_i^{RDS}(\hat{\boldsymbol{\theta}})}\dot{l}_i(\hat{\boldsymbol{\theta}})^2$ be the asymptotic covariance matrix with optimal subsampling probabilities, which minimizes the trace of $\boldsymbol{\Lambda}_{g}$.
Although $\boldsymbol{\Lambda}_{RDS}$ is different from the  $\boldsymbol{\Lambda}_{g}^{opt}$, we have
$tr(\boldsymbol{\Lambda}_{g}^{opt}) \leq tr(\boldsymbol{\Lambda}_{RDS}) \leq \frac{tr(\boldsymbol{\Lambda}_{g}^{opt})}{1-\xi}$.
Then, $tr(\boldsymbol{\Lambda}_{g}^{opt})$ and $tr(\boldsymbol{\Lambda}_{RDS})$ can be close enough if $\xi \rightarrow 0$. Therefore, the estimation $\tilde{\boldsymbol{\theta}}_{RDS}$ is reasonable because the trace of its asymptotic variance matrix is approximately optimal.
As we have derived general subsampling strategies, we wonder how censoring affects the optimal subsampling probabilities. In the simulations of Appendix \ref{extra_sim}, we will show that higher subsampling probabilities are assigned to complete data, which fits our intuition.

\end{remark}

\subsection{Optimal Censoring Subsampling Strategy} \label{opt_censor}
As discussed in Sections \ref{subsec:cse} and \ref{subsec:Thm of CS}, we proposed a new subsampling strategy for the high censoring rate case.
Similar to the General Subsampling situation, we obtain the optimal subsampling probabilities through L-optimality by minimizing the trace of $\boldsymbol{\Lambda}_{c}$. 
The optimal censoring sampling function is given below. 

\begin{thm} \label{ssp:cen-sub}
If the censoring subsampling probabilities are given by
\begin{equation}
\begin{aligned}
    \tilde{\pi}_i^{RDCS}(\hat{\boldsymbol{\theta}}) \propto \left\Vert \frac{\partial log(1-F(t_i^c,\hat{\boldsymbol{\theta}}))}{\partial \boldsymbol{\theta}} - \frac{\partial log(1-F(t_{il}^c,\hat{\boldsymbol{\theta}}))}{\partial \boldsymbol{\theta}}\right\Vert, \\
    i=n_0+1,...,n. 
\end{aligned}
\end{equation}
Then the $tr(\boldsymbol{\Lambda}_{c})$, reaches its minimum.
\end{thm}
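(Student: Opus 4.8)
The plan is to follow the same strategy as the proof of Theorem~\ref{ssp:general L-opt}: reduce the matrix-trace minimization to a scalar constrained optimization and solve it by the Cauchy--Schwarz inequality. First I would note that the subtracted term of $\boldsymbol{\Lambda}_c$, namely $\bigl[\frac{1}{n}\sum_{i=n_0+1}^n \dot{l}_i(\hat{\boldsymbol{\theta}})\bigr]^2$, is a fixed outer product that does not depend on the censoring subsampling probabilities $\{\tilde{\pi}_i\}_{i=n_0+1}^n$. Consequently, minimizing $tr(\boldsymbol{\Lambda}_c)$ over these probabilities is equivalent to minimizing the trace of the first term alone.

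Second, using the convention $A^2 = AA^\top$ and the identity $tr\bigl(\dot{l}_i(\hat{\boldsymbol{\theta}})\dot{l}_i(\hat{\boldsymbol{\theta}})^\top\bigr) = \Vert \dot{l}_i(\hat{\boldsymbol{\theta}})\Vert^2$, I would rewrite the objective as
\[
 tr\!\left(\frac{1}{n^2}\sum_{i=n_0+1}^n \frac{1}{\tilde{\pi}_i}\dot{l}_i(\hat{\boldsymbol{\theta}})^2\right) = \frac{1}{n^2}\sum_{i=n_0+1}^n \frac{\Vert \dot{l}_i(\hat{\boldsymbol{\theta}})\Vert^2}{\tilde{\pi}_i}.
\]
Writing $a_i = \Vert \dot{l}_i(\hat{\boldsymbol{\theta}})\Vert$, the task is then to minimize $\sum_{i=n_0+1}^n a_i^2/\tilde{\pi}_i$ subject to the acceptance-probability constraint $\sum_{i=n_0+1}^n \tilde{\pi}_i = 1$ with $\tilde{\pi}_i > 0$.

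Third, I would apply the Cauchy--Schwarz inequality with $u_i = \sqrt{\tilde{\pi}_i}$ and $v_i = a_i/\sqrt{\tilde{\pi}_i}$, giving
\[
 \left(\sum_{i=n_0+1}^n a_i\right)^2 = \left(\sum_{i=n_0+1}^n u_i v_i\right)^2 \le \left(\sum_{i=n_0+1}^n \tilde{\pi}_i\right)\left(\sum_{i=n_0+1}^n \frac{a_i^2}{\tilde{\pi}_i}\right) = \sum_{i=n_0+1}^n \frac{a_i^2}{\tilde{\pi}_i},
\]
with equality if and only if $u_i \propto v_i$, i.e., $\tilde{\pi}_i \propto a_i = \Vert \dot{l}_i(\hat{\boldsymbol{\theta}})\Vert$. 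Substituting the explicit form of $\dot{l}_i(\hat{\boldsymbol{\theta}})$ for censored observations yields exactly $\tilde{\pi}_i^{RDCS}(\hat{\boldsymbol{\theta}})$, the claimed minimizer. Equivalently, a Lagrange-multiplier computation on the same objective returns the same optimum.

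The main obstacle is essentially bookkeeping rather than analysis: one must verify carefully that the subtracted term is genuinely constant in $\tilde{\pi}_i$, and that the index set runs only over the censored observations $i>n_0$ --- the complete data are retained with weight one and therefore play no role in the optimization. Once the problem is correctly reduced to the scalar form above, the Cauchy--Schwarz step is routine.
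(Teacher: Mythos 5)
Your proposal is correct and follows essentially the same route as the paper's own proof: discard the probability-independent outer-product term, reduce $tr(\boldsymbol{\Lambda}_c)$ to $\frac{1}{n^2}\sum_{i=n_0+1}^n \Vert \dot{l}_i(\hat{\boldsymbol{\theta}})\Vert^2/\tilde{\pi}_i$, and apply Cauchy--Schwarz against the constraint $\sum_{i=n_0+1}^n \tilde{\pi}_i = 1$ to get $\tilde{\pi}_i \propto \Vert \dot{l}_i(\hat{\boldsymbol{\theta}})\Vert$. No gaps; the identification of $\dot{l}_i(\hat{\boldsymbol{\theta}})$ with the stated difference of log-survival derivatives for censored observations is exactly how the paper concludes.
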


We call this strategy RDCS. As the full-data MLE is contained in the form of censoring subsampling probabilities, a three-step algorithm is proposed as well.

\begin{algorithm}[H]
    \caption{Three-step algorithm for RDCS}
    \label{algorithm2} 
    \justifying{ \begin{description}
        \item[Pilot estimation] Use uniform subsampling to obtain subsample of $r_0$ ($r_0 \ll n$). Obtain a pilot estimation $\tilde{\boldsymbol{\theta}}_{cp}$ by
        calculating the pseudo maximum likelihood estimate of this subsample: $\tilde{\boldsymbol{\theta}}_{cp} = \mathop{\mathrm{argmax}}\limits_{\boldsymbol{\theta}} \frac{1}{r_0}\sum_{i=1}^{r_0} l_{ci}^p(\boldsymbol{\theta})$, where $l_{ci}^p$, $i = 1,2,...,r_0$ is likelihood of the chosen data.
        \item[Approximately optimal subsampling probabilities]
        Replace $\hat{\boldsymbol{\theta}}$ with $\tilde{\boldsymbol{\theta}}_{cp}$ in  $\tilde{\pi}_i^{RDCS}$. Then, let $\tilde{\pi}_{\xi,i}^{RDCS}(\tilde{\boldsymbol{\theta}}_{cp})= (1-\xi_c)\tilde{\pi}_i^{RDCS} (\tilde{\boldsymbol{\theta}}_{cp})+\frac{\xi_c}{n-n_0}$ to obtain approximately optimal subsampling probabilities, where $\xi_c \in (0,1)$ is a constant.
        The corresponding weights are $\omega_{\xi,i}^{RDCS}(\tilde{\boldsymbol{\theta}}_{cp}) = \frac{1}{(1-C_i)+C_i (r-n_0) \tilde{\pi}_{\xi,i}^{RDCS}(\tilde{\boldsymbol{\theta}}_{cp})},$ for $i=1,2,...,n$. 
        \item[RDCS-based estimation]Use the Censoring Subsampling with subsampling probabilities $\tilde{\pi}_{\xi,i}^{RDCS}(\tilde{\boldsymbol{\theta}}_{cp})$ to obtain a subsample of size $r$. $\omega_{\xi,i}^{c,RDCS}(\tilde{\boldsymbol{\theta}}_{cp})$, $\tilde{\pi}_{\xi,i}^{c,RDCS}(\tilde{\boldsymbol{\theta}}_{cp})$ and $l_{i}^c(\boldsymbol{\theta})$ are the corresponding weights, subsampling probabilities and likelihood of the subsample. Let the first $n_0$ data to be the uncensored data and calculate the estimation $\tilde{\boldsymbol{\theta}}_{RDCS}$ by solving the weighted pseudo maximum likelihood equation: $\tilde{\boldsymbol{\theta}}_{RDCS} = \mathop{\mathrm{argmax}}\limits_{\boldsymbol{\theta}} \frac{1}{r}\sum_{i=1}^r\omega_{\xi,i}^{c,RDCS} (\tilde{\boldsymbol{\theta}}_{cp})l_{i}^c(\boldsymbol{\theta}).$
         \end{description}}
\end{algorithm}

Similar to $\xi$, the constant $\xi_c$ makes sure $(n\tilde{\pi}_{\xi,i}^{RDCS})^{-1}$ are bounded for $i=n_0+1,...,n$, which adds robustness to the estimation. 
The estimator $\tilde{\boldsymbol{\theta}}_{RDCS}$ from the aforementioned algorithm retains the asymptotic property, as demonstrated below.

\begin{prop}\label{thm:RDCS-mle}
   Assume the pilot estimation $\tilde{\boldsymbol{\theta}}_{cp}$ is obtained from Algorithm \ref{algorithm2}.
   Under assumptions \ref{ass5}-\ref{ass7}, as $r_0, r,n\rightarrow \infty$ and $n_0/r \rightarrow 0$, conditionally on $\mathcal{D}_n$ and $\tilde{\boldsymbol{\theta}}_{pc}$, we have:
    \begin{equation}
        \sqrt{r}\boldsymbol{V}_{RDCS}^{-\frac{1}{2}}(\tilde{\boldsymbol{\theta}}_{RDCS} - \hat{\boldsymbol{\theta}}) \xrightarrow{d}N(0,\boldsymbol{I}),
    \end{equation}
    \justifying{where $\boldsymbol{V}_{RDCS} = \boldsymbol{M}_g^{-1} \boldsymbol{\Lambda}_{RDCS} \boldsymbol{M}_g^{-1}$, and $\boldsymbol{\Lambda}_{RDCS} = \frac{1}{n^2} \sum_{i=n_0+1}^n \frac{1}{\tilde{\pi}_{\xi,i}^{RDCS}(\hat{\boldsymbol{\theta}})}\dot{l}_i(\hat{\boldsymbol{\theta}})^2  -\\ \left[ \frac{1}{n}\sum_{i=n_0+1}^n \dot{l}_i(\hat{\boldsymbol{\theta}}) \right]^2$,  $\dot{l}_i(\hat{\boldsymbol{\theta}}) = \frac{\partial \log(1-F(t_i,\hat{\boldsymbol{\theta}}))}{\partial \boldsymbol{\theta}} -  \frac{\partial \log(1-F(t_{il},\hat{\boldsymbol{\theta}}))}{\partial \boldsymbol{\theta}}$.}
\end{prop}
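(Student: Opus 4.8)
The plan is to read Proposition~\ref{thm:RDCS-mle} as the RDCS specialization of Theorem~\ref{thm:censor-all}, exactly as Proposition~\ref{thm:RDS-mle} specializes Theorem~\ref{thm:genenral-mle}. Two features of Algorithm~\ref{algorithm2} separate the two statements: the subsampling weights are constructed from the random pilot $\tilde{\boldsymbol{\theta}}_{cp}$ rather than being prescribed, and they are mixed with the uniform distribution through the constant $\xi_c$. My first move is to condition on both $\mathcal{D}_n$ and $\tilde{\boldsymbol{\theta}}_{cp}$; once $\tilde{\boldsymbol{\theta}}_{cp}$ is frozen, the weights $\tilde{\pi}_{\xi,i}^{RDCS}(\tilde{\boldsymbol{\theta}}_{cp})$ become deterministic functions of the data and the censoring draw is precisely the one analysed in Theorem~\ref{thm:censor-all}.

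Next I would verify the hypotheses of Theorem~\ref{thm:censor-all}. The essential observation is that the $\xi_c$-mixing makes Assumption~\ref{ass8} hold automatically, which is why the proposition only lists Assumptions~\ref{ass5}--\ref{ass7}: for every censored index,
\begin{equation}
\tilde{\pi}_{\xi,i}^{RDCS}(\tilde{\boldsymbol{\theta}}_{cp})=(1-\xi_c)\tilde{\pi}_i^{RDCS}(\tilde{\boldsymbol{\theta}}_{cp})+\frac{\xi_c}{n-n_0}\geq\frac{\xi_c}{n-n_0},
\end{equation}
so that $\max_{i>n_0}\bigl(n\,\tilde{\pi}_{\xi,i}^{RDCS}\bigr)^{-1}\leq (n-n_0)/(n\xi_c)\leq \xi_c^{-1}=O_p(1)$. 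With Assumptions~\ref{ass5}--\ref{ass8} in force and $n_0/r\to0$ inherited from the statement, Theorem~\ref{thm:censor-all} gives, conditionally on $(\mathcal{D}_n,\tilde{\boldsymbol{\theta}}_{cp})$,
\begin{equation}
\sqrt{r}\,\tilde{\boldsymbol{V}}_{RDCS}^{-1/2}(\tilde{\boldsymbol{\theta}}_{RDCS}-\hat{\boldsymbol{\theta}})\xrightarrow{d}N(0,\boldsymbol{I}),
\end{equation}
where $\tilde{\boldsymbol{V}}_{RDCS}$ is $\boldsymbol{V}_{RDCS}$ with every $\tilde{\pi}_{\xi,i}^{RDCS}(\hat{\boldsymbol{\theta}})$ replaced by $\tilde{\pi}_{\xi,i}^{RDCS}(\tilde{\boldsymbol{\theta}}_{cp})$.

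The remaining task is to restore $\hat{\boldsymbol{\theta}}$ inside the covariance, i.e.\ to prove $\tilde{\boldsymbol{V}}_{RDCS}-\boldsymbol{V}_{RDCS}\xrightarrow{p}\boldsymbol{0}$ and then invoke Slutsky's theorem (using that $\boldsymbol{V}_{RDCS}$ is positive definite by Assumptions~\ref{ass6}--\ref{ass7}). Since the pilot is a uniform-subsampling estimator with $r_0\to\infty$, the general subsampling theory yields $\tilde{\boldsymbol{\theta}}_{cp}\xrightarrow{p}\hat{\boldsymbol{\theta}}$. The factors $\dot{l}_i(\hat{\boldsymbol{\theta}})$ and the centering term in $\boldsymbol{\Lambda}_{RDCS}$ carry no pilot dependence, so the two matrices differ only through the inverse weights, and the difference is dominated by
\begin{equation}
\frac{1}{n^2}\sum_{i=n_0+1}^n\left|\frac{1}{\tilde{\pi}_{\xi,i}^{RDCS}(\tilde{\boldsymbol{\theta}}_{cp})}-\frac{1}{\tilde{\pi}_{\xi,i}^{RDCS}(\hat{\boldsymbol{\theta}})}\right|\bigl\|\dot{l}_i(\hat{\boldsymbol{\theta}})\bigr\|^2 .
\end{equation}
The floor $\xi_c/(n-n_0)$ keeps each inverse weight of order $O(n)$ and renders $\boldsymbol{\theta}\mapsto 1/\tilde{\pi}_{\xi,i}^{RDCS}(\boldsymbol{\theta})$ Lipschitz with constant $O(n)$, so the bracketed factor is $O_p(n)\,\|\tilde{\boldsymbol{\theta}}_{cp}-\hat{\boldsymbol{\theta}}\|$; multiplying by $\frac1{n}\sum\|\dot{l}_i(\hat{\boldsymbol{\theta}})\|^2=O_p(1)$ (Cauchy--Schwarz applied to the fourth-moment bound in Assumption~\ref{ass7}) leaves $O_p(1)\,\|\tilde{\boldsymbol{\theta}}_{cp}-\hat{\boldsymbol{\theta}}\|=o_p(1)$.

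I expect the Lipschitz/uniform control in the last display to be the main obstacle: the perturbation of the inverse probabilities $1/\tilde{\pi}_{\xi,i}^{RDCS}$ must be dominated simultaneously over all $n-n_0$ censored summands when the random pilot replaces $\hat{\boldsymbol{\theta}}$. The $\xi_c$-floor is indispensable here --- without it the terms where $\tilde{\pi}_i^{RDCS}$ is nearly zero would make $1/\tilde{\pi}_i^{RDCS}$ explode and break the Lipschitz bound --- while Assumption~\ref{ass7} is what stops the $\|\dot{l}_i\|^2$ weights from accumulating too fast. Everything else is a transcription of the proof of Theorem~\ref{thm:censor-all}, together with the continuous-mapping argument that turns $\tilde{\boldsymbol{\theta}}_{cp}\xrightarrow{p}\hat{\boldsymbol{\theta}}$ into convergence of the covariance matrices.
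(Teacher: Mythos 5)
Your proposal follows essentially the same route as the paper's proof: establish the conditional CLT with the covariance built from the pilot-based probabilities $\tilde{\pi}_{\xi,i}^{RDCS}(\tilde{\boldsymbol{\theta}}_{cp})$ (the paper does this by re-deriving the two lemmas behind Theorem~\ref{thm:censor-all} with the extra conditioning on $\tilde{\boldsymbol{\theta}}_{cp}$, which is exactly your ``freeze the pilot and note the $\xi_c$-floor makes Assumption~\ref{ass8} automatic'' step), and then show $\boldsymbol{\Lambda}^0_{RDCS}-\boldsymbol{\Lambda}_{RDCS}=O_p(\Vert\tilde{\boldsymbol{\theta}}_{cp}-\hat{\boldsymbol{\theta}}\Vert)=o_p(1)$ via the $\xi_c$-floor, the Lipschitz control on $\dot{l}_i$ from Assumption~\ref{ass6}, and the fourth-moment bound of Assumption~\ref{ass7}. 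You correctly identify the covariance-replacement step as the delicate part, and your sketch of it matches the paper's $E^c_{i1}+E^c_{i2}$ decomposition in substance.
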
 

\begin{remark}\label{remark:lambda_c}
Let $\boldsymbol{\Lambda}_{c}^{opt}=\frac{1}{n^2} \sum_{i=n_0+1}^n \frac{\dot{l}_i(\hat{\boldsymbol{\theta}})^2}{\tilde{\pi}^{RDCS}_i(\hat{\boldsymbol{\theta}})}  - \left[ \frac{1}{n}\sum_{i=n_0+1}^n \dot{l}_i(\hat{\boldsymbol{\theta}}) \right]^2$ be the asymptotic covariance matrix with optimal subsampling probabilities, which minimizes the trace of $\boldsymbol{\Lambda}_{c}$.
Straightforward calculations can lead to $tr(\boldsymbol{\Lambda}_{c}^{opt}) \leq tr(\boldsymbol{\Lambda}_{RDCS}) \leq \frac{tr(\boldsymbol{\Lambda}_{c}^{opt})}{1-\xi_c}$.
Thus, $tr(\boldsymbol{\Lambda}_{c}^{opt})$ and $tr(\boldsymbol{\Lambda}_{RDCS})$ can be arbitrarily close if $\xi_c$ is small enough. 
\end{remark}

\subsection{Variance Estimation} \label{Variance Estimation}
Proposition \ref{thm:RDS-mle} and Proposition \ref{thm:RDCS-mle} present the variance estimations for RDS and RDCS respectively. 
However, evaluation of these asymptotic covariance matrices involves the full-data MLE $\hat{\boldsymbol{\theta}}$, which is unknown in practice and requires calculations on the full data, thus time-consuming.
To provide a reasonable computationally efficient uncertainty estimation, we derive the formula that only involves the selected subsample to estimate the asymptotic covariance matrix in these two theorems.
In particular, by substituting $\hat{\boldsymbol{\theta}}$ with the subsampling-based estimator $\tilde{\boldsymbol{\theta}}_{RDS}$ or $\tilde{\boldsymbol{\theta}}_{RDCS}$, the estimated covariance only depends on the selected subsample the subsampling-based MLE. 
Through these approximations, the following results are calculated, which present the estimations of the covariance matrix for the two methods.

\textbf{RDS}: 
\begin{equation*}
\hat{\boldsymbol{V}}_{RDS}=\hat{\boldsymbol{M}}_{RDS}^{-1}\hat{\boldsymbol{\Lambda}}_{RDS}\hat{\boldsymbol{M}}_{RDS}^{-1},
\end{equation*}

where 
\begin{equation*}
\begin{aligned}
\hat{\boldsymbol{M}}_{RDS} &= \frac{1}{nr}\sum_{i=1}^r\frac{1}{\pi_{\xi,i}^{RDS*}(\tilde{\boldsymbol{\theta}}_p)}\ddot{l}^*_i(\tilde{\boldsymbol{\theta}}_{RDS}),\\
\hat{\boldsymbol{\Lambda}}_{RDS}&=\frac{1}{n^2r} \sum_{i=1}^r \frac{1+n\gamma \pi_{\xi,i}^{RDS*}(\tilde{\boldsymbol{\theta}}_p)}{\pi_{\xi,i}^{RDS*}(\tilde{\boldsymbol{\theta}}_p)^2}\dot{l}^*_i(\tilde{\boldsymbol{\theta}}_{RDS})^2.
\end{aligned}
\end{equation*}

\textbf{RDCS}:
\begin{equation*}
     \hat{\boldsymbol{V}}_{RDCS} = \hat{\boldsymbol{M}}_{RDCS}^{-1} \hat{\boldsymbol{\Lambda}}_{RDCS} \hat{\boldsymbol{M}}_{RDCS}^{-1},
\end{equation*}

where
\begin{equation*}
\begin{aligned}
    \hat{\boldsymbol{M}}_{RDCS} &= \frac{1}{nr}\sum_{i=1}^{r}\omega_{\xi,i}^{c,RDCS}(\tilde{\boldsymbol{\theta}}_{cp})\ddot{l}^c_i(\tilde{\boldsymbol{\theta}}_{RDCS}),\\
    \hat{\boldsymbol{\Lambda}}_{RDCS} &= \frac{1}{n^2r} \sum_{i=n_0+1}^{r}\frac{1+n\gamma \tilde{\pi}_{\xi,i}^{c,RDCS}(\tilde{\boldsymbol{\theta}}_{cp})}{\tilde{\pi}_{\xi,i}^{c,RDCS}(\tilde{\boldsymbol{\theta}}_{cp})^2}\dot{l}^c_i(\tilde{\boldsymbol{\theta}}_{RDCS})^2 \\
    &-\left[ \frac{1}{nr}\sum_{i=n_0+1}^r \frac{\dot{l}^c_i(\tilde{\boldsymbol{\theta}}_{RDCS})}{\tilde{\pi}_{\xi,i}^{c,RDCS}(\tilde{\boldsymbol{\theta}}_{cp})} \right]^2.
\end{aligned}
\end{equation*}

Note that compared to $\boldsymbol{\Lambda}_{RDS}$ and $\boldsymbol{\Lambda}_{RDCS}$, there are extra terms consist of $\gamma = r/n$ in $\hat{\boldsymbol{\Lambda}}_{RDS}$ and $\hat{\boldsymbol{\Lambda}}_{RDCS}$, which are related to $\gamma\boldsymbol{\Lambda}(\boldsymbol{\theta}_0)$ (defined in Assumption \ref{ass7}) in Theorem \ref{thm:general-true} and \ref{thm:censor-true} (in Appendix \ref{appendix:Theorems}). They can be neglected as we mainly focus on the case where $r/n\rightarrow0$.
It is easy to see that $\hat{\boldsymbol{V}}_{RDS}$ and $\hat{\boldsymbol{V}}_{RDCS}$ are computationally efficient since they only depend on the selected subsample.  
The performance of these estimations will be demonstrated in Section \ref{simulation}.

\section{Case Study - Backblaze Hard Drive Data Analysis} \label{subsec:case}
In this section, the proposed subsampling methods RDS and RDCS are applied to the Backblaze Hard Drive Data, as mentioned in Section \ref{intro}. We will show that the proposed methods can handle this real-world problem well.
Backblaze is a company that offers data storage services. It has been collecting the daily stats of all their hard drives since 2013. The data will not stop collecting until the drive fails. Besides, when the drive fails, it will be replaced by a new one. The data used are from 2016 to 2018. As multiple models of hard drives are installed, the data from the \textit{ST4000DM000} model are extracted to ensure their lifetime follows the identical distribution. 
The RDS, RDCS, and uniform subsampling are applied to these three datasets of different periods. 

For the lifetime distribution model, we adopt the GLFP distribution, which is shown to perform well in fitting the hard drives' lifetime distribution \citep{mittman2019}.  
\citet{mittman2019} stated that a fixed $\pi_g$ across the different models have little influence to the GLFP distribution, so we set $\pi_g=0.054$ which is recommended in that article and estimate the other four parameters, i.e., $\beta_1, \beta_2, \eta_1$ and $\eta_2$. Considering the extended duration of hard drive lifetimes, all time values are normalized by dividing them by $10^6$ to maintain parameter values within a practical range.
As $\xi$ and $\xi_c$ need to be small to ensure the properties in Remark \ref{remark:lambda_g} and \ref{remark:lambda_c}, they are simply set to be 0.1.
Throughout this section, $n$ denotes the sample size of the full data,
$r$ denotes the number of subsample, while $r_0$ denotes the sample size of pilot data. 
The experiments are repeated 500 times.

\textbf{Evaluation metrics:}
The root mean square error (RMSE) of the parameter estimation is calculated as an evaluation metric of the proposed and alternative methods. 
Let $\breve{\boldsymbol{\theta}}$ and $\boldsymbol{\theta}_0$ denote the estimator and the true value, respectively. $\breve{\boldsymbol{\theta}}^{(b)}$ is the estimation from the $b$th replicate, then the RMSE is give by:
$\text{RMSE} = \frac{1}{m}  (\sum_{b=1}^m\Vert \breve{\boldsymbol{\theta}}^{(b)} -  \boldsymbol{\theta}_0\Vert^2)^{\frac{1}{2}}.$ In this section, as the true value cannot be captured, we calculate the RMSE between the estimator and full-data MLE.
A lower RMSE indicates that the corresponding estimator is more accurate. 
Moreover, empirical biases (Bias) of the parameter estimations are calculated. 
To compare the computational efficiency of the alternative methods, we compare the mean computational time (Time) over the $m$ simulations. 

\begin{table*}[!ht]
\centering 
\caption{The results of analyzing Backblaze HDD in 2016-2018 using three methods. In 2016, $n=35678$ and $\alpha=0.97$. In 2016-2017, $n=36129$ and $\alpha=0.95$. In 2016-2018, $n=36153$ and $\alpha=0.93$. $r=3000$ and $r_0=500$ in all cases.} \label{backblaze}

\begin{tabular}{ccccccccccccc} \toprule[\heavyrulewidth]
\multirow{2}*{Year}& & \multicolumn{3}{c}{RMSE} & \multicolumn{3}{c}{Bias} &\multicolumn{4}{c}{Time(s)}
\\ \cmidrule(lr){3-5} \cmidrule(lr){6-8}\cmidrule(lr){9-12} 
&& RDS & RDCS& UNIF & RDS & RDCS & UNIF & RDS & RDCS & UNIF & FULL
\\ \midrule[\heavyrulewidth]
\multirow{4}*{2016}&$\beta_1$ & 1.3161 & \textbf{0.2152} &5.7042 & 0.2793 & \textbf{0.0101} & 1.3106 & \multirow{4}*{3.076} & \multirow{4}*{3.047} & \multirow{4}*{2.231} & \multirow{4}*{23.80} 
\\ 
&$\eta_1$ & \textbf{0.0140} & 0.1936 & 4.2673 & \textbf{0.0015}& 0.0071 & 0.4035 & & &&
\\ 
&$\beta_2$ & 0.1034 & \textbf{0.0695} & 0.2865 & \textbf{0.0107} & -0.0430 & 0.0342 & & &&
\\ 
&$\eta_2$& 8.0213 & \textbf{1.8952} & 78.736 & 1.5650 & \textbf{1.2357} & 7.7066 & & & & 
\\ \hline
\multirow{4}*{2016-2017} & $\beta_1$ & 1.9026 & \textbf{0.9384} & 4.9515 & 0.7247 & \textbf{0.5545} & 1.6116 & \multirow{4}*{3.135} & \multirow{4}*{3.028} & \multirow{4}*{2.624} & \multirow{4}*{19.76} 
\\ 
&$\eta_1$ & 0.2950 & \textbf{0.0165} & 0.6243 & 0.0544 & \textbf{0.0131} & 0.1161 & & &&
\\ 
&$\beta_2$ & 0.1581 & \textbf{0.1246} & 0.2415 & 0.1062 & \textbf{0.1013} & 0.1293 & & &&
\\ 
&$\eta_2$& 2.0067 & \textbf{1.8865} & 4.0752 & -1.4108 & -1.5899 & \textbf{-0.7573} & & & & 
\\ \hline
\multirow{4}*{2016-2018} & $\beta_1$ & 1.2053 & \textbf{1.0274} & 2.9654 & 0.1004 & \textbf{-0.0687} & 0.6310 & \multirow{4}*{3.556} & \multirow{4}*{2.976} & \multirow{4}*{2.332} & \multirow{4}*{34.15} 
\\ 
&$\eta_1$ & \textbf{0.0782} & 0.0965 & 0.3405 & \textbf{0.0003}& 0.0028 & 0.0400 & & &&
\\ 
&$\beta_2$ & 0.0588 & \textbf{0.0459} & 0.1908 &-0.0111 & \textbf{-0.0107} & 0.0157 & & &&
\\ 
&$\eta_2$& 1.4079 & \textbf{0.9709} & 4.6551 &0.4794 & \textbf{0.3411} & 1.3172& & & & 
\\ \bottomrule[\heavyrulewidth]
\end{tabular}
\end{table*}

\begin{table}[!ht]
\centering 
\setlength{\belowcaptionskip}{2mm}
\caption{The estimated MTTF (hours) in 2016 and 2017 using RDS, RDCS, uniform subsampling and full data} \label{MTTF}
\begin{tabular}{ccccc} \toprule[\heavyrulewidth]
year & RDS & RDCS & UNIF &FULL
\\ \midrule[\heavyrulewidth]
2016 &1083772&1020375&4036786&1050008
\\ 
2017 &2577379&1654480&2623846&1485223
\\ \bottomrule[\heavyrulewidth]
\end{tabular}
\end{table}

Table \ref{backblaze} gives the results when $r=3000$ and $r_0=500$.
The RMSE of RDCS and RDS are significantly smaller than that of the uniform subsampling method.
Moreover, RDCS has the smallest RMSE in almost all situations, and the RMSE of RDS is close to RDCS.
For the empirical bias of the parameter estimation, RDCS performs the best most of the time, and RDS is comparable to RDCS in some cases. 
The RMSE and biases of both proposed methods are smaller than those of the uniform subsampling method in magnitudes, especially on the estimation of $\eta_2$ in 2016.
Besides, the results of mean calculation time illustrate that the RDS and RDCS can reduce computation time dramatically. Table \ref{MTTF} shows the estimated MTTF using the data from 2016 and 2017.
The MTTF is calculated by: $\text{MTTF} = \int_{0}^{\infty} R(t)dt,$ where $R(t)=1-F(t)$ is the reliability function and $F(t)$ is the cumulative distribution function (c.d.f). We use the subsampling-based estimation to construct the c.d.f and calculate the estimated MTTF. For reference, it is reported on the internet that the MTTF of \textit{ST4000DM000} is 1.2 million hours\footnote{https://www.diskdrivefinder.com/product/108160/.}. The results show that for each method we employ, the estimated MTTF for hard drives installed in 2017 is significantly larger than those installed in 2016, possibly resulting from improved product quality. The estimated MTTF based on RDCS is the closest to the MTTF based on full data, reflecting the effectiveness of our method. The estimated MTTF based on RDS does not perform as well as RDCS because of the high censoring rate. 
Consequently, our proposed methods demonstrate superior efficiency and accuracy in practical applications, offering substantial time savings compared to uniform subsampling and full data methods. 

\section{Simulation Studies} \label{simulation}
In this section, we demonstrate the performance of the proposed subsampling methods via various simulation examples. 
The uniform subsampling estimator and the full-data estimator are used for comparison to show the accuracy and efficiency of our methods. 
To evaluate the performance of the alternative methods for estimating parameters in various lifetime models, three different distributions are considered for the failure time: exponential distribution, Weibull distribution and GLFP distribution, as detailed introduced in Section 2.1. 

\textbf{Synthetic data generation:}
Generate $n$ ($n=10^6$ throughout this section except for the specific statement) left-truncated time from a uniform distribution $U(a,b)$, failure time data according to the corresponding distribution, censored time from another uniform distribution $U(c,d)$ ($a<b<c<d$ are prespecified and help to adjust the censoring rate) respectively. 
The sizes of censored time and failure time determine the value of $(t_i,C_i)_{i=1,2,...,n}$. 
We still use $n$, $r$ and $r_0$ to denote the sample size of the full data, subsample and pilot data. 
Each simulation is repeated $m=500$ times. 

Empirically, RDS is recommended to use when $n_0>r/2$ while RDCS is recommended when $n_0<r/2$. 
As for the evaluation metrics, the RMSE and empirical bias between the subsampling-based estimator and true value are calculated. The average calculation time of RDCS, RDS, and uniform subsampling, as well as the calculating time of the full data estimation, are also given.
To evaluate the performance of the uncertainty estimation, we compute the 95\% empirical coverage probability (CP) towards the MLE.

\begin{table*}[!ht] 
\centering
\caption{The RMSE of subsampling-based estimators with different $\xi$ or $\xi_c$, where $r = 1000$, $r_0 = 400$. $\alpha = 0.90$ when using RDS and $\alpha = 0.9993$ when using RDCS.}
\label{table:zeta}
\begin{tabular}{cccccccccccc}
\toprule[\heavyrulewidth]
\multicolumn{2}{l}{} & \multicolumn{5}{c}{RDS} & \multicolumn{5}{c}{RDCS} \\ \cmidrule(lr){3-7} \cmidrule(lr){8-12}
\multicolumn{2}{l}{} & $\xi = 0$ & \multicolumn{1}{c}{$\xi = 0.1$} & \multicolumn{1}{c}{$\xi = 0.2$} & \multicolumn{1}{c}{$\xi = 0.3$} & \multicolumn{1}{c}{$\xi = 0.4$} & \multicolumn{1}{c}{$\xi_c = 0$} & \multicolumn{1}{c}{$\xi_c = 0.1$} & \multicolumn{1}{c}{$\xi_c = 0.2$} & \multicolumn{1}{c}{$\xi_c = 0.3$} & \multicolumn{1}{c}{$\zeta = 0.4$} \\ \midrule[\heavyrulewidth]
Exp & $\theta$ & 0.0064 & \textbf{0.0060} & 0.0063 & 0.0061 &0.0065 & 0.0021 & \textbf{0.0021} & 0.0022 & 0.0022 & 0.0023 \\ \hline
\multicolumn{1}{r}{\multirow{2}{*}{Weibull}} & $\beta$ & 0.0854 &	0.0824 & \textbf{0.0815} & 0.0886& 	0.0937 
 & 0.0763 & 0.0621 & \textbf{0.0606} & 0.0641 & 0.0633 \\
\multicolumn{1}{r}{} & $\eta$ & 0.2605 & \textbf{0.2451} & 0.2638& 0.2857& 0.2700 & 0.0620 & 0.4417 & \textbf{0.4330} & 0.4563 & 0.4559 \\ \hline
\multirow{4}{*}{GLFP} & $\beta_1$ & \textbf{0.1654} &0.1755& 0.2114& 0.1956& 0.2278 & 0.4412 & \textbf{0.0474} & 0.1741 & 0.0596 & 0.0806 \\
 & $\eta_1$ & 0.3927 & \textbf{0.3828} & 0.5720 & 0.6323 & 0.5519 & 8.3427 & 0.4336 & 1.4622 & \textbf{0.2436} & 0.3155 \\
 & $\beta_2$ & 0.6427 &\textbf{0.5555} &0.6871 &0.7700 &0.8974 & 2.5522 & 1.7180 & 1.9134 & 1.6815 & \textbf{1.2824} \\
 & $\eta_2$ & 0.3675 &\textbf{0.2854} &0.4090 &0.4851 &0.5032 & 2.5865 & 1.4188 & 1.8557 & \textbf{1.2421} & 1.4208 \\ 
 \bottomrule[\heavyrulewidth]
\end{tabular}
\end{table*}

In the first simulation, we decide to find the optimal $\xi$ and $\xi_c$ given in Algorithm \ref{algorithm1} and \ref{algorithm2}. In Table \ref{table:zeta}, we present the RMSE of subsampling-based estimators with different $\xi$ and $\xi_c$. Generally, the RDS-based estimations have a relatively smaller RMSE when $\xi = 0.1$. It can be seen that if $\xi_c=0$, the RMSE of the RDCS-based estimators will be relatively large, which is unacceptable. And when $\xi_c=0.1, 0.2, 0.3$, the results of RDCS are similar. For convenience, we simply let $\xi,\xi_c=0.1$ to retain the properties of the proposed optimal subsampling probabilities to the greatest extent.

After that, the performance of the RDS method is illustrated. 
For all subsampling methods considered in this example, $r=1000$, $r_0=400$, $\alpha=0.90$ are fixed and all three lifetime distributions introduced at the beginning of this section are considered.
Table \ref{table:RDS} presents the RMSE, Bias, CP, and Times of RDS and the uniform subsampling method under different settings. 
The results presented in this table show that nearly all the RMSE and biases of estimates based on RDS are significantly less than those based on uniform subsampling, illustrating the superior performance of the proposed RDS method.

It is worth noting that when the number of parameters increases, the RMSE of uniform subsampling-based estimation becomes unstable, which is shown in the estimation of  $\eta_1$ when $r = 1000$ in Table \ref{table:RDS}.
As a comparison, the RDS-based estimation performs well in all situations. 
To compare the computational efficiency of alternative methods, we report the computational time of the RDS method, the uniform subsampling method, and the full data MLE. 
The computation time of RDS is much faster than the full-data estimators. 
Given the outstanding estimation results of RDS, a little slower than uniform subsampling is acceptable.

\begin{table*}[!ht]
\caption{The simulation results of RDS and uniform subsampling under three distributions, where $r_0=400$, $\alpha=0.90$.}
\centering
\label{table:RDS}
\begin{tabular}{cccccccccccc}\toprule[\heavyrulewidth]
\multirow{2}{*}{} & \multirow{2}{*}{r} & \multirow{2}{*}{} & \multicolumn{2}{c}{RMSE} & \multicolumn{2}{c}{Bias} & \multicolumn{2}{c}{CP} & \multicolumn{3}{c}{Time(s)} \\ 
\cmidrule(lr){4-5} \cmidrule(lr){6-7} \cmidrule(lr){8-9} \cmidrule(lr){10-12}
 &  &  & RDS & UNIF & RDS & UNIF & RDS & UNIF & RDS & UNIF & FULL \\ \midrule[\heavyrulewidth]
\multirow{3}{*}{Exp} & 1000 & \multirow{3}{*}{$\theta$} & \textbf{0.0065} & 0.0104 & 0.0005 & 0.0003 & 0.950 & 0.948 & 0.065 & 0.005 & \multirow{3}{*}{0.687} \\
 & 1500 &  & \textbf{0.0050} & 0.0076 & \textbf{0.0002} & 0.0002 & 0.954 & 0.956 & 0.065 & 0.005 &  \\
 & 2000 &  & \textbf{0.0044} & 0.0066 & \textbf{0.0002} & -0.0003 & 0.948 & 0.942 & 0.066 & 0.005 &  \\ \hline 
 
\multirow{7}{*}{Weibull} & \multirow{2}{*}{1000} & $\beta$ & \textbf{0.0810} & 0.1835 & \textbf{0.0163} & 0.0221 & 0.958 & 0.962 & \multirow{2}{*}{0.787} & \multirow{2}{*}{0.025} & \multirow{7}{*}{21.44} \\
 &  & $\eta$ & \textbf{0.2450} & 0.4781 & \textbf{-0.0361} & 0.0378 & 0.958 & 0.954 &  &  &  \\
\noalign{\vskip 1.5mm}
 & \multirow{2}{*}{1500} & $\beta$ & \textbf{0.0689} & 0.1694 & \textbf{0.0158} & 0.0290 & 0.960 & 0.962 & \multirow{2}{*}{0.795} & \multirow{2}{*}{0.038} &  \\
 &  & $\eta$ & \textbf{0.2017} & 0.4039 & \textbf{-0.0238} & -0.0408 & 0.956 & 0.958 &  &  &  \\
\noalign{\vskip 1.5mm}
 & \multirow{2}{*}{2000} & $\beta$ & \textbf{0.0593} & 0.1386 & \textbf{0.0148} & 0.0254 & 0.924 & 0.938 & \multirow{2}{*}{0.799} & \multirow{2}{*}{0.049} &  \\
 &  & $\eta$ & \textbf{0.1765} & 0.3430 & -0.0428 & -0.0258 & 0.946 & 0.956 &  &  &  \\ \hline
 
\multirow{12}{*}{GLFP} & \multirow{4}{*}{1000} & $\beta_1$ & \textbf{0.4737} & 0.9863 & \textbf{-0.0856} & -0.2630 & 0.942 & 0.948 & \multirow{4}{*}{6.038} & \multirow{4}{*}{0.266} & \multirow{12}{*}{202.614} \\
 &  & $\eta_1$ & \textbf{1.3458} & 8.3930 & \textbf{0.2370} & 1.1369 & 0.942 & 0.992 &  &  &  \\
 &  & $\beta_2$ & \textbf{1.1644} & 1.7282 & \textbf{-0.0327} & -0.1911 & 0.940 & 0.952 &  &  &  \\
 &  & $\eta_2$ & \textbf{0.7246} & 1.1810 & \textbf{0.1154} & 0.3818 & 0.958 & 0.956 &  &  &  \\
\noalign{\vskip 1.5mm}
& \multirow{4}{*}{1500} & $\beta_1$ & \textbf{0.3942} & 0.7665 & \textbf{-0.0432} & -0.2232 & 0.944 & 0.946 & \multirow{4}{*}{9.831} & \multirow{4}{*}{0.414} &  \\
 &  & $\eta_1$ & \textbf{1.4140} & 1.7575 & \textbf{0.2861} & 0.3730 & 0.940 & 0.980 &  &  &  \\
 &  & $\beta_2$ & \textbf{1.1686} & 1.4113 & \textbf{0.0922} & -0.2377 & 0.952 & 0.952 &  &  &  \\
 &  & $\eta_2$ & \textbf{0.6675} & 0.9708 & \textbf{0.0919} & 0.3039 & 0.944 & 0.944 &  &  &  \\
\noalign{\vskip 1.5mm}
 & \multirow{4}{*}{2000} & $\beta_1$ & \textbf{0.2527} & 0.6482 & \textbf{-0.0244} & -0.1512 & 0.956 & 0.944 & \multirow{4}{*}{10.606} & \multirow{4}{*}{0.570} &  \\
 &  & $\eta_1$ & \textbf{0.6035} & 0.9578 & \textbf{0.1375} & 0.1485 & 0.946 & 0.996 &  &  &  \\
 &  & $\beta_2$ & \textbf{1.0983} & 1.1749 & \textbf{-0.0075} & -0.3192 & 0.948 & 0.962 &  &  &  \\
 &  & $\eta_2$ & \textbf{0.5256} & 0.8643 & \textbf{0.0729} & 0.3349 & 0.948 & 0.946 &  &  & \\
\bottomrule[\heavyrulewidth]
\end{tabular}
\end{table*}

\begin{table*}[!ht]
\centering 
\caption{The simulation results of RDCS and uniform subsampling under three distributions, where $r_0=400$, $\alpha=0.9993$.}
\label{table:RDCS}
\begin{tabular}{cccccccccccc}
\toprule[\heavyrulewidth]
\multirow{2}{*}{} & \multirow{2}{*}{r} & \multirow{2}{*}{} & \multicolumn{2}{c}{RMSE} & \multicolumn{2}{c}{Bias} & \multicolumn{2}{c}{CP} & \multicolumn{3}{c}{Time(s)} \\
\cmidrule(lr){4-5} \cmidrule(lr){6-7} \cmidrule(lr){8-9} \cmidrule(lr){10-12}
 &  &  & RDCS & UNIF & RDCS & UNIF & RDCS & UNIF & RDCS & UNIF & FULL \\ \midrule[\heavyrulewidth]
\multirow{3}{*}{Exp} & 1000 & \multirow{3}{*}{$\theta$} & \textbf{0.0012} & 0.1238 & \textbf{0.0012} & -0.0072 & 0.944 & 0.972 & 0.007 & 0.005 & \multirow{3}{*}{0.493} \\
 & 1500 &  & \textbf{0.0012} & 0.1016 & \textbf{0.0012} & -0.0073 & 0.948 & 0.974 & 0.008 & 0.005 &  \\
 & 2000 &  & \textbf{0.0012} & 0.0863 & \textbf{0.0012} & -0.0071 & 0.946 & 0.942 & 0.008 & 0.005 &  \\
\hline
 
\multirow{6}{*}{Weibull} & \multirow{2}{*}{1000} & $\beta$ & \textbf{0.0408} & 5.9941 & \textbf{-0.0346} & 2.9486 & 0.960 & 0.934 & \multirow{2}{*}{0.687} & \multirow{2}{*}{0.040} & \multirow{6}{*}{18.993} \\
 &  & $\eta$ & \textbf{0.2855} & 15.1297 & \textbf{0.2321} & 5.5613 & 0.985 & 0.936 &  &  &  \\
 \noalign{\vskip 1.5mm}
 & \multirow{2}{*}{1500} & $\beta$ & \textbf{0.0402} & 6.6856 & \textbf{-0.0385} & 2.7412 & 0.950 & 0.920 & \multirow{2}{*}{0.687} & \multirow{2}{*}{0.065} &  \\
 &  & $\eta$ & \textbf{0.2722} & 14.8707 & \textbf{0.2572} & 4.1895 & 0.922 & 0.950 &  &  &  \\
 \noalign{\vskip 1.5mm}
 & \multirow{2}{*}{2000} & $\beta$ & \textbf{0.0375} & 5.0358 & \textbf{-0.0363} & 2.3896 & 0.982 & 0.970 & \multirow{2}{*}{0.705} & \multirow{2}{*}{0.088} &  \\
 &  & $\eta$ & \textbf{0.2521} & 14.8434 & \textbf{0.2420} & 5.5784 & 0.984 & 0.924 &  &  &  \\
\hline
 
\multirow{12}{*}{GLFP} & \multirow{4}{*}{1000} & $\beta_1$ & \textbf{0.0513} & 4.5645 & \textbf{-0.0459} & 3.1220 & 0.950 & 0.950 & \multirow{4}{*}{4.356} & \multirow{4}{*}{0.317} & \multirow{12}{*}{249.609} \\
 &  & $\eta_1$ & \textbf{0.2760} & 17.1594 & \textbf{0.2323} & 3.7607 & 0.942 & 0.962 &  &  &  \\
 &  & $\beta_2$ & \textbf{1.4038} & 6.7585 & \textbf{-0.7882} & 1.8256 & 0.974 & 0.978 &  &  &  \\
 &  & $\eta_2$ & \textbf{1.7508} & 24.0151 & \textbf{-1.4767} & 3.3122 & 0.970 & 0.992 &  &  &  \\
  \noalign{\vskip 1.5mm}
 & \multirow{4}{*}{1500} & $\beta_1$ & \textbf{0.0436} & 4.7910 & \textbf{-0.0408} & 2.6242 & 0.968 & 0.958 & \multirow{4}{*}{4.454} & \multirow{4}{*}{0.511} &  \\
 &  & $\eta_1$ & \textbf{0.2355} & 78.5543 & \textbf{0.2015} & 20.2411 & 0.972 & 0.966 &  &  &  \\
 &  & $\beta_2$ & \textbf{1.3085} & 41.3323 & \textbf{-0.8013} & 7.6409 & 0.956 & 0.980 &  &  &  \\
 &  & $\eta_2$ & \textbf{1.5557} & 9.1978 & \textbf{-1.4094} & 1.1273 & 0.974 & 0.976 &  &  &  \\
  \noalign{\vskip 1.5mm}
  & \multirow{4}{*}{2000} & $\beta_1$ & \textbf{0.0392} & 3.3947 & \textbf{-0.0383} & 1.8053 & 0.966 & 0.958 & \multirow{4}{*}{4.540} & \multirow{4}{*}{0.685} &  \\
 &  & $\eta_1$ & \textbf{0.1798} & 85.3416 & \textbf{0.1758} & 25.6368 & 0.972 & 0.970 &  &  &  \\
 &  & $\beta_2$ & \textbf{1.1759} & 57.4013 & \textbf{-0.7803} & 10.8738 & 0.964 & 0.980 &  &  &  \\
 &  & $\eta_2$ & \textbf{1.5355} & 20.8062 & \textbf{-1.4237} & 3.3282 & 0.978 & 0.976 &  &  & 
\\ \bottomrule[\heavyrulewidth]
\end{tabular}
\end{table*}

Next, we illustrate the performance of RDCS under the situation where $\alpha$ is extremely high. 
Table \ref{table:RDCS} shows the RMSE, Bias, CP, and Times under the three aforementioned lifetime distributions.  In all cases, $\alpha = 0.9993$.
It is known that parameter estimation using the data with a high censoring rate leads to less accuracy, a larger size of subsample is chosen compared to the situations with a relatively low censoring rate full data. 
Note that when failure time follows the exponential distribution, the pilot subsample is not needed when using RDCS, because the subsampling probabilities of RDCS: $\tilde{\pi}^{RDCS}_i = \frac{-t_i+t_{il}}{\sum_{i=n_0+1}^n -t_i+t_{il}}$ do not consist of the pilot estimator.

The results presented in Table \ref{table:RDCS} show that the uniform subsampling-based estimation gives bad results but the RDCS maintains a quite good performance in the aspect of RMSE and Bias. 
To be specific, the RMSE of the RDCS-based estimation is an order of magnitude smaller than that of uniform subsampling-based estimation. Besides, uniform subsampling-based estimation has trouble approximating the parameters of Weibull and GLFP distributions, giving rise to large RMSE and Bias. By contrast, the RMSE and bias of RDCS still maintain a low level and are much smaller than the uniform subsampling.
Moreover, RDCS method also has a much shorter computational time than the full-data estimators, which meets our aim of reducing the computational burden.

\begin{figure}[!ht]
\centering
\begin{minipage}[c]{0.32\textwidth}
\centering
\includegraphics[width=\textwidth]{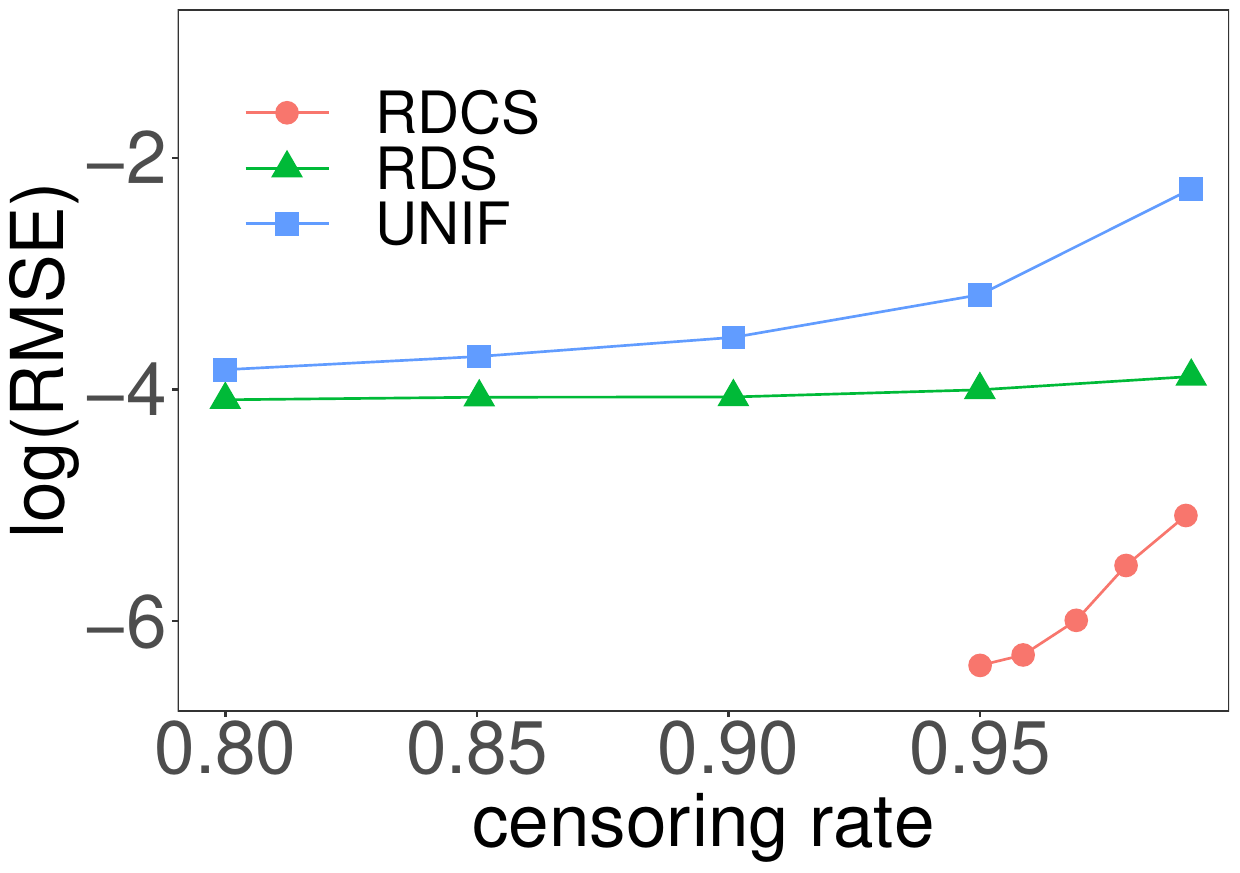}
\subcaption{Exponential distribution with $r=3000$, $r_0=500$.}
        \label{exp}
\end{minipage} 
\begin{minipage}[c]{0.32\textwidth}
\centering
\includegraphics[width=\textwidth]{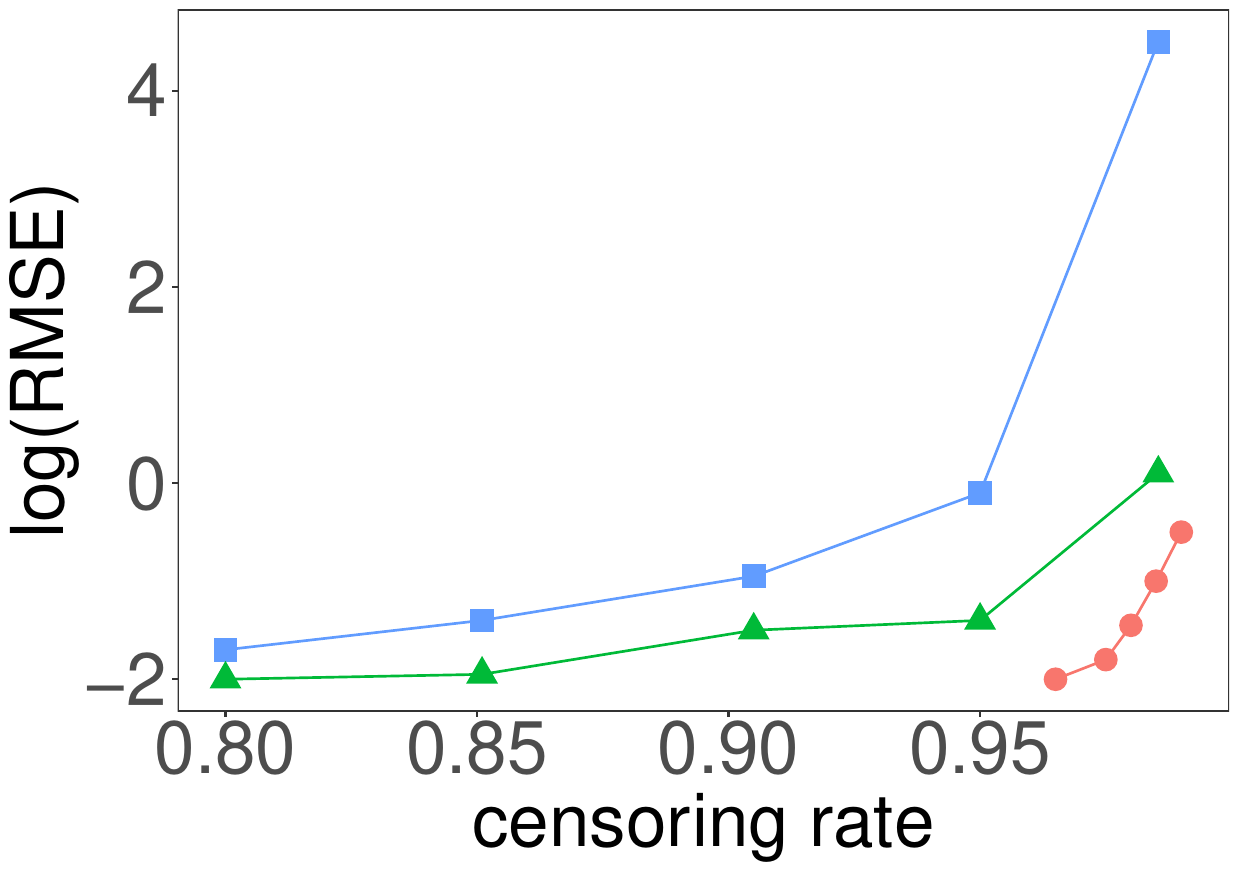}
\subcaption{Weibull distribution with $r=3000$, $r_0=500$.}
        \label{weibull}
\end{minipage} 
\begin{minipage}[c]{0.32\textwidth}
\centering
\includegraphics[width=\textwidth]{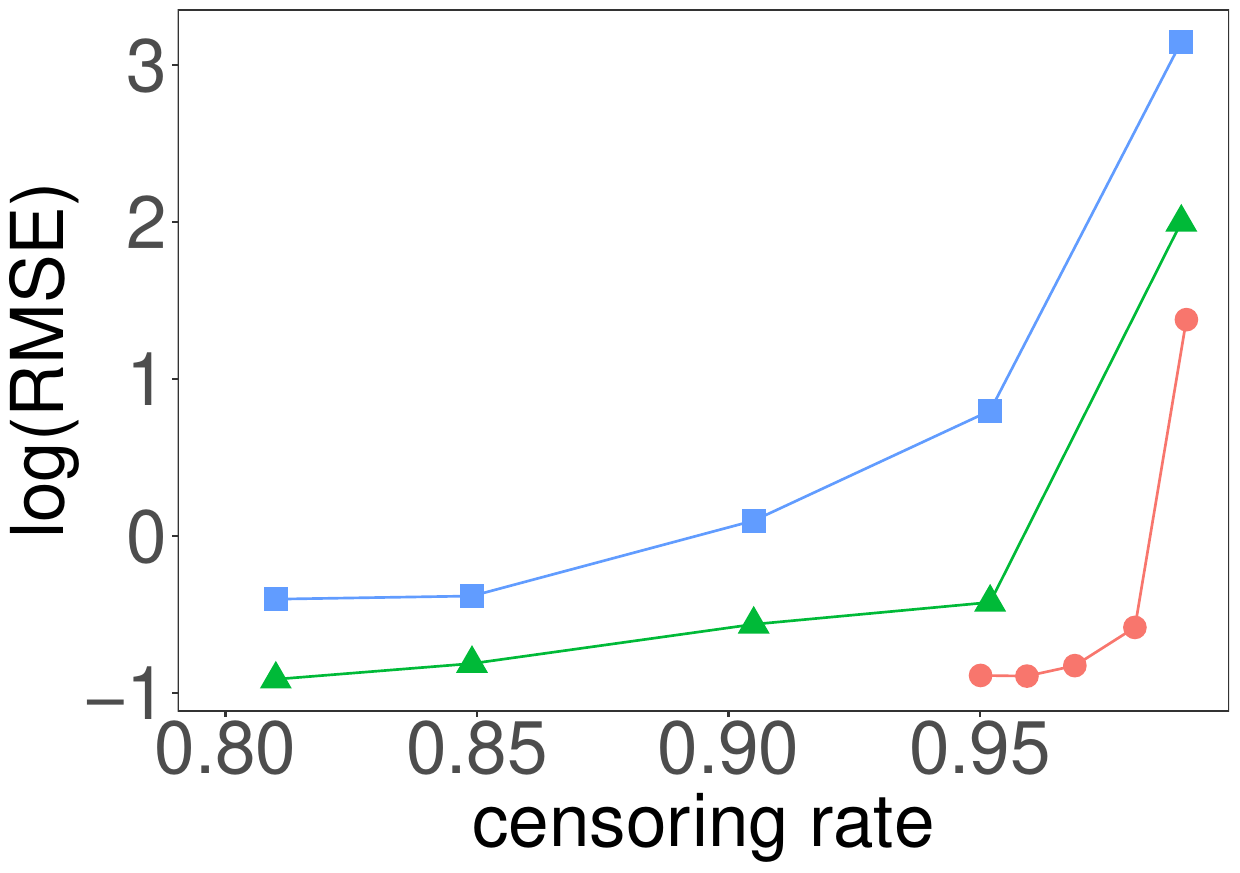}
\subcaption{GLFP distribution with $r=3000$, $r_0=500$.}
		\label{GLFP}
\end{minipage}
\caption{The RMSE of estimators when censoring rate changes}
	\label{three_models_RMSE}
\end{figure}

\begin{figure}[!ht]
\centering
\begin{minipage}[c]{0.24\textwidth}
\centering
\includegraphics[width=\textwidth]{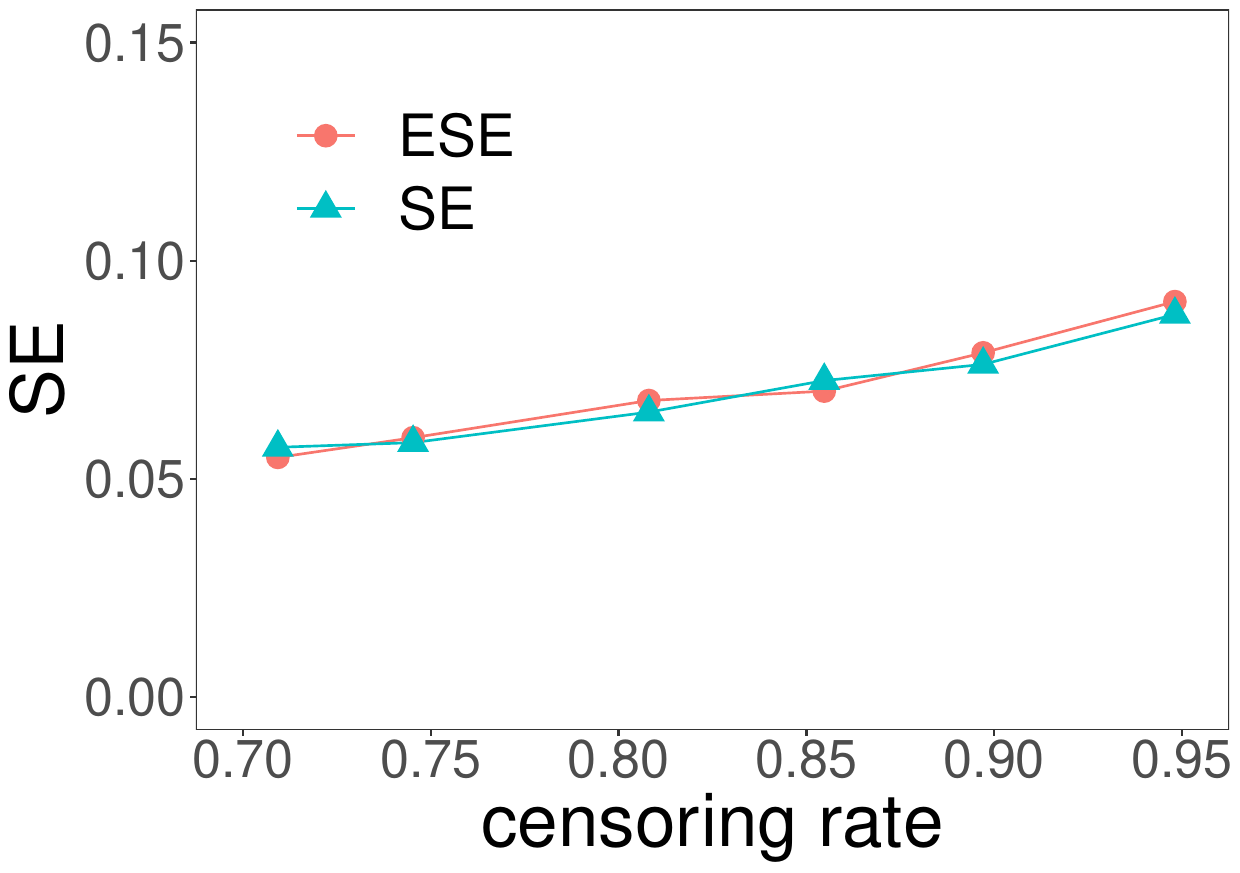}
\subcaption{$\beta$ for RDS}
        \label{figure:RDS-SE-beta}
\end{minipage} 
\begin{minipage}[c]{0.24\textwidth}
\centering
\includegraphics[width=\textwidth]{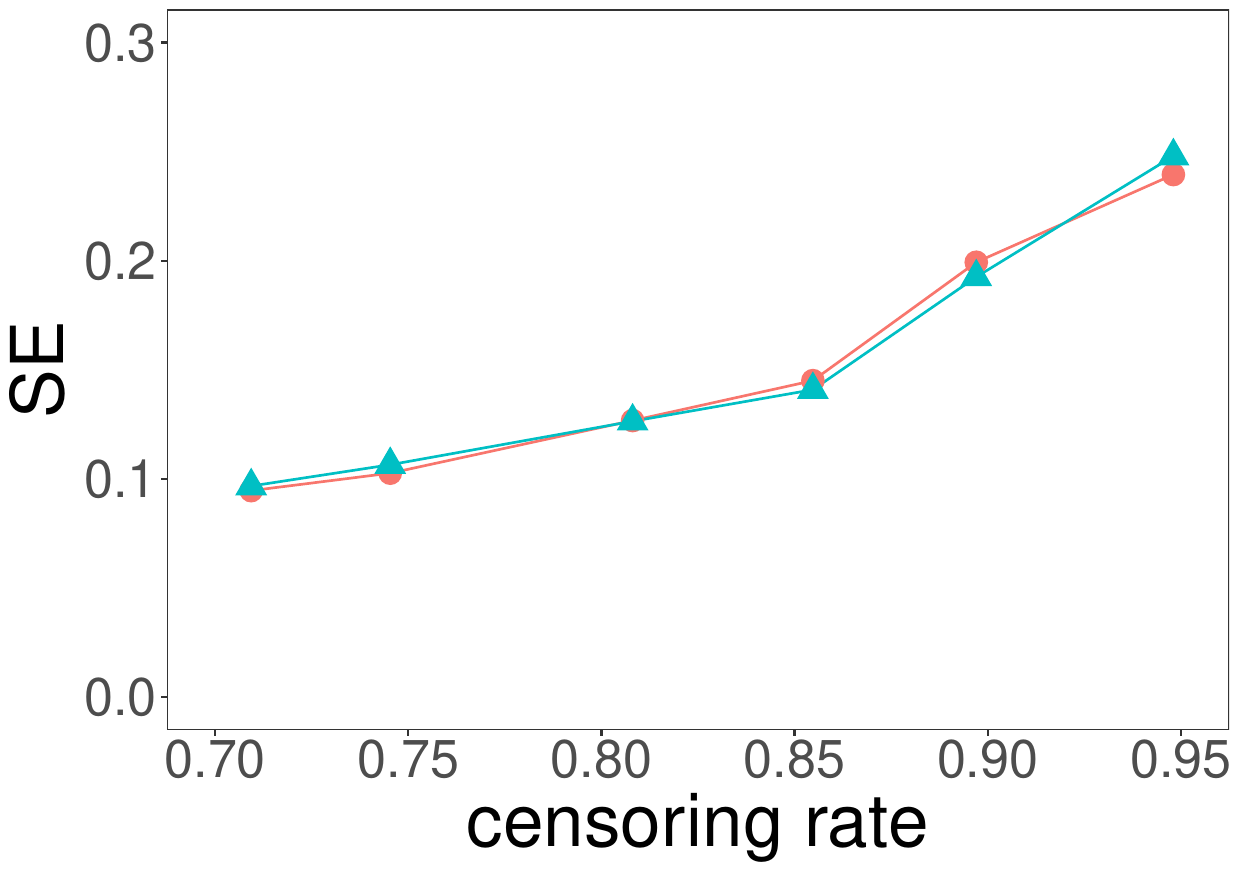}
\subcaption{$\eta$ for RDS}
        \label{figure:RDS-SE-eta}
\end{minipage} 
\begin{minipage}[c]{0.24\textwidth}
\centering
\includegraphics[width=\textwidth]{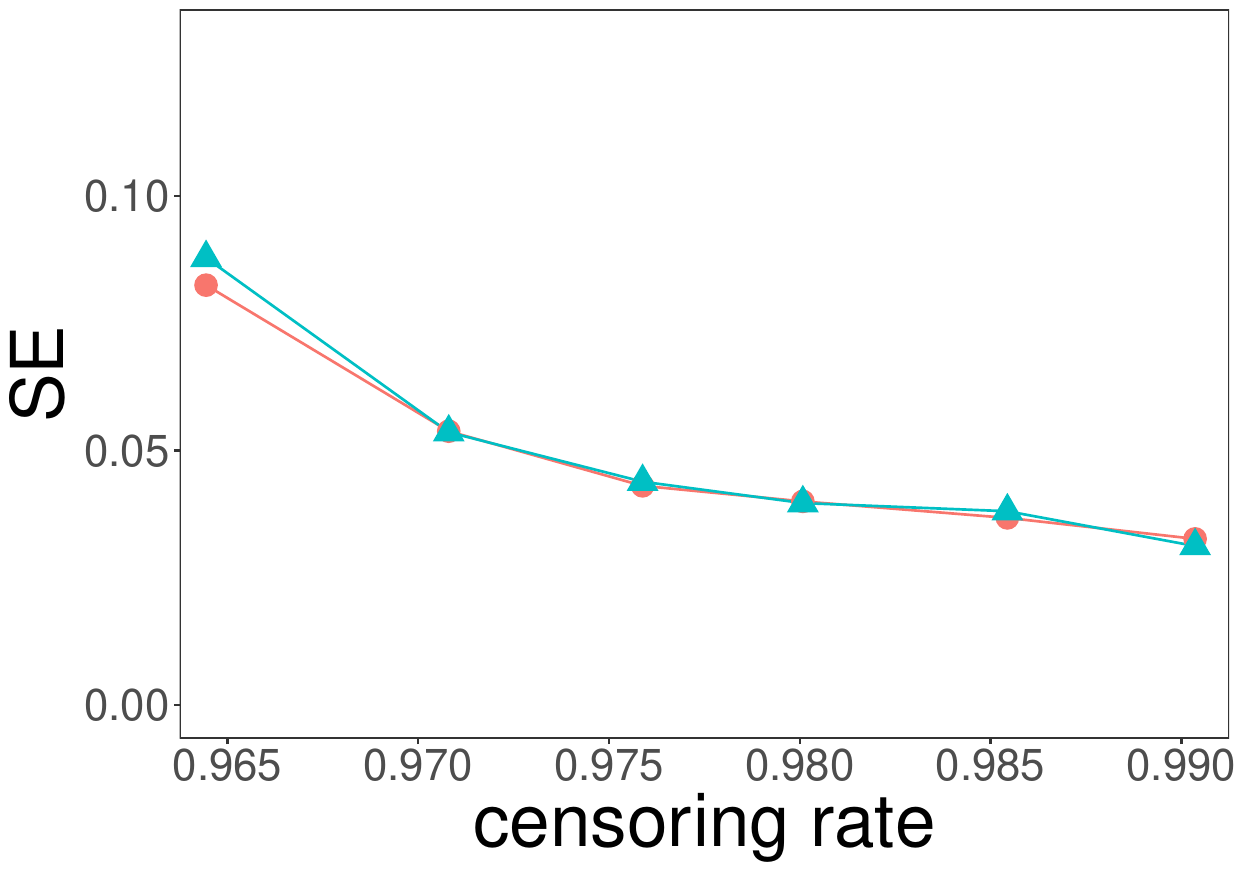}
\subcaption{$\beta$ for RDCS}
		\label{figure:RDCS-SE-beta}
\end{minipage}
\begin{minipage}[c]{0.24\textwidth}
\centering
\includegraphics[width=\textwidth]{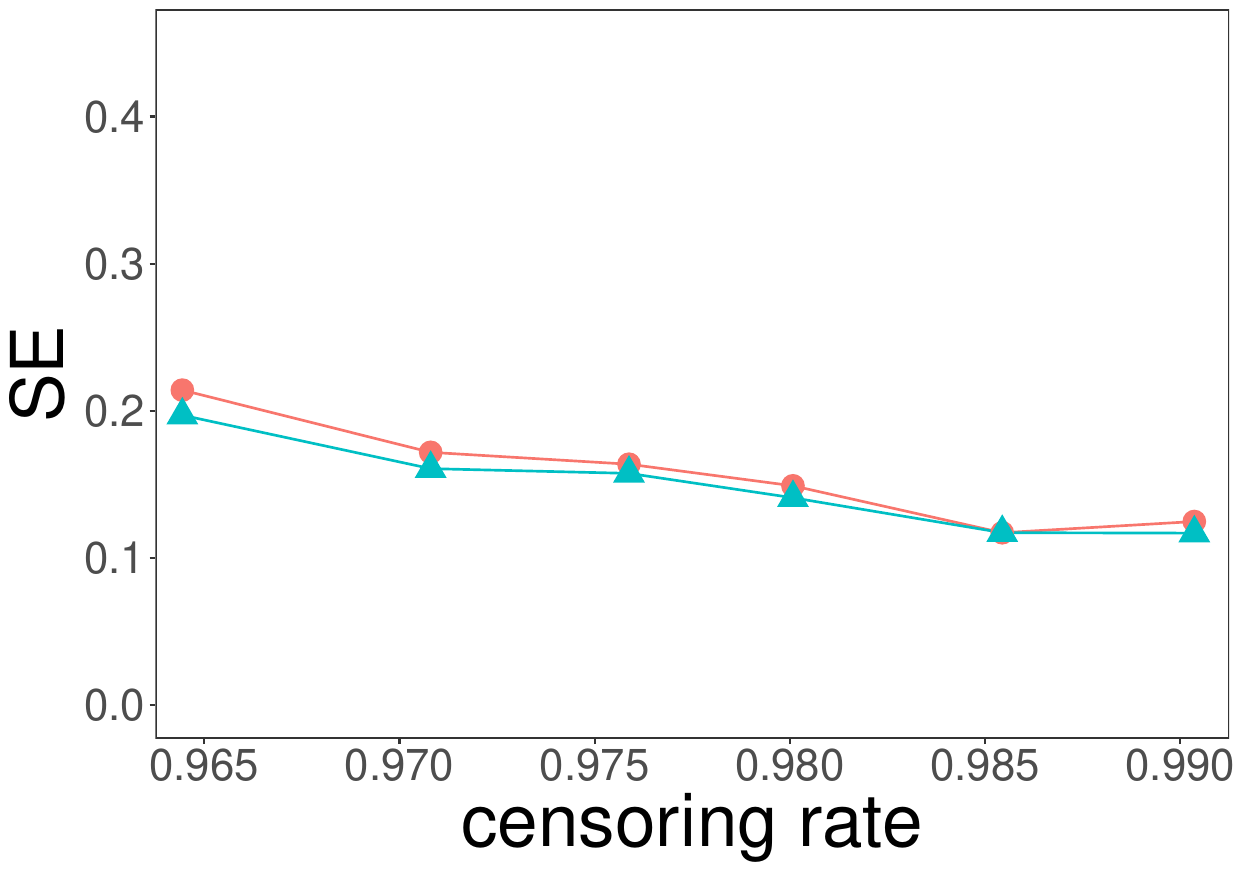}
\subcaption{$\eta$ for RDCS}
        \label{figure:RDCS-SE-eta}
\end{minipage} 
\caption{Comparison between ESE and SE}
	\label{figure:SE}
\end{figure}

\begin{table*}[!ht]
\centering
\caption{The simulation results of RDCS under Weibull distribution when $\alpha$ changes, where $r=1000$, $r_0=400$ and $n_0=1000$ are fixed, $n$ varies to change $\alpha$.}\label{RDCS_n}
\begin{tabular}{ccclllllccc}
\toprule[\heavyrulewidth]
\multirow{2}{*}{n} & \multirow{2}{*}{$\alpha$} & \multirow{2}{*}{r} & \multicolumn{1}{c}{\multirow{2}{*}{}} & \multicolumn{2}{c}{RMSE} & \multicolumn{2}{c}{Bias} & \multicolumn{3}{c}{Time(s)} \\
\cmidrule(lr){5-6} \cmidrule(lr){7-8} \cmidrule(lr){9-11}
 &  &  & \multicolumn{1}{c}{} & \multicolumn{1}{c}{RDCS} & \multicolumn{1}{c}{UNIF} & \multicolumn{1}{c}{RDCS} & \multicolumn{1}{c}{UNIF} & RDCS & UNIF & FULL \\ 
 \midrule[\heavyrulewidth]
\multirow{7}{*}{50000} & \multirow{7}{*}{0.98} & \multirow{2}{*}{1500} & $\beta$ & \textbf{0.0702} & 0.4217 & \textbf{0.0702} & 0.1300 & \multirow{2}{*}{0.079} & \multirow{2}{*}{0.015} & \multirow{7}{*}{1.109} \\
 &  &  & $\eta$ & \textbf{0.3043} & 1.7514 & \textbf{-0.3043} & 0.0419 &  &  &  \\
\noalign{\vskip 1.5mm}
 &  & \multirow{2}{*}{2000} & $\beta$ & \textbf{0.0702} & 0.3257 & \textbf{0.0702} & 0.0726 & \multirow{2}{*}{0.098} & \multirow{2}{*}{0.048} &  \\
 &  &  & $\eta$ & \textbf{0.3044} & 1.4373 & \textbf{-0.3044} & 0.0466 &  &  &  \\
\noalign{\vskip 1.5mm}
 &  & \multirow{2}{*}{2500} & $\beta$ & \textbf{0.0702} & 0.3065 & \textbf{0.0702} & 0.0824 & \multirow{2}{*}{0.104} & \multirow{2}{*}{0.060} &  \\
 &  &  & $\eta$ & \textbf{0.3044} & 1.1446 & \textbf{-0.3044} & -0.0447 &  &  &  \\ \hline
 
\multirow{7}{*}{$10^5$} & \multirow{7}{*}{0.99} & \multirow{2}{*}{1500} & $\beta$ & \textbf{0.0228} & 0.8186 & \textbf{0.0228} & 0.2096 & \multirow{2}{*}{0.113} & \multirow{2}{*}{0.018} & \multirow{7}{*}{1.938} \\
 &  &  & $\eta$ & \textbf{0.1703} & 7.0543 & \textbf{-0.1703} & 1.6671 &  &  &  \\
\noalign{\vskip 1.5mm}
 &  & \multirow{2}{*}{2000} & $\beta$ & \textbf{0.0227} & 0.5995 & \textbf{0.0227} & 0.1706 & \multirow{2}{*}{0.131} & \multirow{2}{*}{0.055} &  \\
 &  &  & $\eta$ & \textbf{0.1702} & 7.2795 & \textbf{-0.1702} & 1.1709 &  &  &  \\
\noalign{\vskip 1.5mm}
 &  & \multirow{2}{*}{2500} & $\beta$ & \textbf{0.0228} & 0.4730 & \textbf{0.0228} & 0.0838 & \multirow{2}{*}{0.137} & \multirow{2}{*}{0.067} &  \\
 &  &  & $\eta$ & \textbf{0.1702} & 2.8873 & \textbf{-0.1702} & 0.5443 &  &  &  \\ \hline
 
\multirow{7}{*}{$10^6$} & \multirow{7}{*}{0.999} & \multirow{2}{*}{1500} & $\beta$ & \textbf{0.0934} & 17.6393 & \textbf{-0.0933} & 5.2166 & \multirow{2}{*}{0.711} & \multirow{2}{*}{0.066} & \multirow{7}{*}{21.802} \\
 &  &  & $\eta$ & \textbf{0.7180} & 188.9546 & \textbf{0.7164} & 61.9186 &  &  &  \\
\noalign{\vskip 1.5mm}
 &  & \multirow{2}{*}{2000} & $\beta$ & \textbf{0.0932} & 6.7504 & \textbf{-0.0931} & 2.4213 & \multirow{2}{*}{0.755} & \multirow{2}{*}{0.091} &  \\
 &  &  & $\eta$ & \textbf{0.7139} & 231.5655 & \textbf{0.7130} & 85.5174 &  &  &  \\
\noalign{\vskip 1.5mm}
 &  & \multirow{2}{*}{2500} & $\beta$ & \textbf{0.0931} & 12.5579 & \textbf{-0.0930} & 2.4348 & \multirow{2}{*}{0.770} & \multirow{2}{*}{0.117} &  \\
 &  &  & $\eta$ & \textbf{0.7131} & 226.7329 & \textbf{0.7125} & 94.8492 &  &  & \\
 \bottomrule[\heavyrulewidth]
\end{tabular}
\end{table*}

To demonstrate how the proposed methods perform when the censoring rate changes, Figure \ref{exp}, \ref{weibull} and \ref{GLFP} plot the relationship between $\log$(RMSE) of RDCS, RDS, and uniform subsampling and censoring rate $\alpha$ of full data, under different sample size $r$ and sample size of pilot data $r_0$. We set the full data size $n = 50000$. Note that the RDCS method needs to collect all the uncensored data into the subsample ($r > n\alpha$). As $r$ is fixed, if $n$ is too large (e.g., $n = 10^6$), the appropriate censoring rate will be larger than 0.99. In that case, because the censoring rates of the other two methods RDS and uniform subsampling vary from 0.80 to 0.99, it is quite difficult to compare the plots of three alternative methods in one figure. Results of RDS and uniform subsampling are shown between $\alpha=0.80$ to $\alpha=0.99$ and the results of RDCS are shown between $\alpha=0.95$ to $\alpha=0.99$. We can find that RDS  performs consistently better than uniform subsampling. When the censoring rate is high enough to use the RDCS, RDCS has the smallest RMSE among the three. So RDCS is recommended when $\alpha$ is large. 

Additionally, we verify the variance estimation proposed in Section \ref{Variance Estimation} by comparing the mean estimated standard errors (ESE) and the empirical standard errors (SE). Let $r=2000$, $r_0=500$ and set the Weibull distribution model with true parameters $(\beta,\eta)=(2,4)$. All the experiments are repeated 500 times. Using RDS and RDCS, we calculate the SE of $\beta$ and $\eta$ respectively. Besides, in every repetition, the ESE are calculated using the diagonal value of $\frac{1}{r}\hat{\boldsymbol{V}}_{RDS}$ and $\frac{1}{r}\hat{\boldsymbol{V}}_{RDCS}$, and we compute the mean value as the final ESE. Figure \ref{figure:RDS-SE-beta}, \ref{figure:RDS-SE-eta}, \ref{figure:RDCS-SE-beta} and \ref{figure:RDCS-SE-eta} show the results under different $\alpha$. Note that with the increase of $\alpha$, the SE of RDS becomes large, but the SE of RDCS counter-intuitively goes down. Meanwhile, when $\alpha$ is near 0.96, there are some subtle differences between SE and ESE when we use the RDCS method. That is because the proposed estimation of covariance matrix $\hat{\boldsymbol{V}}_{RDCS}$ is based on the assumption that $n_0 \ll r$. However, this assumption does not fit well when $\alpha$ is relatively low and will be gradually satisfied when $\alpha$ becomes higher. Nevertheless, the maximum difference between ESE and SE is still minimal. Hence, the proposed ESE can still approximate SE quite well, which shows the validity of our theories.

To investigate the behavior of the alternative methods under a situation where the censoring rate approaches $1$, we fix $r_0=400$ and let $n=50000$, $10^5$ and $10^6$. 
The corresponding censoring rates are 0.98, 0.99, and 0.999 respectively.
We select $r=1500, 2000, 2500$ and $n_0=1000$, i.e., we always let the uncensored data of size $1000$ no matter how much data we have. 
The main purpose of conducting this simulation is to illustrate the importance of including all complete data for a subsampling algorithm when complete data is rare.
Under the Weibull distribution model, we present the estimation results of RDCS and uniform subsampling in Table \ref{RDCS_n}. With the increase of $n$, the RMSE and empirical bias of uniform subsampling-based estimation increase significantly, which is unacceptable. By contrast, the results of RDCS are dramatically better and \textit{stable}. This is because when $n_0$ remains constant, and the censoring rate approaches $1$ with $n\to \infty$, conventional methods will ignore most complete data as their appearance probability approaches $0$, leading to significant information loss. On the other hand, the RDCS is not affected by this aspect, as all the uncensored data are retained. Except for this, the mean calculation time of RDCS is much faster than that of full data.

\section{Conclusion}
In this article, two subsampling methods, RDS and RDCS, are proposed for efficiently estimating the lifetime model with massive reliability data. Asymptotic properties of these two subsampling-based estimators are derived. This is fundamental to calculating optimal subsampling probabilities. 
The corresponding algorithms are established through L-optimality to obtain the optimal subsampling probabilities, which also guarantee the asymptotic properties.
A real data example and simulation studies are employed to demonstrate the effectiveness of the proposed methods. In particular, the proposed methods are applied to three different lifetime distributions, and all the results outperform the uniform subsampling-based estimators in the aspects of RMSE and bias. As full data volume explodes, our methods become even faster than the uniform subsampling. In summary, the proposed methods dramatically reduce the computation burden for the massive dataset without loss of accuracy.

In comparing our methods to existing ones, notable distinctions arise: (\romannumeral1) \citet{yang2022} and \citet{Zuo2021} optimize sampling probabilities for specific survival models. 
They do not address highly censored cases, a gap our methods adeptly fill. 
(\romannumeral2) \citet{Wang2020rare} and \citet{Wang_2021} considered the subsampling methods for the logistic regression model with rare events. \citet{Wang2020rare} compared the efficiency between undersampling or oversampling and recommended using undersampling methods, while \citet{Wang_2021} obtained the optimal sampling probability and improved it by correcting log odds of sampled data. Besides, the data they dealt with are uncensored. However, our subsampling methods are designed to analyze heavily censored reliability data with lifetime distribution models.
(\romannumeral3) \citet{Zhang2022} and \citet{Keret2023} proposed the subsampling-based maximum partial likelihood estimation for the Cox proportional hazards model. The model assumes that the ratio of hazards of two subjects is constant and relies on the covariates. \citet{Zhang2022} studied the regular setting where uncensored data are not rare compared with the observed censoring times, and \citet{Keret2023} considered a more extreme case where uncensored data are rare.
In contrast, our methods focus on the characteristics of failure time and can be applied to various lifetime distribution models. Meanwhile, we derive the optimal subsampling methods for both the regular setting and the rare events case.

Nonetheless, our methods beckon further refinement. 
For example, in real-world applications, outliers exist in the raw data. The bias between estimators and true value will appear as the existence of outliers might affect the asymptotic properties of maximum likelihood estimators. Therefore, it is necessary to study the optimal subsampling methods with the presence of outliers.

\normalem 
\bibliographystyle{apalike}
\bibliography{arxiv.bib} 
\section{Appendix}
The supplementary materials include fout parts. Appendix \ref{appendix:Exmaples} verifies that the assumptions mentioned in the article can be satisfied under some commonly used distributions. 
Appendix \ref{appendix:Theorems} derive the asymptotic normality property between the estimators and the true value.
Appendix \ref{extra_sim} shows the impact of censoring on optimal subsampling probabilities through extra simulations.
Appendix \ref{appendix:proof} gives the proofs of all the theorems and propositions in this article.
\subsection{Justification for Assumptions In Section 3} \label{appendix:Exmaples}

\justifying{Exponential Distribution: $f(t;\theta) = \theta \exp(-\theta t)$, $F(t;\theta) = 1-\exp(-\theta x)$. $l_i(\theta) = (1-C_i)log\lambda - \lambda t_i + \lambda t_{il}$, $\frac{\partial l_i}{\partial \theta} = \frac{1-C_i}{\theta} - t_i + t_{il},\frac{\partial^2 l_i}{\partial \theta^2} = \frac{C_i -1}{\theta^2},\frac{\partial^3 l_i}{\partial \theta^3} = \frac{2(1-C_i)}{\theta^3}$.  If $\pi_i = 1/n$ for every $i$, 

$(A1)$: $M_t = \frac{1}{n} \sum_{i=1}^n -\frac{1-C_i}{\hat{\theta}^2}$ is negative, $\frac{1}{n^2}\sum_{i=1}^n \frac{1}{\pi_i} \ddot{l}_i(\hat{{\theta}})^2 = \frac{1}{n} \sum_{i=1}^n (\frac{1-C_i}{\hat{\theta}^2})^2 \leq \frac{1}{\hat{\theta}^4} = O_{p|\mathcal{D}_n}(1)$. 
    
$(A2)$: $|\frac{1}{n^2}\sum_{i=1}^n\frac{1}{\pi_i}\dot{l}_i^2(\hat{\theta})| = |\frac{1}{n} \sum_{i=1}^n [\frac{1-C_i}{\hat{\theta}^2} +(t_i-t_{il})^2 - \frac{2(1-C_i)(t_i-t_{il})}{\hat{\theta}}]| \leq \frac{1}{\hat{\theta}^2} + L^2 + \frac{2L}{\hat{\theta}} = O_{p|\mathcal{D}_n}(1)$, where $L$ is the time from the beginning to the end of the observation. 
    
$(A3)$: $\frac{1}{n}\Vert\frac{\partial^3 l_i(\hat{\theta})}{\partial \theta^3}\Vert = \frac{1}{n} \sum_{i=1}^n \frac{2(1-C_i)}{\hat{\theta}^3} \leq \frac{1}{\hat{\theta}^3} = O_{p|\mathcal{D}_n}(1)$. 
    
$(A4)$: $\frac{1}{n^{2+\delta}} \sum_{i=1}^n \frac{1}{\pi^{1+\delta}} \Vert \dot{l}_i (\hat{\theta})\Vert^{2+\delta} = \frac{1}{n} \sum_{i=1}^n |\frac{1-C_i}{\hat{\theta}} - (t_i - t_{il})|^{2+\delta} \leq (\frac{1}{\hat{\theta}} + L)^{2+\delta} = O_{p}(1)$. 
    
Hence, if the lifetime distribution is exponential distribution, Assumptions \ref{ass1}-\ref{ass4} can be easily satisfied.    }

\justifying{
    Weibull Distribution: $f(t;\boldsymbol{\theta}) = \frac{\beta}{\eta} (\frac{t}{\eta})^{\beta-1} \exp(-(\frac{t}{\eta})^\beta)$, $F(t;\boldsymbol{\theta}) = 1-\exp(-(\frac{t}{\eta})^\beta)$, $\boldsymbol{\theta} = (\beta, \eta)$. $l_i(\boldsymbol{\theta}) = (1- C_i)[log\beta - \beta log\eta +\beta log t_i - log t_i]+(\frac{t_{il}}{\eta})^\beta - (\frac{t_{i}}{\eta})^\beta$, 
        \begin{align*}
            &\frac{\partial l_i}{\partial \beta} = (1-C_i)(\frac{1}{\beta} - log\eta + log t_i) + (\frac{t_{il}}{\eta})^\beta log \frac{t_{il}}{\eta} - (\frac{t_{i}}{\eta})^\beta log \frac{t_{i}}{\eta}, \\
            &\frac{\partial l_i}{\partial \eta} = -(1-C_i)\frac{\beta}{\eta} + \frac{\beta t_i ^ \beta}{\eta ^{\beta + 1}} - \frac{\beta t_{il} ^ \beta}{\eta ^{\beta + 1}},\\
            &\frac{\partial^2 l}{\partial \beta^2 } = \frac{-(1-C_i)}{\beta^2}+\beta(\frac{t_{il}}{\eta})^{\beta-1}(log t_{il} - log\eta)\\
            &\quad-\beta(\frac{t_i}{\eta})^{\beta-1}(log t_i - log\eta),\\
            &\frac{\partial^2 l}{\partial \beta \partial \eta } = \frac{-(1-C_i)}{\eta} + \beta t_i ^\beta + \frac{\beta(\beta +1)t_{il}^\beta}{\eta^{\beta+2}} - \frac{\beta (\beta+1) t_i^\beta}{\eta^{\beta+2}},\\
            &\frac{\partial^2 l}{\partial \eta^2} = \frac{(1-C_i)\beta}{\eta^2} - \frac{\beta(\beta+1)t_i^\beta}{\eta^(\beta+2)} + \frac{\beta(\beta+1)t_{il}^\beta}{\eta^(\beta+2)}.\\
            &\frac{\partial^3 l}{\partial \beta^3} = \frac{(1-C_i)}{\eta^3} + log\frac{t_{il}}{\eta}[(\frac{t_{il}}{\eta})^{\beta-1} + \beta (\frac{t_{il}}{\eta})^{\beta-1} log\frac{t_{il}}{\eta}] - \\
            &log\frac{t_i}{\eta}[(\frac{t_i}{\eta})^{\beta-1} + \beta(\frac{t_i}{\eta})^{\beta-1}log\frac{t_i}{\eta}],\\
            &\frac{\partial^3 l}{\partial^2 \beta \partial \eta} = \frac{\beta(-\beta+1)t_{il}^{\beta-1}}{\eta^\beta} log\frac{t_{il}}{\eta} - \frac{\beta t_{il}^{\beta-1}}{\eta^\beta} \\
            &\quad- \frac{\beta(-\beta+1)t_{i}^{\beta-1}}{\eta^\beta} log\frac{t_{i}}{\eta} + \frac{\beta t_{i}^{\beta-1}}{\eta^\beta},\\
            &\frac{\partial^3 l}{\partial\beta\partial\eta^2} = \frac{1-C_i}{\eta^2} - \frac{\beta(\beta+1)(\beta+2)t_{il}^\beta}{\eta^{\beta+3}} + \frac{\beta(\beta+1)(\beta+2)t_i^\beta}{\eta^{\beta+3}},\\
            &\frac{\partial^3 l}{\partial\eta^3} = \frac{-2(1-C_i)\beta}{\eta^3}+\frac{\beta(\beta+1)(\beta+2)t_i^\beta}{\eta^{\beta+3}} \\
            &\quad- \frac{\beta(\beta+1)(\beta+2)t_{il}^\beta}{\eta^{\beta+3}}.
        \end{align*} \notag
    If $\pi_i = 1/n$ for every $i$, $t_0 \leq min(t_{il}) < max(t_i) \leq T_0$, $\boldsymbol{M}_t = \frac{1}{n} \sum_{i=1}^n \frac{\partial^2 l_i(\hat{\boldsymbol{\theta}})}{\partial \boldsymbol{\theta}^2}  $ is negative definite. Similarly, as $\hat{\boldsymbol{\theta}}$ is fixed and $t_{il}, t_i, i=1,2,...,n$ are finite, it is easy to see that Assumptions \ref{ass1}-\ref{ass4} are all satisfied.
    }

\subsection{Properties Between Estimators And True Value}\label{appendix:Theorems}
 
\begin{assumption} \label{ass9}
   The subsampling probabilities satisfy $\max_{i=1,2,...,n} (n\pi_i)^{-1} = O_p(1)$.
\end{assumption}
\begin{thm} \label{thm:general-true}
    When Theorem \ref{thm:genenral-mle} is established, assume $\boldsymbol{\Lambda}_g(\boldsymbol{\theta}_0)=\frac{1}{n^2}\sum_{i=1}^n\frac{1}{\pi_i}\dot{l}^2_i(\boldsymbol{\theta}_0)$ converges to $\boldsymbol{\Lambda}_{1g}(\boldsymbol{\theta}_0)$ as $n,r \rightarrow \infty$, under the Assumptions \ref{ass5}-\ref{ass7} and \ref{ass9},
    \begin{align}  
        \sqrt{r}\boldsymbol{V}_{0g}^{-\frac{1}{2}}(\tilde{\boldsymbol{\theta}}_g - \boldsymbol{\theta}_0)\xrightarrow{d} N(0,\boldsymbol{I}),
    \end{align} 
   where $\boldsymbol{V}_{0g}=\boldsymbol{M}_{0g}^{-1}\boldsymbol{\Lambda}_{0g}\boldsymbol{M}_{0g}^{-1}$, $\boldsymbol{M}_{0g} = \mathbb{E}\{ \ddot{l}(\boldsymbol{x}, \boldsymbol{\theta}_0 )\}$, $\boldsymbol{\Lambda}_{0g} = \boldsymbol{\Lambda}_{1g}(\boldsymbol{\theta}_0) + \gamma\boldsymbol{\Lambda}(\boldsymbol{\theta}_0)$ and $\gamma = \lim_{r,n\rightarrow\infty} r/n$. 
\end{thm}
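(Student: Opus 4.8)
The plan is to build the unconditional limit for $\tilde{\boldsymbol{\theta}}_g - \boldsymbol{\theta}_0$ out of two pieces that are already (or nearly) in hand, by using the decomposition
\begin{equation}
\sqrt{r}(\tilde{\boldsymbol{\theta}}_g - \boldsymbol{\theta}_0) = \sqrt{r}(\tilde{\boldsymbol{\theta}}_g - \hat{\boldsymbol{\theta}}) + \sqrt{\tfrac{r}{n}}\,\sqrt{n}(\hat{\boldsymbol{\theta}} - \boldsymbol{\theta}_0).
\end{equation}
The first summand is governed by the subsampling randomness and is controlled by Theorem \ref{thm:genenral-mle} conditionally on $\mathcal{D}_n$, while the second is governed by the data-generating randomness and is the classical MLE term. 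First I would establish the ordinary CLT for the full-data MLE: under Assumptions \ref{ass5}--\ref{ass7} (compact parameter space, integrable and Lipschitz second derivatives, finite fourth moments of the score), a Taylor expansion of the full-data score around $\boldsymbol{\theta}_0$ together with the law of large numbers and the Lindeberg CLT gives $\sqrt{n}(\hat{\boldsymbol{\theta}} - \boldsymbol{\theta}_0) \xrightarrow{d} N(0, \boldsymbol{M}_{0g}^{-1}\boldsymbol{\Lambda}(\boldsymbol{\theta}_0)\boldsymbol{M}_{0g}^{-1})$, where $\boldsymbol{M}_{0g} = \mathbb{E}\{\ddot{l}(\boldsymbol{x},\boldsymbol{\theta}_0)\}$ and $\boldsymbol{\Lambda}(\boldsymbol{\theta}_0) = \mathbb{E}\{\dot{l}(\boldsymbol{x},\boldsymbol{\theta}_0)^2\}$. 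Scaling by $\sqrt{r/n} \to \sqrt{\gamma}$ then yields the limit $N(0, \gamma\,\boldsymbol{M}_{0g}^{-1}\boldsymbol{\Lambda}(\boldsymbol{\theta}_0)\boldsymbol{M}_{0g}^{-1})$ for the second summand.

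For the first summand I would upgrade the conditional statement of Theorem \ref{thm:genenral-mle} to the present setting by showing that its conditional covariance $\boldsymbol{V}_g = \boldsymbol{M}_g^{-1}\boldsymbol{\Lambda}_g\boldsymbol{M}_g^{-1}$ converges, in probability over $\mathcal{D}_n$, to a deterministic matrix. Here $\boldsymbol{M}_g = \frac{1}{n}\sum_i\ddot{l}_i(\hat{\boldsymbol{\theta}}) \to \boldsymbol{M}_{0g}$ by consistency of $\hat{\boldsymbol{\theta}}$ and the uniform law of large numbers, with the Lipschitz bound in Assumption \ref{ass6} transferring the limit from $\hat{\boldsymbol{\theta}}$ to $\boldsymbol{\theta}_0$; and $\boldsymbol{\Lambda}_g(\hat{\boldsymbol{\theta}}) \to \boldsymbol{\Lambda}_{1g}(\boldsymbol{\theta}_0)$ by the assumed convergence of $\boldsymbol{\Lambda}_g(\boldsymbol{\theta}_0)$ combined with a consistency-plus-smoothness bridge from $\hat{\boldsymbol{\theta}}$ to $\boldsymbol{\theta}_0$, where Assumption \ref{ass9} keeps the weights $(n\pi_i)^{-1}$ bounded during this step. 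Consequently $\sqrt{r}(\tilde{\boldsymbol{\theta}}_g - \hat{\boldsymbol{\theta}}) \xrightarrow{d} N(0, \boldsymbol{M}_{0g}^{-1}\boldsymbol{\Lambda}_{1g}(\boldsymbol{\theta}_0)\boldsymbol{M}_{0g}^{-1})$ conditionally, with a deterministic limiting variance.

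The hard part will be combining the two summands, since one converges conditionally on $\mathcal{D}_n$ while the other converges unconditionally; the essential point is their asymptotic independence. I would make this rigorous through a conditional characteristic-function argument: writing
\begin{equation}
\mathbb{E}\!\left[e^{\,i\boldsymbol{s}^\top\sqrt{r}(\tilde{\boldsymbol{\theta}}_g - \boldsymbol{\theta}_0)}\right] = \mathbb{E}\!\left[e^{\,i\boldsymbol{s}^\top\sqrt{r}(\hat{\boldsymbol{\theta}} - \boldsymbol{\theta}_0)}\,\mathbb{E}\!\left(e^{\,i\boldsymbol{s}^\top\sqrt{r}(\tilde{\boldsymbol{\theta}}_g - \hat{\boldsymbol{\theta}})}\,\big|\,\mathcal{D}_n\right)\right],
\end{equation}
the inner conditional characteristic function converges in probability to the constant $\exp(-\tfrac12\boldsymbol{s}^\top\boldsymbol{M}_{0g}^{-1}\boldsymbol{\Lambda}_{1g}(\boldsymbol{\theta}_0)\boldsymbol{M}_{0g}^{-1}\boldsymbol{s})$ by the conditional CLT just established. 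Since this factor is bounded by one and converges to a nonrandom limit, a bounded-convergence argument lets me pull it out of the outer expectation, leaving the ordinary characteristic function of $\sqrt{r}(\hat{\boldsymbol{\theta}} - \boldsymbol{\theta}_0)$, which converges to $\exp(-\tfrac12\gamma\,\boldsymbol{s}^\top\boldsymbol{M}_{0g}^{-1}\boldsymbol{\Lambda}(\boldsymbol{\theta}_0)\boldsymbol{M}_{0g}^{-1}\boldsymbol{s})$. The product of the two Gaussian characteristic functions is exactly that of $N(0,\boldsymbol{V}_{0g})$ with $\boldsymbol{V}_{0g} = \boldsymbol{M}_{0g}^{-1}[\boldsymbol{\Lambda}_{1g}(\boldsymbol{\theta}_0) + \gamma\boldsymbol{\Lambda}(\boldsymbol{\theta}_0)]\boldsymbol{M}_{0g}^{-1}$, so by Lévy's continuity theorem $\sqrt{r}(\tilde{\boldsymbol{\theta}}_g - \boldsymbol{\theta}_0) \xrightarrow{d} N(0,\boldsymbol{V}_{0g})$, and pre-multiplying by $\boldsymbol{V}_{0g}^{-1/2}$ delivers the stated standardized form. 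The main subtlety to monitor is the uniform boundedness needed to justify the interchange of limit and outer expectation given only convergence in probability of the inner factor, together with verifying that the deterministic limit of $\boldsymbol{V}_g$ is stable under the passage from $\hat{\boldsymbol{\theta}}$ to $\boldsymbol{\theta}_0$.
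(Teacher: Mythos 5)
Your proposal is correct and follows essentially the same route as the paper's own proof: the same decomposition $\sqrt{r}(\tilde{\boldsymbol{\theta}}_g-\boldsymbol{\theta}_0)=\sqrt{r}(\tilde{\boldsymbol{\theta}}_g-\hat{\boldsymbol{\theta}})+\sqrt{r/n}\,\sqrt{n}(\hat{\boldsymbol{\theta}}-\boldsymbol{\theta}_0)$, the same conditional characteristic-function factorization with the inner factor converging in probability to a constant, the same Lipschitz-plus-H\"older bridge (using Assumption \ref{ass9} to bound $(n\pi_i)^{-1}$) to show $\boldsymbol{M}_g\to\boldsymbol{M}_{0g}$ and $\boldsymbol{\Lambda}_g(\hat{\boldsymbol{\theta}})\to\boldsymbol{\Lambda}_{1g}(\boldsymbol{\theta}_0)$, and the same dominated-convergence step to combine the two pieces. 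No gaps.
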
 
    
Theorem \ref{thm:general-true} shows that the unconditional convergence rate of $\tilde{\boldsymbol\theta}_g$ to the true parameter $\boldsymbol{\theta}_0$ is at the rate $O_p(r^{-1/2})$,  $\tilde{\boldsymbol\theta}_g$ can be close enough to the true parameter when $r\rightarrow \infty$. The convergence rate of the estimate is minimally influenced by the size of full data $n$ and dominated by the size of the subsampled subset. It is worth noting that the convergence rate of $\tilde{\boldsymbol{\theta}}_g - \boldsymbol{\theta}_0$ is the same as $\tilde{\boldsymbol{\theta}}_g - \hat{\boldsymbol{\theta}}$. 
In particular, $\tilde{\boldsymbol{\theta}}_g - \boldsymbol{\theta}_0 = \tilde{\boldsymbol{\theta}}_g - \hat{\boldsymbol{\theta}} + \hat{\boldsymbol{\theta}} - \boldsymbol{\theta}_0$, and it is well-known that $\hat{\boldsymbol{\theta}} - \boldsymbol{\theta}_0$ has the convergence rate of $O(n^{-1/2})$. Intuitively, when $r = o(n)$, the convergence rate of $\hat{\boldsymbol{\theta}} - \boldsymbol{\theta}_0$ can be neglected compared to the convergence rate of $\tilde{\boldsymbol{\theta}}_g - \hat{\boldsymbol{\theta}}$. Hence, it is rational and convenient to choose optimal subsampling probabilities merely through the properties between $\tilde{\boldsymbol\theta}_g$ and $ \hat{\boldsymbol\theta}$.

\begin{thm} \label{thm:censor-true}
\justifying{
    When Theorem 3 is established and $n,r \rightarrow \infty$, assume
    $\boldsymbol{\Lambda}_c(\boldsymbol{\theta}_0) = \frac{1}{n^2} \sum_{i=n_0+1}^n\\
    \frac{1}{\tilde{\pi}_i} \dot{l}_i(\boldsymbol{\theta}_0)^2 - \left[ \frac{1}{n}\sum_{i=n_0+1}^n \dot{l}_i(\boldsymbol{\theta}_0) \right]^2 \rightarrow \boldsymbol{\Lambda}_{1c}(\boldsymbol{\theta}_0).$
    Under the Assumptions \ref{ass5}-\ref{ass8}, we have}
    \begin{align}  
        \sqrt{r}\boldsymbol{V}_{0c}^{-\frac{1}{2}}(\tilde{\boldsymbol{\theta}}_c - \boldsymbol{\theta}_0)\xrightarrow{d} N(0,\boldsymbol{I}), 
    \end{align} 
   where $\boldsymbol{V}_{0c}=\boldsymbol{M}_{0g}^{-1}\boldsymbol{\Lambda}_{0c}\boldsymbol{M}_{0g}^{-1}$, $\boldsymbol{\Lambda}_{0c} = \boldsymbol{\Lambda}_{1c}(\boldsymbol{\theta}_0) + \gamma\boldsymbol{\Lambda}(\boldsymbol{\theta}_0)$ and $\gamma = \lim_{n,r \rightarrow \infty} r/n$. 
\end{thm}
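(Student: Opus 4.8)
The plan is to start from the elementary decomposition
$\tilde{\boldsymbol{\theta}}_c - \boldsymbol{\theta}_0 = (\tilde{\boldsymbol{\theta}}_c - \hat{\boldsymbol{\theta}}) + (\hat{\boldsymbol{\theta}} - \boldsymbol{\theta}_0)$ and to regard the two summands as generated by two conditionally independent sources of randomness: the subsampling mechanism given the full data $\mathcal{D}_n$, and the data-generating mechanism that produces $\mathcal{D}_n$ itself. The first term is exactly the quantity governed by Theorem \ref{thm:censor-all}, while the second is the classical fluctuation of the full-data MLE. Since $r/n \to \gamma$, both pieces live at the scale $r^{-1/2}$, so after multiplying by $\sqrt{r}$ I would combine two asymptotically independent Gaussian limits; their variances add and produce $\boldsymbol{V}_{0c} = \boldsymbol{M}_{0g}^{-1}\bigl(\boldsymbol{\Lambda}_{1c}(\boldsymbol{\theta}_0) + \gamma\boldsymbol{\Lambda}(\boldsymbol{\theta}_0)\bigr)\boldsymbol{M}_{0g}^{-1}$.

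For the subsampling term I would invoke Theorem \ref{thm:censor-all}: conditionally on $\mathcal{D}_n$ (in probability), $\sqrt{r}\,\boldsymbol{V}_c^{-1/2}(\tilde{\boldsymbol{\theta}}_c-\hat{\boldsymbol{\theta}})\xrightarrow{d}N(0,\boldsymbol{I})$. By consistency $\hat{\boldsymbol{\theta}}\xrightarrow{p}\boldsymbol{\theta}_0$ (from Assumption \ref{ass5} via a uniform law of large numbers for the full-data likelihood) together with the smoothness in Assumption \ref{ass6}, the empirical Hessian $\boldsymbol{M}_g=\frac1n\sum_i\ddot{l}_i(\hat{\boldsymbol{\theta}})$ converges in probability to $\boldsymbol{M}_{0g}=\mathbb{E}\{\ddot{l}(\boldsymbol{x},\boldsymbol{\theta}_0)\}$; combined with the hypothesised limit $\boldsymbol{\Lambda}_c(\boldsymbol{\theta}_0)\to\boldsymbol{\Lambda}_{1c}(\boldsymbol{\theta}_0)$ and a Lipschitz bound (Assumptions \ref{ass6}--\ref{ass7}) allowing $\hat{\boldsymbol{\theta}}$ to be replaced by $\boldsymbol{\theta}_0$ inside $\boldsymbol{\Lambda}_c$, this yields $\boldsymbol{V}_c\xrightarrow{p}\boldsymbol{V}_{1c}:=\boldsymbol{M}_{0g}^{-1}\boldsymbol{\Lambda}_{1c}(\boldsymbol{\theta}_0)\boldsymbol{M}_{0g}^{-1}$. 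Hence, conditionally on $\mathcal{D}_n$ in probability, $\sqrt{r}(\tilde{\boldsymbol{\theta}}_c-\hat{\boldsymbol{\theta}})\xrightarrow{d}N(0,\boldsymbol{V}_{1c})$.

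For the MLE term I would use the standard expansion. Taylor-expanding the full-data score $\dot{l}_f$ about $\boldsymbol{\theta}_0$ and using $\dot{l}_f(\hat{\boldsymbol{\theta}})=0$ gives $\sqrt{n}(\hat{\boldsymbol{\theta}}-\boldsymbol{\theta}_0)=-\boldsymbol{M}_{0g}^{-1}\frac{1}{\sqrt{n}}\sum_{i=1}^n\dot{l}_i(\boldsymbol{\theta}_0)+o_p(1)$, and since $\mathbb{E}\{\dot{l}_i(\boldsymbol{\theta}_0)\}=0$ with $\mathrm{Var}\{\dot{l}_i(\boldsymbol{\theta}_0)\}=\boldsymbol{\Lambda}(\boldsymbol{\theta}_0)$ (Assumption \ref{ass7}), the classical CLT yields $\sqrt{n}(\hat{\boldsymbol{\theta}}-\boldsymbol{\theta}_0)\xrightarrow{d}N(0,\boldsymbol{M}_{0g}^{-1}\boldsymbol{\Lambda}(\boldsymbol{\theta}_0)\boldsymbol{M}_{0g}^{-1})$. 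Multiplying by $\sqrt{r/n}\to\sqrt{\gamma}$ gives $\sqrt{r}(\hat{\boldsymbol{\theta}}-\boldsymbol{\theta}_0)\xrightarrow{d}N(0,\gamma\,\boldsymbol{M}_{0g}^{-1}\boldsymbol{\Lambda}(\boldsymbol{\theta}_0)\boldsymbol{M}_{0g}^{-1})$.

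Finally I would combine the pieces through a conditional characteristic-function argument. For $\boldsymbol{t}\in\mathbb{R}^d$, the tower property gives $\mathbb{E}[\exp\{i\boldsymbol{t}^\top\sqrt{r}(\tilde{\boldsymbol{\theta}}_c-\boldsymbol{\theta}_0)\}]=\mathbb{E}[\exp\{i\boldsymbol{t}^\top\sqrt{r}(\hat{\boldsymbol{\theta}}-\boldsymbol{\theta}_0)\}\,\phi_n(\boldsymbol{t}\mid\mathcal{D}_n)]$, where $\phi_n(\boldsymbol{t}\mid\mathcal{D}_n)=\mathbb{E}[\exp\{i\boldsymbol{t}^\top\sqrt{r}(\tilde{\boldsymbol{\theta}}_c-\hat{\boldsymbol{\theta}})\}\mid\mathcal{D}_n]$ depends only on the subsampling randomness and $\hat{\boldsymbol{\theta}}-\boldsymbol{\theta}_0$ is $\mathcal{D}_n$-measurable. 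Writing $c=\exp\{-\tfrac12\boldsymbol{t}^\top\boldsymbol{V}_{1c}\boldsymbol{t}\}$, the step above gives $\phi_n(\boldsymbol{t}\mid\mathcal{D}_n)\xrightarrow{p}c$, so splitting $\mathbb{E}[\exp\{\cdot\}\phi_n]=c\,\mathbb{E}[\exp\{\cdot\}]+\mathbb{E}[\exp\{\cdot\}(\phi_n-c)]$ and bounding the remainder by $\mathbb{E}|\phi_n-c|\to0$ (dominated convergence, since $|\phi_n|\le1$) leaves $c\cdot\exp\{-\tfrac12\boldsymbol{t}^\top(\gamma\boldsymbol{M}_{0g}^{-1}\boldsymbol{\Lambda}(\boldsymbol{\theta}_0)\boldsymbol{M}_{0g}^{-1})\boldsymbol{t}\}$, the characteristic function of $N(0,\boldsymbol{V}_{0c})$; Lévy's continuity theorem then gives the claim, the cross-covariance vanishing because $\mathbb{E}[\tilde{\boldsymbol{\theta}}_c-\hat{\boldsymbol{\theta}}\mid\mathcal{D}_n]\approx0$. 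The main obstacle is making this decoupling rigorous: Theorem \ref{thm:censor-all} delivers the conditional CLT only \emph{in probability} over $\mathcal{D}_n$, so one must argue carefully (via the boundedness of $\phi_n$ and the $L^1$ convergence $\mathbb{E}|\phi_n-c|\to0$, or along subsequences) that the product of limits emerges even though $\sqrt{r}(\hat{\boldsymbol{\theta}}-\boldsymbol{\theta}_0)$ and $\phi_n$ are built on the same $\mathcal{D}_n$; a secondary technical point is the uniform replacement of $\hat{\boldsymbol{\theta}}$ by $\boldsymbol{\theta}_0$ in $\boldsymbol{\Lambda}_c$ and $\boldsymbol{M}_g$, which rests on the Lipschitz control of Assumption \ref{ass6} and the fourth-moment bound of Assumption \ref{ass7}.
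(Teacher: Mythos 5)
Your proposal is correct and follows essentially the same route as the paper: the decomposition $\sqrt{r}(\tilde{\boldsymbol{\theta}}_c-\boldsymbol{\theta}_0)=S_c+Y_c$, the replacement of $\hat{\boldsymbol{\theta}}$ by $\boldsymbol{\theta}_0$ in $\boldsymbol{M}_g$ and $\boldsymbol{\Lambda}_c$ via the Lipschitz and fourth-moment assumptions, the classical CLT for the full-data MLE, and the conditional characteristic-function/tower-property argument with the $\mathbb{E}\lvert\phi_n-c\rvert\to 0$ dominated-convergence step are all exactly the paper's proof. You also correctly isolate the one genuinely delicate point (upgrading the in-probability conditional CLT to the unconditional product of limits), which the paper handles by the same $L^1$ bound on the characteristic functions.
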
 

As stated, in practice, we let $r\ll n$. In that case, $\gamma=0$ and $\boldsymbol{\Lambda}_{0c} = \boldsymbol{\Lambda}_{1c}(\boldsymbol{\theta}_0)$. 

\subsection{Extra Simulations}\label{extra_sim}
\begin{figure}[htbp]
\centering
\begin{minipage}[c]{0.24\textwidth}
\centering
\includegraphics[width=\textwidth]{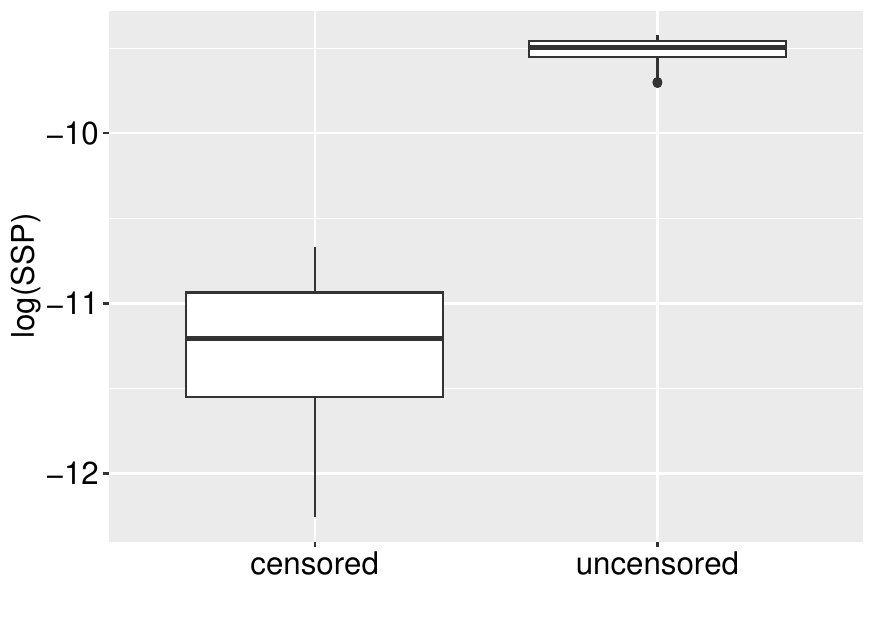}
\subcaption{Exponential distribution}
        \label{figure:SSP_exp}
\end{minipage} 
\begin{minipage}[c]{0.24\textwidth}
\centering
\includegraphics[width=\textwidth]{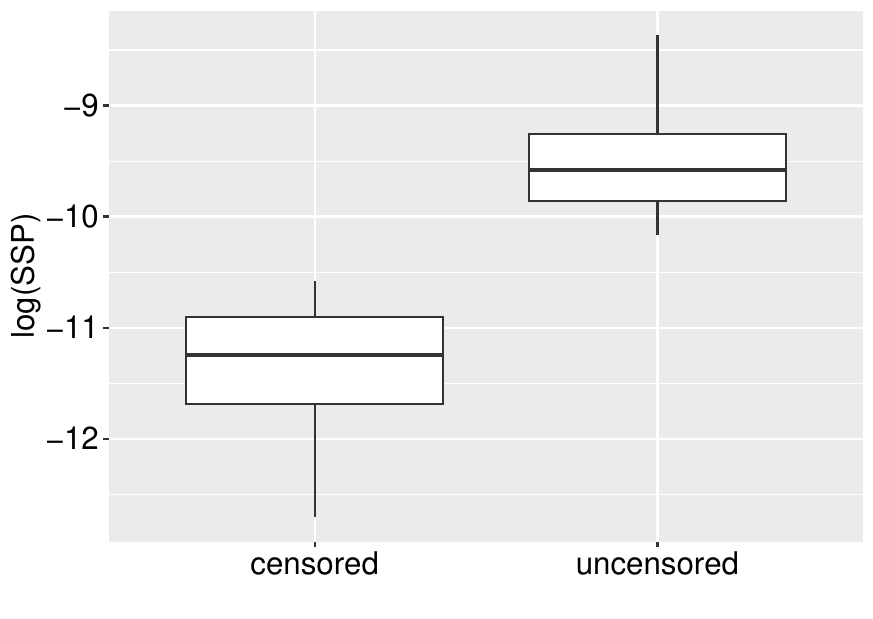}
\subcaption{Weibull distribution}
        \label{figure:SSP_wei}
\end{minipage} 

\caption{The distribution of SSP under two distributions, where $r=1000$, $r_0=200$ and $\alpha=0.90$.}
	\label{figure:SSP}
\end{figure}
To show the impact of censoring on optimal subsampling probabilities, we calculate the subsampling probabilities of all observations and compare the logarithm of subsampling probabilities between censored data and uncensored data. Figure \ref{figure:SSP_exp} and \ref{figure:SSP_wei} report the results under exponential distribution and weibull distribution respectively. We set $n=50000$, $r=1000$, $r_0=200$ and $\alpha=0.90$ for both two situations. As we can see, in general, uncensored data have larger subsampling probabilities than censored data so the uncensored data are more likely to be collected into the subsample.

\subsection{Proofs}\label{appendix:proof}
\subsubsection{Proof of Theorem \ref{thm:genenral-mle}}
From Proposition 2 of \cite{Wang2022comparative}, a sequence converging to 0 in conditional probability is equivalent to its converging to 0 in unconditional probability. Hence, $o_{p|\mathcal{D}_n}(1)$ is the same as $o_{p}(1)$ in the following proofs. The prove of Theorem \ref{thm:genenral-mle} need the lemma below:
    \begin{lemma}\label{lemma1}
        Under Assumptions \ref{ass1}, \ref{ass2} and \ref{ass3}, the following results hold:
        \begin{align}
            \boldsymbol{\tilde{M_g}} - \boldsymbol{M_g} = O_{p|\mathcal{D}_n}(r^{-\frac{1}{2}})\label{lemma1.1} \\
            \frac{1}{n} \frac{\partial l_g^*(\boldsymbol{\hat{\theta}})}{\partial\boldsymbol{\theta}} = O_{p|\mathcal{D}_n}(r^{-\frac{1}{2}}) \label{lemma1.2} 
        \end{align}
        where 
        \begin{align*}
            \boldsymbol{\tilde{M_g}}=\frac{1}{n} \frac{\partial^2 l_g^*(\boldsymbol{\hat{\theta}})}{\partial\boldsymbol{\theta}\partial\boldsymbol{\theta^\top}} = 
            \frac{1}{nr} \sum_{i=1}^r\frac{1}{\pi_i^*} \frac{\partial^2 l_i^*(\boldsymbol{\hat{\theta}})}{\partial\theta\partial\theta^\top} 
        \end{align*}
        \end{lemma}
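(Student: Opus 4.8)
The plan is to exploit the with-replacement i.i.d.\ structure of the subsample conditional on the full data $\mathcal{D}_n$, together with the inverse-probability weighting, to show that $\boldsymbol{\tilde{M_g}}$ is conditionally unbiased for $\boldsymbol{M}_g$ and that $\frac{1}{n}\partial l_g^*(\hat{\boldsymbol\theta})/\partial\boldsymbol\theta$ is conditionally centered at $0$, each with conditional variance of order $r^{-1}$. A conditional Chebyshev inequality then upgrades a variance bound of order $r^{-1}$ to the stochastic-order statements \eqref{lemma1.1} and \eqref{lemma1.2}.

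First I would record the moment identities. Conditional on $\mathcal{D}_n$, each of the $r$ selected indices is an independent draw assigning index $j$ the probability $\pi_j$, so a single summand satisfies
\begin{equation*}
\mathbb{E}\left[\frac{1}{\pi_i^*}\ddot{l}_i^*(\hat{\boldsymbol\theta}) \mid \mathcal{D}_n\right] = \sum_{j=1}^n \pi_j\cdot\frac{1}{\pi_j}\ddot{l}_j(\hat{\boldsymbol\theta}) = \sum_{j=1}^n \ddot{l}_j(\hat{\boldsymbol\theta}) = n\boldsymbol{M}_g,
\end{equation*}
and likewise $\mathbb{E}[\frac{1}{\pi_i^*}\dot{l}_i^*(\hat{\boldsymbol\theta})\mid\mathcal{D}_n]=\sum_{j=1}^n\dot{l}_j(\hat{\boldsymbol\theta})$. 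Averaging over the $r$ draws gives $\mathbb{E}[\boldsymbol{\tilde{M_g}}\mid\mathcal{D}_n]=\boldsymbol{M}_g$; and because $\hat{\boldsymbol\theta}$ is the full-data MLE, the score equation $\sum_{j=1}^n\dot{l}_j(\hat{\boldsymbol\theta})=0$ forces $\mathbb{E}[\frac{1}{n}\partial l_g^*(\hat{\boldsymbol\theta})/\partial\boldsymbol\theta\mid\mathcal{D}_n]=0$. This is precisely where the weight $\frac{1}{\pi_i^*}$ and the first-order condition at $\hat{\boldsymbol\theta}$ do the work.

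Next I would bound the conditional second moment of a single summand, which dominates its variance:
\begin{equation*}
\mathbb{E}\left[\left\Vert \frac{1}{\pi_i^*}\ddot{l}_i^*(\hat{\boldsymbol\theta}) \right\Vert^2 \mid \mathcal{D}_n\right] = \sum_{j=1}^n \frac{1}{\pi_j}\left\Vert \ddot{l}_j(\hat{\boldsymbol\theta})\right\Vert^2 = n^2\, O_{p|\mathcal{D}_n}(1),
\end{equation*}
where the order assertion is exactly the bound postulated in Assumption~\ref{ass1}; the analogous computation for $\dot{l}_j$ invokes Assumption~\ref{ass2}. Since the draws are i.i.d.\ given $\mathcal{D}_n$, the variance of the $r$-term average is $r^{-1}$ times the single-draw variance, and the $n^{-2}$ built into $\boldsymbol{\tilde{M_g}}=\frac{1}{nr}\sum_i \frac{1}{\pi_i^*}\ddot{l}_i^*$ cancels the $n^2$ above, so that $\mathrm{Var}(\boldsymbol{\tilde{M_g}}\mid\mathcal{D}_n)=O_{p|\mathcal{D}_n}(r^{-1})$ entrywise, and identically for $\frac{1}{n}\partial l_g^*(\hat{\boldsymbol\theta})/\partial\boldsymbol\theta$. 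Applying a conditional Chebyshev inequality to these centered quantities, $P(\sqrt{r}\,\Vert\boldsymbol{\tilde{M_g}}-\boldsymbol{M}_g\Vert>K\mid\mathcal{D}_n)\le K^{-2}A_n$ with $A_n=O_{p|\mathcal{D}_n}(1)$, and taking $K$ large yields \eqref{lemma1.1}; the same step delivers \eqref{lemma1.2}.

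The main obstacle is bookkeeping rather than depth: one must treat the $O_{p|\mathcal{D}_n}$ notation carefully, recognizing that the second-moment sums $A_n$ coming from Assumptions~\ref{ass1}--\ref{ass2} are themselves random in $\mathcal{D}_n$ yet bounded in probability, and then pass from the conditional variance bound to the conditional stochastic-order bound (by splitting on an event $\{A_n\le B\}$ of high probability before choosing $K$) so that the randomness of $\mathcal{D}_n$ does not contaminate the conclusion; the remark preceding the lemma, that conditional and unconditional convergence in probability coincide, legitimizes this passage. Note that Assumption~\ref{ass3} is not actually needed for these two estimates, since they involve only first and second derivatives evaluated at $\hat{\boldsymbol\theta}$; that assumption enters later, in the third-order Taylor-remainder step of Theorem~\ref{thm:genenral-mle}, so I would not invoke it here.
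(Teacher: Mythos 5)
Your proposal is correct and follows essentially the same route as the paper's proof: conditional unbiasedness of $\boldsymbol{\tilde{M_g}}$ and the vanishing of the conditional mean of the weighted score via the full-data score equation, a conditional variance of order $r^{-1}$ obtained from the second-moment bounds in Assumptions \ref{ass1} and \ref{ass2}, and a conditional Chebyshev argument. Your side remarks --- that Assumption \ref{ass3} is not actually used here and that the passage from a random $O_{p|\mathcal{D}_n}(1)$ variance bound to the stochastic-order conclusion requires a splitting argument --- are both accurate refinements of what the paper does implicitly.
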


\begin{proof}
    \begin{equation}\label{equl4}
        \mathbb{E}(\tilde{\boldsymbol{M_g}}|\mathcal{D}_n) = \frac{1}{nr}\sum_{i=1}^r\mathbb{E}\left(\frac{1}{\pi_i^*} \ddot{l}^*_i(\hat{\boldsymbol{\theta}})|\mathcal{D}_n\right)
        =\frac{1}{n}\sum_{i=1}^n\ddot{l}_i(\hat{\boldsymbol{\theta}})
        =\boldsymbol{M_g} 
    \end{equation}
    
Let $\tilde{\boldsymbol{M}}_g^{j_1j_2}$ denote the $j_1$-row, $j_2$-column entry of $\tilde{\boldsymbol{M}_g}$, $1\leq j_1, j_2\leq d$. Then
    \begin{equation}\label{equl5}
        \begin{aligned}
             Var\left(\frac{1}{\pi_i^*}\ddot{l}^*_i(\hat{\boldsymbol{\theta}})^{j_1j_2}|\mathcal{D}_n\right) &=\sum_{i=1}^n \pi_i\left(\frac{1}{\pi_i}\ddot{l}_i(\hat{\boldsymbol{\theta}})^{j_1j_2}-n\boldsymbol{M}_g^{j_1j_2}\right)^2 \\
            Var\left( \frac{1}{n}\tilde{\boldsymbol{M}}_g^{j_1j_2}|\mathcal{D}_n\right) &= \frac{1}{n^4r^2}\sum_{i=1}^n Var\left(\frac{1}{\pi_i^*}\ddot{l}^*_i(\hat{\boldsymbol{\theta}})^{j_1j_2}|\mathcal{D}_n\right)\\
            &= \frac{1}{r^2} \sum_{i=1}^n\pi_i\left(\frac{\ddot{l}_i(\hat{\boldsymbol{\theta}})^{j_1j_2}}{n\pi_i}- \boldsymbol{M}_g^{j_1j_2}  \right)^2\\
            &=\frac{1}{rn^2}\sum_{i=1}^n\frac{(\ddot{l}_i(\hat{\boldsymbol{\theta}})^{j_1j_2})^2}{\pi_i} - \frac{1}{r}(\boldsymbol{M}_g^{j_1j_2})^2\\
            &= O_{p|\mathcal{D}_n}\left(\frac{1}{r}\right) 
        \end{aligned}
    \end{equation}
    The last equality of equation \ref{equl5} holds by Assumption \ref{ass1}. Combining equation \ref{equl4}, \ref{equl5} and Chebyshev's inequality, we can derive equation \ref{lemma1.1}.

 Similarly, 
 \begin{equation}\label{lemmal6}
    \mathbb{E}\left(\frac{1}{n} \frac{\partial l_g^*(\hat{\boldsymbol{\theta}})}{\partial \boldsymbol{\theta}} | \mathcal{D}_n        \right)=\frac{1}{n} \sum_{i=1}^n \dot{l}_i(\hat{\mathcal{\theta}}) = 0 
\end{equation}
\begin{equation}\label{lemmal7}
    Var\left( \frac{1}{n} \frac{\partial l_g^*(\boldsymbol{\hat{\theta}})}{\partial\boldsymbol{\theta}} \right) = \frac{1}{n^2r} \sum_{i=1}^n \frac{\dot{l_i^2}(\boldsymbol{\hat{\theta}})}{\pi_i} = O_{p|\mathcal{D}_n}(r^{-\frac{1}{2}}) 
\end{equation}
The last equality of equation \ref{lemmal7} holds under Assumption \ref{ass2}. Using equation \ref{lemmal6}, \ref{lemmal7} and Chebyshev's inequality, the equation \ref{lemma1.2} is derived.

\end{proof}

The prove of Theorem \ref{thm:genenral-mle}:
\begin{proof} 
\begin{equation} \label{thms1}
    \begin{aligned}
        &\mathbb{E} \left\{ \frac{1}{n}  l_g^*(\boldsymbol{\theta}) - l_f(\boldsymbol{\theta}) | \mathcal{D}_n \right\}^2\\
        =& \frac{1}{r} \left[ \frac{1}{n^2} \sum_{i=1}^n \frac{l_i^2(\boldsymbol{\theta})}{\pi_i} -\left( \frac{1}{n} \sum_{i=1}^n l_i(\boldsymbol{\theta})^2 \right) \right] \\
        =& O_{p|\mathcal{D}_n}\left(\frac{1}{r}\right)     
    \end{aligned}
\end{equation}     

Equation \ref{thms1} holds when Assumption \ref{ass2} is true. Then,
\begin{equation}
    \frac{1}{n}l_g^*(\boldsymbol{\theta}) - l_f(\boldsymbol{\theta}) \xrightarrow{P|\mathcal{D}_n} 0 \notag
\end{equation}

The Theorem 5.9 of Van der Vaart(1998) shows: Let $\Psi_n$ be random vector-valued functions and let $\Psi$ be a fixed vector-valued function of $\boldsymbol{\theta}$ such that for every $\epsilon>0$, $ \sup_{\boldsymbol{\theta} \in \boldsymbol{\Theta}} \Vert \Psi_n(\hat{\boldsymbol{\theta}}) - \Psi(\boldsymbol{\theta}) \Vert\xrightarrow{P}0, \inf_{d(\boldsymbol{\theta}, \boldsymbol{\theta}_0)} \Vert \Psi(\boldsymbol{\theta})\Vert > 0 = \Vert \Psi(\boldsymbol{\theta}_0)\Vert$. Then any sequence of estimators $\hat{\boldsymbol{\theta}}$ such that $\Psi_n(\hat{\boldsymbol{\theta}}) = o_p(1)$ converges in probability to $\boldsymbol{\theta}_0$. 

Let $\Psi_n(\boldsymbol{\theta}) = \frac{1}{n} \dot{l}_g^*(\theta), \Psi(\boldsymbol{\theta})=\dot{l}_f(\theta)$, we have $\Vert \frac{1}{n} \dot{l}_g^*(\theta) - \dot{l}_f(\theta) \Vert \xrightarrow{P|\mathcal{D}_n}0$, $\dot{l}_f(\hat{\boldsymbol{\theta}}) = 0$ and $\dot{l}_g^*(\tilde{\boldsymbol{\theta}}_g)$. According to the above theorem, we have $\Vert \tilde{\boldsymbol{\theta}}_g - \hat{\boldsymbol{\theta}} \Vert = o_{p|\mathcal{D}_n}(1)$.

Now that we have shown the distance between $\tilde{\boldsymbol{\theta}}$ and $\hat{\boldsymbol{\theta}}$ is close when $r$ and $n$ is large, we aim to find out how close it is. 

Using the expression of Taylor expansion from Thomas S.Ferguson (1996), we have:
\begin{equation*} \label{thms2}
    0 = \frac{1}{n} \frac{\partial l_g^*(\tilde{\boldsymbol{\theta}}_g)}{\partial \theta_j} = \frac{1}{n}\frac{\partial l_g^*(\hat{\boldsymbol{\theta}})}{\partial \theta_j} + \frac{1}{n} \frac{\partial^2 l_g^*(\hat{\boldsymbol{\theta}})}{\partial \boldsymbol{\theta}^\top\partial \theta_j}(\tilde{\boldsymbol{\theta}_g} - \hat{\boldsymbol{\theta}}) + \frac{1}{n}{R_j} 
\end{equation*} 

where 
\begin{equation*} \label{thms3}
    R_j = (\tilde{\boldsymbol{\theta}_g} - \hat{\boldsymbol{\theta}})^\top \int_{0}^{1}\int_{0}^{1}\frac{\partial^3 l_g^* \left[\hat{\boldsymbol{\theta}} + uv(\tilde{\boldsymbol{\theta}}_g - \hat{\boldsymbol{\theta}}) \right]}{\partial\boldsymbol{\theta}\partial\boldsymbol{\theta}^\top\partial\theta_j}vdudv(\tilde{\boldsymbol{\theta}}_g - \hat{\boldsymbol{\theta}}) 
\end{equation*}

Under Assumption \ref{ass3}, we have:
\begin{equation*} \label{thms4}
    \left\Vert \int_{0}^{1}\int_{0}^{1}\frac{\partial^3 l_g^* \left[\hat{\boldsymbol{\theta}} + uv(\tilde{\boldsymbol{\theta}}_g - \hat{\boldsymbol{\theta}}) \right]}{\partial\boldsymbol{\theta}\partial\boldsymbol{\theta}^\top\partial\theta_j}vdudv \right\Vert   = O_{p|\mathcal{D}_n}(n) 
\end{equation*}

Hence,
\begin{equation*} \label{thms5}
    \frac{1}{n} R_j = O_{p|\mathcal{D}_n}(\Vert \tilde{\boldsymbol{\theta}}_g - \hat{\boldsymbol{\theta}}  \Vert^2) 
\end{equation*}

Then we have 
\begin{equation} \label{thms6}
    \tilde{\boldsymbol{\theta}}_g - \hat{\boldsymbol{\theta}} = -\tilde{\boldsymbol{M}_g}^{-1} \left[O_{p|\mathcal{D}_n}(\Vert \tilde{\boldsymbol{\theta}}_g - \hat{\boldsymbol{\theta}}  \Vert^2)+\frac{1}{n}\frac{\partial l_g^*(\hat{\boldsymbol{\theta}})}{\partial\boldsymbol{\theta^\top}}\right] 
\end{equation}

From equation \ref{lemma1.1}, $\tilde{\boldsymbol{M}_g}^{-1} = O_{p|\mathcal{D}_n}(1)$. This result together with $\Vert \tilde{\boldsymbol{\theta}}_g - \hat{\boldsymbol{\theta}} \Vert = o_{p|\mathcal{D}_n}(1)$, equation \ref{lemma1.2} and \ref{thms6} give rise to:

\begin{equation} \label{thms7}
    \tilde{\boldsymbol{\theta}}_g - \hat{\boldsymbol{\theta}} = O_{p|\mathcal{D}_n}(r^{-\frac{1}{2}}) + o_{p|\mathcal{D}_n}(\Vert \tilde{\boldsymbol{\theta}}_g - \hat{\boldsymbol{\theta}} \Vert) = O_{p|\mathcal{D}_n}(r^{-\frac{1}{2}})  
\end{equation}

Then, we move to the verification of asymptotic normality property. 

Let
\begin{equation}
    \frac{1}{n} \dot{l}_g^* (\hat{\boldsymbol{\theta}}) = \frac{1}{nr}\sum_{i=1}^r \frac{1}{\pi_i^*}\dot{l}^*_i(\hat{\boldsymbol{\theta}}) \triangleq \frac{1}{r} \sum_{i=1}^r k_i \notag
\end{equation}

Then, 
\begin{equation} 
    \mathbb{E}(k_i|\mathcal{D}_n) = \frac{1}{n}\sum_{i=1}^n \dot{l}_i(\hat{\boldsymbol{\theta}}) = 0 \notag
\end{equation}

Under Assumption \ref{ass2},
\begin{equation} 
    Var(k_i|\mathcal{D}_n)=\mathbb{E}\left(\frac{1}{n^2\pi_i^2} \dot{l_i}^2(\hat{\boldsymbol{\theta}})\right) = \frac{1}{n^2}\sum_{i=1}^n \frac{1}{\pi_i} \dot{l_i^2}(\hat{\boldsymbol{\theta}}) = O_{p|\mathcal{D}_n}(1) \notag
\end{equation}

As $\boldsymbol{\Lambda_g}=\frac{1}{rn^2}\sum_{i=1}^n\frac{1}{\pi_i}\dot{l}^2_i(\hat{\boldsymbol\theta})$, we have $Var(k_i|\mathcal{D}_n) = r\boldsymbol{\Lambda_g} = O_{p|\mathcal{D}_n}(1)$.

According to Lindeberg-Feller CLT, for each $n$, let $Y_{n,1},Y_{n,2},...,Y_{n,k_n}$ be independent random vectors with finite variance such that for every $\varepsilon > 0$,
\begin{equation}
\begin{aligned}
     \sum_{i=1}^{k_n}\mathbb{E} \Vert &Y_{n,i}\Vert^2 I(\Vert Y_{n,i}\Vert > \varepsilon)\rightarrow 0\\
     &\sum_{i=1}^{k_n} Cov(Y_{n,i})\rightarrow \Sigma
\end{aligned} \notag
\end{equation}

Then,
\begin{equation}
    \sum_{i=1}^{k_n}(Y_{n,i}-\mathbb{E}Y_{n,i})\xrightarrow{d} N(0,\Sigma) \notag
\end{equation}

Therefore, we verify the Lindeberg-Feller condition below. For every $\varepsilon > 0$ and some fixed $\delta >0$, 
\begin{equation*} \label{thms8}
    \begin{aligned}
        &\sum_{i=1}^r \mathbb{E} \left(\Vert r^{-\frac{1}{2}} k_i \Vert I(\Vert k_i \Vert > r^{\frac{1}{2}}\varepsilon)|\mathcal{D}_n \right)\\
        =&\frac{1}{r} \sum_{i=1}^r \mathbb{E}\left(\Vert k_i \Vert^2 I(\Vert k_i \Vert > r^{\frac{1}{2} }\varepsilon)|\mathcal{D}_n   \right)\\
        \leq&\frac{1}{r^{1+\frac{\delta}{2}} \varepsilon^\delta} \sum_{i=1}^r \mathbb{E}\left(\Vert k_i \Vert^{2+\delta} I(\Vert k_i \Vert > r^{\frac{1}{2} }\varepsilon)|\mathcal{D}_n   \right)\\
        \leq&\frac{1}{r^{1+\frac{\delta}{2}} \varepsilon^\delta}\sum_{i=1}^r \mathbb{E}\left(\Vert k_i \Vert^{2+\delta}|\mathcal{D}_n   \right)\\
        =& \frac{1}{r^{\frac{\delta}{2}} \varepsilon^\delta} \sum_{i=1}^r \mathbb{E} \left(\left\Vert \frac{1}{n\pi_i} \dot{l}_i (\hat{\boldsymbol{\theta}}) \right\Vert ^{2+\delta}|\mathcal{D}_n \right)\\
        =& \frac{1}{r^{\frac{\delta}{2}} \varepsilon^\delta} \frac{1}{n^{2+\delta}} \sum_{i=1}^n \frac{1}{\pi_i^{1+\delta}} \Vert \dot{l}_i (\hat{\boldsymbol{\theta}})\Vert^{2+\delta}\\
        =& o_p(1)
    \end{aligned} 
\end{equation*}

The last equality is based on Assumption \ref{ass4}. As the Lindeberg-Feller condition is satisfied, 
\begin{equation*} \label{thms9}
    \frac{1}{n}\boldsymbol{\Lambda_g}^{-\frac{1}{2}}\dot{l}_g^*(\hat{\boldsymbol{\theta}})=r^{-\frac{1}{2}}\left[Var(k_i|\mathcal{D}_n) \right] ^{-\frac{1}{2}}\sum_{i=1}^r k_i \rightarrow N(0,\boldsymbol{I}) 
\end{equation*}

From equation \ref{thms6} and \ref{thms7}, 
\begin{equation} \label{thms10}
     \tilde{\boldsymbol{\theta}}_g - \hat{\boldsymbol{\theta}} = -\frac{1}{n} \tilde{\boldsymbol{M}_g}^{-1} \dot{l}_g^*(\hat{\boldsymbol{\theta}}) + O_{p|\mathcal{D}_n}\left(\frac{1}{r}\right) 
\end{equation}

From equation \ref{lemma1.1}, 
\begin{equation} \label{thms11}
        \boldsymbol{\tilde{M_g}}^{-1} - \boldsymbol{M_g}^{-1} = - \boldsymbol{M_g}^{-1}(\boldsymbol{\tilde{M_g}} - \boldsymbol{M_g})\boldsymbol{\tilde{M_g}}^{-1} = O_{p|\mathcal{D}_n}\left(r^{-\frac{1}{2}}\right)
\end{equation}

From Assumption \ref{ass1} and $r\boldsymbol{V_t} = O_{p|\mathcal{D}_n}(1)$,
\begin{equation} \label{thms12}
    \boldsymbol{V}_g = \boldsymbol{M_g}^{-1} \boldsymbol{\Lambda}_g \boldsymbol{M_g}^{-1} =  \frac{1}{r}\boldsymbol{M_g}^{-1} (r\boldsymbol{\Lambda}_g) \boldsymbol{M_g}^{-1} = O_{p|\mathcal{D}_n}\left(\frac{1}{r}\right) 
\end{equation}

Based on equation \ref{thms10}, \ref{thms11} and \ref{thms12},
\begin{equation}
    \begin{aligned}
        &\boldsymbol{V}_g^{-\frac{1}{2}}(\tilde{\boldsymbol{\theta}}_g - \hat{\boldsymbol{\theta}})\\
        =& -\frac{1}{n}\boldsymbol{V}_g^{-\frac{1}{2}}\tilde{\boldsymbol{M}}_g^{-1} \dot{l}_g^*(\hat{\boldsymbol{\theta}}) + O_{p|\mathcal{D}_n}\left(r^{-\frac{1}{2}}\right) \\
        =& -\frac{1}{n}\boldsymbol{V}_g^{-\frac{1}{2}}\boldsymbol{M}_g^{-1} \dot{l}_g^*(\hat{\boldsymbol{\theta}}) -\frac{1}{n}\boldsymbol{V}_g^{-\frac{1}{2}}(\tilde{\boldsymbol{M}}_g^{-1} - \boldsymbol{M}_g^{-1}) \dot{l}_g^*(\hat{\boldsymbol{\theta}}) \\
        +& O_{p|\mathcal{D}_n}\left(r^{-\frac{1}{2}}\right)\\
        =& -\frac{1}{n}\boldsymbol{V}_g^{-\frac{1}{2}}\boldsymbol{M}_g^{-1} \boldsymbol{\Lambda}_g^{\frac{1}{2}}\boldsymbol{\Lambda}_g^{-\frac{1}{2}}\dot{l}_g^*(\hat{\boldsymbol{\theta}}) + O_{p|\mathcal{D}_n}\left(r^{-\frac{1}{2}}\right) \notag
    \end{aligned}
\end{equation}

Notice that
\begin{equation}
\begin{aligned}
    &(\boldsymbol{V}_g^{-\frac{1}{2}}\boldsymbol{M}_g^{-1} \boldsymbol{\Lambda}_g^{\frac{1}{2}})(\boldsymbol{V}_g^{-\frac{1}{2}}\boldsymbol{M}_g^{-1} \boldsymbol{\Lambda}_g^{\frac{1}{2}})^\top \\
    =& \boldsymbol{V}_g^{-\frac{1}{2}}\boldsymbol{M}_g^{-1} \boldsymbol{\Lambda}_g^{\frac{1}{2}} \boldsymbol{\Lambda}_g^{\frac{1}{2}}\boldsymbol{M}_g^{-1}\boldsymbol{V}_g^{-\frac{1}{2}} \\
    =&\boldsymbol{V}_g^{-\frac{1}{2}} \boldsymbol{V}_g \boldsymbol{V}_g^{-\frac{1}{2}} = \boldsymbol{I} \notag
\end{aligned}
\end{equation}

As a result
\begin{equation}
    \boldsymbol{V}_g^{-\frac{1}{2}}(\tilde{\boldsymbol{\theta}}_g - \hat{\boldsymbol{\theta}})\xrightarrow{d} N(0,\boldsymbol{I}) \notag
\end{equation}
\end{proof}

\subsubsection{Proof of Theorem \ref{thm:general-true}}
    A widely known result is needed for the proof of Theorem \ref{thm:general-true}, which can be seen in chapter 5 of \cite{Van_1998}:
\begin{lemma} \label{lemma:mle_true}
    Under Assumption \ref{ass5} and \ref{ass6}, if $\Lambda(\boldsymbol{\theta})$ is positive definite, we have
    \begin{equation*}
        \sqrt{n}(\hat{\boldsymbol{\theta}} - \boldsymbol{\theta}_0) \xrightarrow{d} N(\boldsymbol{0}, \boldsymbol{V}),
    \end{equation*}
    where $\boldsymbol{V} = \boldsymbol{M}^{-1}_{0g} \Lambda(\boldsymbol{\theta}_0) \boldsymbol{M}^{-1}_{0g}$.
\end{lemma}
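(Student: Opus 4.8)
The plan is to invoke the classical asymptotic theory for $M$-estimators, specialized to the maximum likelihood estimator $\hat{\boldsymbol{\theta}}$, which maximizes the average log-likelihood $l_f(\boldsymbol{\theta})=\frac{1}{n}\sum_{i=1}^n l(\boldsymbol{x}_i,\boldsymbol{\theta})$ over the compact parameter space $\boldsymbol{\Theta}$. Since the observations $\boldsymbol{x}_i$ are i.i.d., the argument proceeds in the standard four movements: consistency, a central limit theorem for the score, convergence of the Hessian, and a Taylor-expansion-plus-Slutsky assembly. Throughout I write $\Lambda(\boldsymbol{\theta}_0)=\mathbb{E}\{\dot{l}(\boldsymbol{x},\boldsymbol{\theta}_0)\dot{l}(\boldsymbol{x},\boldsymbol{\theta}_0)^\top\}$ and $\boldsymbol{M}_{0g}=\mathbb{E}\{\ddot{l}(\boldsymbol{x},\boldsymbol{\theta}_0)\}$.

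First I would establish consistency $\hat{\boldsymbol{\theta}}\xrightarrow{p}\boldsymbol{\theta}_0$. By Assumption \ref{ass5}, $\boldsymbol{\Theta}$ is compact and $\mathbb{E}\{l^2(\boldsymbol{x},\boldsymbol{\theta})\}<\infty$, so a uniform law of large numbers gives $\sup_{\boldsymbol{\theta}\in\boldsymbol{\Theta}}\lvert l_f(\boldsymbol{\theta})-\mathbb{E}\{l(\boldsymbol{x},\boldsymbol{\theta})\}\rvert \xrightarrow{p}0$. Because $\boldsymbol{\theta}_0$ is the unique maximizer of the population objective $\mathbb{E}\{l(\boldsymbol{x},\boldsymbol{\theta})\}$ (identifiability, reflected in the uniqueness clause of Assumption \ref{ass5}), the argmax continuous-mapping theorem yields $\hat{\boldsymbol{\theta}}\xrightarrow{p}\boldsymbol{\theta}_0$, so that eventually $\hat{\boldsymbol{\theta}}$ is interior and $\dot{l}_f(\hat{\boldsymbol{\theta}})=\boldsymbol{0}$.

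Next I would handle the two distributional ingredients. At $\boldsymbol{\theta}_0$ the score is mean-zero, $\mathbb{E}\{\dot{l}(\boldsymbol{x},\boldsymbol{\theta}_0)\}=\boldsymbol{0}$ (the first Bartlett identity, justified by differentiating the censored--truncated likelihood integral under the integral sign), with covariance $\Lambda(\boldsymbol{\theta}_0)$, which is finite and positive definite by hypothesis. The multivariate Lindeberg--L\'evy CLT then gives $\sqrt{n}\,\dot{l}_f(\boldsymbol{\theta}_0)=n^{-1/2}\sum_{i=1}^n\dot{l}(\boldsymbol{x}_i,\boldsymbol{\theta}_0)\xrightarrow{d}N(\boldsymbol{0},\Lambda(\boldsymbol{\theta}_0))$. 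For the Hessian, Assumption \ref{ass6} supplies $\mathbb{E}\{\ddot{l}^2_{(jk)}\}<\infty$ and a Lipschitz bound with $L^2$ envelope $\psi$; combined with the fact that any point $\bar{\boldsymbol{\theta}}$ on the segment between $\hat{\boldsymbol{\theta}}$ and $\boldsymbol{\theta}_0$ satisfies $\bar{\boldsymbol{\theta}}\xrightarrow{p}\boldsymbol{\theta}_0$, this upgrades the pointwise LLN to $\ddot{l}_f(\bar{\boldsymbol{\theta}})\xrightarrow{p}\boldsymbol{M}_{0g}$. Since $-\boldsymbol{M}_{0g}$ is positive definite (Assumption \ref{ass6}), $\boldsymbol{M}_{0g}$ is invertible and $[\ddot{l}_f(\bar{\boldsymbol{\theta}})]^{-1}\xrightarrow{p}\boldsymbol{M}_{0g}^{-1}$.

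Finally I would assemble the pieces. A first-order Taylor expansion of the score equation $\boldsymbol{0}=\dot{l}_f(\hat{\boldsymbol{\theta}})$ about $\boldsymbol{\theta}_0$ gives $\boldsymbol{0}=\dot{l}_f(\boldsymbol{\theta}_0)+\ddot{l}_f(\bar{\boldsymbol{\theta}})(\hat{\boldsymbol{\theta}}-\boldsymbol{\theta}_0)$, so that $\sqrt{n}(\hat{\boldsymbol{\theta}}-\boldsymbol{\theta}_0)=-[\ddot{l}_f(\bar{\boldsymbol{\theta}})]^{-1}\sqrt{n}\,\dot{l}_f(\boldsymbol{\theta}_0)$. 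Slutsky's theorem with the CLT and Hessian limits then delivers $\sqrt{n}(\hat{\boldsymbol{\theta}}-\boldsymbol{\theta}_0)\xrightarrow{d}N(\boldsymbol{0},\boldsymbol{M}_{0g}^{-1}\Lambda(\boldsymbol{\theta}_0)\boldsymbol{M}_{0g}^{-1})=N(\boldsymbol{0},\boldsymbol{V})$. The main obstacle is the consistency step together with the uniform control of the Hessian at the random intermediate point $\bar{\boldsymbol{\theta}}$: it is exactly here that compactness, the $L^2$ moment bounds, and the Lipschitz envelope $\psi$ of Assumptions \ref{ass5}--\ref{ass6} are indispensable, since without them the pointwise limits cannot be transported to the data-dependent argument $\bar{\boldsymbol{\theta}}$. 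I would note that this is the standard maximum-likelihood argument of \cite{Van_1998}, Chapter 5, applied to the censored--truncated per-observation likelihood $l$.
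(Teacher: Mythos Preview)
Your proposal is correct and matches the paper's own treatment: the paper does not give an independent proof of this lemma but simply cites Chapter~5 of \cite{Van_1998}, which is exactly the classical $M$-estimator argument (consistency via a uniform LLN, CLT for the score, Hessian convergence via the Lipschitz envelope, Taylor expansion plus Slutsky) that you have outlined.
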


The prove of Theorem \ref{thm:general-true}:
\begin{proof}
    We use the characteristic function to prove this result. Let $S_n = \sqrt{r}(\tilde{\boldsymbol{\theta}}_g - \hat{\boldsymbol{\theta}})$, $Y_n = \sqrt{r}(\hat{\boldsymbol{\theta}} - 
    \boldsymbol{\theta}_0)$. Then, $\sqrt{r}(\tilde{\boldsymbol{\theta}}_g - \boldsymbol{\theta}_0) = S_n + Y_n$.
    From Theorem \ref{thm:genenral-mle}, we have $\sqrt{r}(\tilde{\boldsymbol\theta}_g-\hat{\boldsymbol\theta})\xrightarrow{d}N(0,\boldsymbol{V}_g)$. Hence, 
    \begin{equation*}
        \mathbb{E}(e^{\mathrm{i}t^\top S_n}|\mathcal{D}_n) = e^{-0.5t^\top \boldsymbol{V}_g t} + o_{p|\mathcal{D}_n}(1) = e^{-0.5t^\top \boldsymbol{V}_g t} + o_{p}(1)
    \end{equation*}
    The last equation is from Proposition 2 of \cite{Wang2022comparative}.
    For every $k,l = 1,2,...,d$, because of Lipschitz continuity, 
    \begin{equation*}
    \begin{aligned}
        \left\vert \frac{1}{n} \sum_{i=1}^n \ddot{l}_{i(kl)}(\hat{\boldsymbol{\theta}}) - \frac{1}{n} \sum_{i=1}^n \ddot{l}_{i(kl)}(\boldsymbol{\theta}_0) \right\vert \leq& \left\Vert \hat{\boldsymbol{\theta}} - \boldsymbol{\theta}_0 \right\Vert\frac{1}{n} \sum_{i=1}^n \psi(\boldsymbol{x}_i)\\
       =& o_p(1)
    \end{aligned}
    \end{equation*}
    By law of large numbers, 
    \begin{equation*}
        \ddot{l}_f(\hat{\boldsymbol{\theta}}) = \ddot{l}_f(\boldsymbol{\theta}_0) + o_p(1) = \boldsymbol{M}_{0g}(\boldsymbol{\theta}_0) + o_p(1)
    \end{equation*}
    Next we prove $\Lambda_g(\hat{\boldsymbol{\theta}}) = \Lambda_{1g}(\boldsymbol{\theta}_0) + o_p(1)$.
    \begin{equation*}
        \begin{aligned}
            &\left\Vert \Lambda_g(\hat{\boldsymbol{\theta}}) - \Lambda_g(\boldsymbol{\theta}_0) \right\Vert \\
            =& \frac{1}{n} \left\Vert \sum_{i=1}^n \frac{\dot{l}_i(\hat{\boldsymbol{\theta}})^2 - \dot{l}_i(\boldsymbol{\theta}_0)^2 }{n\pi_i}\right\Vert \\
             \leq& \max_i\left(\frac{1}{n\pi_i}\right)\frac{1}{n}\sum_{i=1}^n \left\Vert \dot{l}_i(\hat{\boldsymbol{\theta}})^2 - \dot{l}_i(\hat{\boldsymbol{\theta}})\dot{l}_i(\boldsymbol{\theta}_0)^\top \right\Vert\\
             + &\max_i\left(\frac{1}{n\pi_i}\right)\frac{1}{n}\sum_{i=1}^n \left\Vert \dot{l}_i(\hat{\boldsymbol{\theta}})\dot{l}_i(\boldsymbol{\theta}_0)^\top -\dot{l}_i(\boldsymbol{\theta}_0)^2 \right\Vert\\
            =& \max_i\left(\frac{1}{n\pi_i}\right)\frac{1}{n} \sum_{i=1}^n \left\{\left\Vert \dot{l}_i(\hat{\boldsymbol{\theta}}) \right\Vert + \left\Vert\dot{l}_i(\boldsymbol{\theta}_0)\right\Vert\right\} \times \left\Vert \dot{l}_i(\hat{\boldsymbol{\theta}}) -  \dot{l}_i(\boldsymbol{\theta}_0) \right\Vert
        \end{aligned}
    \end{equation*}
    From Taylor expansion, we have
    \begin{equation*}
        \dot{l}_i(\hat{\boldsymbol{\theta}}) = \dot{l}_i(\boldsymbol{\theta}_0) + R_{i},
    \end{equation*}
    where $R_i = \int_0^1 \ddot{l}_i(\boldsymbol{\theta}_0 + u(\hat{\boldsymbol{\theta}} - \boldsymbol{\theta}_0))du$. From Assumption \ref{ass6}, $R_i$ satisfies
    \begin{equation*}
        \left\Vert R_i - \ddot{l}_i(\boldsymbol{\theta}_0) \right\Vert \leq d\int_0^1 u\psi(\boldsymbol{x}_i)\left\Vert \hat{\boldsymbol{\theta}} - \boldsymbol{\theta}_0 \right\Vert du = \frac{d}{2}\psi(\boldsymbol{x}_i)\left\Vert \hat{\boldsymbol{\theta}} - \boldsymbol{\theta}_0 \right\Vert
    \end{equation*}
    Then, we have
    \begin{equation*}
        \left\Vert \dot{l}_i(\hat{\boldsymbol{\theta}}) -  \dot{l}_i(\boldsymbol{\theta}_0)\right\Vert \leq \frac{d}{2}\psi(\boldsymbol{x}_i)\left\Vert \hat{\boldsymbol{\theta}} - \boldsymbol{\theta}_0 \right\Vert^2 +  \left\Vert \ddot{l}_i(\boldsymbol{\theta}_0)\right\Vert  \left\Vert\hat{\boldsymbol{\theta}} - \boldsymbol{\theta}_0\right\Vert.
    \end{equation*}
    Hence, 
    \begin{equation*}
        \begin{aligned}
            &\left\Vert \Lambda_g(\hat{\boldsymbol{\theta}}) - \Lambda_g(\boldsymbol{\theta}_0) \right\Vert\\
            \leq& \max_i\left(\frac{1}{n\pi_i}\right) \left\{ \frac{d}{2} \left\Vert \hat{\boldsymbol{\theta}} - \boldsymbol{\theta}_0 \right\Vert^2 \left[\frac{1}{n} \sum_{i=1}^n \left\Vert \dot{l}_i(\hat{\boldsymbol{\theta}}) \right\Vert \psi(\boldsymbol{x}_i) \right.\right.\\
             +&\left. \frac{1}{n} \sum_{i=1}^n \left\Vert \dot{l}_i(\boldsymbol{\theta}_0) \right\Vert\psi(\boldsymbol{x}_i) \right] + \left\Vert \hat{\boldsymbol{\theta}} - \boldsymbol{\theta}_0 \right\Vert \left[\frac{1}{n}\sum_{i=1}^n \left\Vert \dot{l}_i(\hat{\boldsymbol{\theta}})\right\Vert \left\Vert\ddot{l}_i(\boldsymbol{\theta}_0)\right\Vert  \right.\\
            +&\left.\left. \frac{1}{n}\sum_{i=1}^n \left\Vert \dot{l}_i(\boldsymbol{\theta}_0)\right\Vert \left\Vert\ddot{l}_i(\boldsymbol{\theta}_0)\right\Vert \right]\right\}
        \end{aligned}
    \end{equation*}
    Using H{\"o}lder's inequality, we have
    \begin{equation*}
    \begin{aligned}
        \frac{1}{n}\sum_{i=1}^n \left\Vert \dot{l}_i(\hat{\boldsymbol{\theta}}) \right\Vert \psi(\boldsymbol{x}_i) \leq& \left\{ \frac{1}{n} \sum_{i=1}^n \left\Vert \dot{l}_i(\hat{\boldsymbol{\theta}}) \right\Vert^4 \right\}^{\frac{1}{4}} \left\{ \frac{1}{n} \sum_{i=1}^n \psi(\boldsymbol{x}_i)^{\frac{4}{3}} \right\}^{\frac{3}{4}} \\
        =& O_p(1)
    \end{aligned}
    \end{equation*}
    Similarly, $\frac{1}{n}\sum_{i=1}^n \left\Vert \dot{l}_i(\boldsymbol{\theta}_0) \right\Vert \psi(\boldsymbol{x}_i)$, $\frac{1}{n}\sum_{i=1}^n \left\Vert \dot{l}_i(\hat{\boldsymbol{\theta}}) \right\Vert \left\Vert \ddot{l}_i(\boldsymbol{\theta}_0) \right\Vert$ and \\
    $\frac{1}{n}\sum_{i=1}^n \left\Vert \dot{l}_i(\boldsymbol{\theta}_0) \right\Vert \left\Vert \ddot{l}_i(\boldsymbol{\theta}_0) \right\Vert$ are all $O_p(1)$. Thus, $\left\Vert \Lambda_g(\hat{\boldsymbol{\theta}}) - \Lambda_g(\boldsymbol{\theta}_0) \right\Vert = o_p(1)$. Then we have
    \begin{equation*}
         \mathbb{E}(e^{\mathrm{i}t^\top S_n}|\mathcal{D}_n) = e^{-0.5t^\top \boldsymbol{M}_{0g}^{-1}\boldsymbol{\Lambda}_{1g}(\boldsymbol{\theta}_0)\boldsymbol{M}_{0g}^{-1} t} + o_{p}(1)
    \end{equation*}
    Using Lemma \ref{lemma:mle_true}, we have
    \begin{equation*}
        \mathbb{E}(e^{\mathrm{i}t^\top Y_n}) \rightarrow e^{-0.5\gamma t^\top\boldsymbol{M}_{0g}^{-1}\Lambda(\boldsymbol{\theta}_0)\boldsymbol{M}_{0g}^{-1} t}
    \end{equation*}
    \begin{equation*}
        \begin{aligned}
            &\left\vert \mathbb{E}\left\{ e^{\mathrm{i}t^\top (S_n+Y_n)} -  e^{\mathrm{i}t^\top Y_n}e^{-0.5\gamma t^\top\boldsymbol{M}_{0g}^{-1}\Lambda(\boldsymbol{\theta}_0)\boldsymbol{M}_{0g}^{-1} t}     \right\}\right\vert \\
            =&\left\vert    \mathbb{E}\left\{ \mathbb{E} \left[e^{\mathrm{i}t^\top (S_n+Y_n)} -  e^{\mathrm{i}t^\top Y_n}e^{-0.5\gamma t^\top\boldsymbol{M}_{0g}^{-1}\Lambda(\boldsymbol{\theta}_0)\boldsymbol{M}_{0g}^{-1}t}|\mathcal{D}_n \right]\right\}\right\vert \\
            =& \left\vert \mathbb{E} \left\{ e^{\mathrm{i}t^\top Y_n} \left[ \mathbb{E}(e^{\mathrm{i}t^\top S_n}|\mathcal{D}_n) - e^{-0.5\gamma t^\top\boldsymbol{M}_{0g}^{-1}\Lambda(\boldsymbol{\theta}_0)\boldsymbol{M}_{0g}^{-1}t} \right]\right\}\right\vert \\
            \leq& \mathbb{E} \left\{ \left\vert \mathbb{E} (e^{\mathrm{i}t^\top S_n}|\mathcal{D}_n) - e^{-0.5\gamma t^\top\boldsymbol{M}_{0g}^{-1}\Lambda(\boldsymbol{\theta}_0)\boldsymbol{M}_{0g}^{-1}t} \right\vert \right\} 
        \end{aligned}
    \end{equation*}
    By the dominated convergence theorem, we have
    \begin{equation*}
        \left\vert \mathbb{E}\left\{ e^{\mathrm{i}t^\top (S_n+Y_n)} -  e^{\mathrm{i}t^\top Y_n}e^{-0.5\gamma t^\top\boldsymbol{M}_{0g}^{-1}\Lambda(\boldsymbol{\theta}_0)\boldsymbol{M}_{0g}^{-1} t}\right\}\right\vert = o(1)
    \end{equation*}
    Thus, 
    \begin{equation*}
    \begin{aligned}
        &\sum_{n=1}^\mathbb{E} \left\{e^{\mathrm{i}t^\top (S_n+Y_n)} \right\} \\
        =& e^{\mathrm{i}t^\top Y_n}e^{-0.5\gamma t^\top\boldsymbol{M}_{0g}^{-1}\Lambda(\boldsymbol{\theta}_0)\boldsymbol{M}_{0g}^{-1} t} + o(1) \\
        \rightarrow&
        \mathbb{E}\left\{ e^{-0.5t^\top \boldsymbol{V}_{0g} t } \right\}
    \end{aligned}
    \end{equation*}
    
    Therefore, Theorem \ref{thm:general-true} is proved.
\end{proof}

\subsubsection{Proof of Theorem \ref{thm:censor-all}}
To prove Theorem \ref{thm:censor-all}, we need the following two lemmas.

\begin{lemma}\label{lemma3}
    Under Assumptions \ref{ass6} and \ref{ass9}, if $\left\Vert \tilde{\boldsymbol{\theta}}_c - \hat{\boldsymbol{\theta}} \right\Vert = o_p(1)$, conditional on $\mathcal{D}_n$, 
    \begin{equation*}
        B_c - \ddot{l}_f(\hat{\boldsymbol{\theta}}) = o_p(1),
    \end{equation*}
    where 
    \begin{equation*}
        \begin{aligned}
            &B_c = \frac{1}{n} \int_0^1 \ddot{l}^*_c \left(\hat{\boldsymbol{\theta}} + s(\tilde{\boldsymbol{\theta}}_c - \hat{\boldsymbol{\theta}})\right) ds, \\
            &\ddot{l}_f(\hat{\boldsymbol{\theta}}) = \frac{1}{n} \sum_{i=1}^n \ddot{l}_i(\hat{\boldsymbol{\theta}}) = \boldsymbol{M}_g.
        \end{aligned}
    \end{equation*}
\end{lemma}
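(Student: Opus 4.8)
The plan is to insert the pivot quantity $\frac{1}{n}\ddot{l}^*_c(\hat{\boldsymbol{\theta}})$ and split
\begin{equation*}
B_c - \ddot{l}_f(\hat{\boldsymbol{\theta}}) = \Big[B_c - \tfrac{1}{n}\ddot{l}^*_c(\hat{\boldsymbol{\theta}})\Big] + \Big[\tfrac{1}{n}\ddot{l}^*_c(\hat{\boldsymbol{\theta}}) - \ddot{l}_f(\hat{\boldsymbol{\theta}})\Big],
\end{equation*}
and to show each bracket is $o_p(1)$ conditional on $\mathcal{D}_n$. The first bracket measures how much the weighted Hessian moves as $\boldsymbol{\theta}$ travels from $\hat{\boldsymbol{\theta}}$ to $\tilde{\boldsymbol{\theta}}_c$ along the segment; the second is a pure subsampling-fluctuation term evaluated at the fixed point $\hat{\boldsymbol{\theta}}$. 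This mirrors the device used in the proof of Lemma~\ref{lemma1}, adapted to the censoring weights.

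For the second bracket, first I would verify that $\frac{1}{n}\ddot{l}^*_c(\hat{\boldsymbol{\theta}})$ is conditionally unbiased for $\ddot{l}_f(\hat{\boldsymbol{\theta}}) = \boldsymbol{M}_g$. Writing $\frac{1}{n}\ddot{l}^*_c(\hat{\boldsymbol{\theta}}) = \frac{1}{n}\sum_{i=1}^{n_0}\ddot{l}_i(\hat{\boldsymbol{\theta}}) + \frac{1}{n(r-n_0)}\sum_{m=1}^{r-n_0}\tilde{\pi}_{X_m}^{-1}\ddot{l}_{X_m}(\hat{\boldsymbol{\theta}})$, where $X_1,\dots,X_{r-n_0}$ are the censored indices drawn with replacement according to $\tilde{\boldsymbol{\pi}}$, the deterministic first sum together with the inverse-probability weight on the second makes the conditional expectation equal $\frac{1}{n}\sum_{i=1}^n\ddot{l}_i(\hat{\boldsymbol{\theta}})$. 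Then I would bound the conditional variance of each matrix entry: the i.i.d.-with-replacement structure yields a variance of order $(r-n_0)^{-1}\,\frac{1}{n^2}\sum_{i>n_0}\tilde{\pi}_i^{-1}\big(\ddot{l}_i(\hat{\boldsymbol{\theta}})^{jk}\big)^2$, which, after factoring out $\max_i (n\tilde{\pi}_i)^{-1}=O_p(1)$ from the boundedness of the subsampling probabilities (Assumption~\ref{ass8}) and using $\frac{1}{n}\sum_i \big(\ddot{l}_i(\hat{\boldsymbol{\theta}})^{jk}\big)^2 = O_p(1)$ from the moment bound $\mathbb{E}\{\ddot{l}_{(jk)}^2\}<\infty$ in Assumption~\ref{ass6}, is $O_p\big((r-n_0)^{-1}\big) = o_p(1)$ since $r-n_0\to\infty$. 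Chebyshev's inequality then gives the second bracket.

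For the first bracket, I would use the integral form with the Lipschitz condition of Assumption~\ref{ass6}: entrywise,
\begin{equation*}
\Big| B_c^{jk} - \tfrac{1}{n}\ddot{l}^*_c(\hat{\boldsymbol{\theta}})^{jk} \Big| \le \frac{1}{2}\,\big\|\tilde{\boldsymbol{\theta}}_c - \hat{\boldsymbol{\theta}}\big\|\;\frac{1}{n}\sum_{i=1}^r \omega_i^c\,\psi(\boldsymbol{x}_i^c),
\end{equation*}
the factor $\tfrac12$ arising from $\int_0^1 s\,ds$. By the same unbiasedness-plus-Chebyshev argument as above, now applied to $\psi$ in place of the Hessian entries and invoking $\mathbb{E}\{\psi^2\}<\infty$, the weighted sum $\frac{1}{n}\sum_{i=1}^r\omega_i^c\psi(\boldsymbol{x}_i^c)$ is $O_p(1)$; combined with the hypothesis $\|\tilde{\boldsymbol{\theta}}_c - \hat{\boldsymbol{\theta}}\| = o_p(1)$ this forces the first bracket to be $o_p(1)$, and adding the two brackets finishes the proof. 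The main obstacle is controlling the conditional variances of the inverse-probability-weighted sums: because the censored subsample is drawn only from $\{i>n_0\}$, the weights $\tilde{\pi}_i^{-1}$ can be arbitrarily large, so the argument genuinely relies both on the uniform bound $\max_i (n\tilde{\pi}_i)^{-1}=O_p(1)$ and on $r-n_0\to\infty$ (equivalently $n_0/r\to 0$) to make the fluctuation vanish.
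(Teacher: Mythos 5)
Your proposal is correct and follows essentially the same route as the paper's proof: the same split into a path-variation term controlled by the Lipschitz function $\psi$ (with the weighted sum $\frac{1}{n}\sum_{i=1}^r\omega_i^c\psi(\boldsymbol{x}_i^c)$ shown to be $O_{p|\mathcal{D}_n}(1)$ via conditional mean/variance and the bound $\max_{i>n_0}(n\tilde{\pi}_i)^{-1}=O_p(1)$) plus a subsampling-fluctuation term at $\hat{\boldsymbol{\theta}}$ handled by conditional unbiasedness, an $O_p((r-n_0)^{-1})$ entrywise variance bound, and Chebyshev. The only cosmetic difference is that you merge the paper's three-term triangle inequality into two brackets and make the $\int_0^1 s\,ds=\tfrac12$ factor explicit; nothing of substance differs.
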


\begin{proof}
Note that
\begin{equation*}
    \mathbb{E}\left( \frac{1}{n} \sum_{i=1}^r \omega_i^c \psi(\boldsymbol{x}^c_i) \Big| \mathcal{D}_n \right) = \frac{1}{n} \sum_{i=1}^n \psi(\boldsymbol{x}_i) = \mathbb{E}\left\{\psi(\boldsymbol{x})\right\} + o_p(1),
\end{equation*}
and
\begin{equation*}
    \begin{aligned}
        &Var \left( \frac{1}{n} \sum_{i=1}^r \omega_i^c \psi(\boldsymbol{x}^c_i) \Big| \mathcal{D}_n \right) \\
        =& \frac{1}{r-n_0}\sum_{i=n_0+1}^{n} \frac{1}{n^2\tilde{\pi}_i} \psi(\boldsymbol{x}_i)^2 \\
        \leq & \max_{i=n_0+1,...,n} \left( \frac{1}{n\tilde{\pi}_i} \right) \frac{\sum_{i=n_0+1}^n \psi(\boldsymbol{x}_i)^2}{n(r-n_0)}  \\
        =& O_p\left(\frac{1}{r-n_0}\right) = O_p\left(r^{-1}\right).
    \end{aligned}
\end{equation*}
Then we have
\begin{equation*}
    \frac{1}{n} \sum_{i=1}^r \omega_i^c \psi(\boldsymbol{x}^c_i) = O_{p|\mathcal{D}_n}(1)
\end{equation*}
For every $k,l =1,...,d$ and $s\in (0,1)$, from Assumption \ref{ass6}, we have
\begin{equation*}
\begin{aligned}
    &\left\vert \frac{1}{n} \ddot{l}^*_{c(kl)} \left(\hat{\boldsymbol{\theta}} + s(\tilde{\boldsymbol{\theta}}_c - \hat{\boldsymbol{\theta}})\right) -  \frac{1}{n} \ddot{l}^*_{c(kl)} \left( \hat{\boldsymbol{\theta}}\right) \right\vert \\
    =& s\left\Vert \tilde{\boldsymbol{\theta}}_c - \hat{\boldsymbol{\theta}} \right\Vert \times \frac{1}{n} \sum_{i=1}^r \omega_i^c \psi(\boldsymbol{x}^c_i) \\
    =&o_p(1) \times O_{p|\mathcal{D}_n}(1) \\
    =& o_p(1).
\end{aligned}
\end{equation*}
Based on Assumption \ref{ass6}, for an arbitrary fixed $\boldsymbol{\theta}$, we have
\begin{equation*}\label{66}
\begin{aligned}
    \frac{1}{n} \sum_{i=1}^n \ddot{l}^2_{i(kl)}(\hat{\boldsymbol{\theta}}) &\leq \frac{2}{n} \sum_{i=1}^2 \ddot{l}^2_{i(kl)}(\boldsymbol{\theta}) + \frac{2\left\Vert \hat{\boldsymbol{\theta}} - \boldsymbol{\theta} \right\Vert^2}{n} \sum_{i=1}^n \psi(\boldsymbol{x}_i)^2 \\
    &=O_p(1)+o_p(1)\\
    &= O_p(1)
\end{aligned}
\end{equation*}
According to this, we have
\begin{equation*}
\begin{aligned}
   &\mathbb{E} \left\{\frac{1}{n} \ddot{l}^*_{c(kl)} \left( \hat{\boldsymbol{\theta}}\right) \Big| \mathcal{D}_n \right\} = \frac{1}{n} \sum_{i=1}^n \ddot{l}_{i(kl)}(\hat{\boldsymbol{\theta}}) = \ddot{l}_{f(kl)}(\hat{\boldsymbol{\theta}})\\
   &Var\left\{\frac{1}{n} \ddot{l}^*_{c(kl)} \left( \hat{\boldsymbol{\theta}}\right) \Big| \mathcal{D}_n \right\}\\
   \leq& \frac{1}{r-n_0} \sum_{i=n_0+1}^n \frac{1}{n^2\tilde{\pi}_i} \ddot{l}_{i(kl)}(\hat{\boldsymbol{\theta}})^2 \\
   \leq& \frac{1}{n(r-n_0)}\max_{i=n_0+1,...,n}\left( \frac{1}{n\pi_i} \right)\sum_{i=n_0+1}^n \ddot{l}_{i(kl)}(\hat{\boldsymbol{\theta}})^2 \\
   =&O_p\left(\frac{1}{r-n_0}\right) = O_p(r^{-1})
\end{aligned}
\end{equation*}
Then, by Chebyshev's inequality,
\begin{equation*}
    \left\vert \frac{1}{n}\ddot{l}^*_{c(kl)}(\hat{\boldsymbol{\theta}}) -\frac{1}{n} \sum_{i=1}^n \ddot{l}_{i(kl)}(\hat{\boldsymbol{\theta}}) \right\vert = O_{P|\mathcal{D}_n}\left(r^{-\frac{1}{2}}\right) = o_{p|\mathcal{D}_n}(1).
\end{equation*}
Therefore, 
\begin{equation*}
\begin{aligned}
    &\left\vert B_{c(kl)} - \ddot{l}_{f(kl)}(\hat{\boldsymbol{\theta}}) \right\vert\\
    \leq& \int_0^1 \left\vert \frac{1}{n} \ddot{l}^*_{c(kl)} \left(\hat{\boldsymbol{\theta}} + s(\tilde{\boldsymbol{\theta}}_c - \hat{\boldsymbol{\theta}})\right) - \frac{1}{n} \sum_{i=1}^n \ddot{l}_{i(kl)}(\hat{\boldsymbol{\theta}}) \right\vert ds \\
    \leq&  \int_0^1 \left\vert\frac{1}{n} \sum_{i=1}^{n_0} \ddot{l}_{i(kl)}\left(\hat{\boldsymbol{\theta}} + s(\tilde{\boldsymbol{\theta}}_c - \hat{\boldsymbol{\theta}})\right) - \frac{1}{n} \sum_{i=1}^{n_0}\ddot{l}_{i(kl)}(\hat{\boldsymbol{\theta}}) \right\vert ds \\
    +& \int_0^1 \left\vert\frac{1}{n} \sum_{i=n_0+1}^r \frac{1}{(r-n_0)\tilde{\pi}_i} \ddot{l}_{i(kl)}\left(\hat{\boldsymbol{\theta}} + s(\tilde{\boldsymbol{\theta}}_c - \hat{\boldsymbol{\theta}})\right) \right.\\
    -&\left. \frac{1}{n}\sum_{i=n_0+1}^r \frac{1}{(r-n_0)\tilde{\pi}_i}\ddot{l}_{i(kl)}(\hat{\boldsymbol{\theta}}) \right\vert ds \\
    +& \left\vert \frac{1}{n}\sum_{i=n_0+1}^r \frac{1}{(r-n_0)\tilde{\pi}_i}\ddot{l}_{i(kl)}(\hat{\boldsymbol{\theta}}) - \frac{1}{n}\sum_{i=n_0+1}^r\ddot{l}_{i(kl)}(\hat{\boldsymbol{\theta}})
    \right\vert \\
    =&o_{p}(1) + o_p(1) + o_{p|\mathcal{D}_n}(1) = o_{p|\mathcal{D}_n}(1).
\end{aligned}
\end{equation*}
\end{proof}

\begin{lemma} \label{lemma4}
    Under Assumptions \ref{ass7} and \ref{ass9}, given $\mathcal{D}_n$ in probability, we have
    \begin{equation*}
       \boldsymbol{\Lambda}^{-\frac{1}{2}}_c \sum_{i=1}^r \frac{\sqrt{r}}{n} \omega_i^c\dot{l}^c_i(\hat{\boldsymbol{\theta}}) \xrightarrow{d} N(0,\boldsymbol{I})
    \end{equation*}
\end{lemma}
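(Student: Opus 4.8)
The plan is to establish a conditional Lindeberg--Feller central limit theorem for the weighted score, as in the proof of Theorem \ref{thm:genenral-mle}, but taking into account that the $n_0$ uncensored points enter deterministically while only the censored block is randomly sampled. Writing $g_i \triangleq \dot{l}^c_i(\hat{\boldsymbol{\theta}})$, I would first split the target quantity according to the two-stage design:
\begin{equation*}
T \triangleq \sum_{i=1}^r \frac{\sqrt{r}}{n}\omega_i^c g_i = \frac{\sqrt{r}}{n}\sum_{i=1}^{n_0} g_i + \frac{\sqrt{r}}{n(r-n_0)}\sum_{i=n_0+1}^{r}\frac{1}{\tilde{\pi}_i^c}g_i^c .
\end{equation*}
The first sum (over the fully retained uncensored data) is deterministic given $\mathcal{D}_n$; the second is a sum of $r-n_0$ vectors that are i.i.d.\ conditional on $\mathcal{D}_n$, each index being drawn from $\{n_0+1,\dots,n\}$ with probability $\tilde{\pi}_j$. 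I set $Z_i \triangleq \frac{\sqrt{r}}{n(r-n_0)}(\tilde{\pi}_i^c)^{-1}g_i^c$ for the sampled censored terms.

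The crucial step is to show that $T$ has conditional mean zero. A single sampled term obeys $\mathbb{E}\{(\tilde{\pi}_i^c)^{-1}g_i^c\mid\mathcal{D}_n\}=\sum_{j=n_0+1}^n g_j$, so the conditional mean of the random part equals $\frac{\sqrt{r}}{n}\sum_{j=n_0+1}^n g_j$; adding the deterministic block and invoking the full-data score identity $\sum_{i=1}^n \dot{l}_i(\hat{\boldsymbol{\theta}})=0$ yields $\mathbb{E}(T\mid\mathcal{D}_n)=\frac{\sqrt{r}}{n}\sum_{i=1}^n g_i=0$. Hence $T=\sum_{i=n_0+1}^r\{Z_i-\mathbb{E}(Z_i\mid\mathcal{D}_n)\}$ is exactly a sum of centred i.i.d.\ vectors. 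Computing the conditional second moment of one sampled term and subtracting its squared mean then gives
\begin{equation*}
\mathrm{Var}(T\mid\mathcal{D}_n)=\frac{r}{r-n_0}\left[\frac{1}{n^2}\sum_{i=n_0+1}^n\frac{g_ig_i^\top}{\tilde{\pi}_i}-\left(\frac{1}{n}\sum_{i=n_0+1}^n g_i\right)^2\right]=\frac{r}{r-n_0}\boldsymbol{\Lambda}_c ,
\end{equation*}
so $\boldsymbol{\Lambda}_c^{-1/2}T$ has conditional covariance $\frac{r}{r-n_0}\boldsymbol{I}\to\boldsymbol{I}$ under $n_0/r=o(1)$.

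It then remains to verify a Lyapunov condition for the centred array $\boldsymbol{\Lambda}_c^{-1/2}\{Z_i-\mathbb{E}(Z_i\mid\mathcal{D}_n)\}$. Using the crude bound $\tilde{\pi}_j^{-1}\le n\max_{j>n_0}(n\tilde{\pi}_j)^{-1}$ supplied by Assumption \ref{ass8}, together with $\frac{1}{n}\sum_{j=1}^n\Vert g_j\Vert^4=O_p(1)$ from Assumption \ref{ass7}, I would bound the fourth conditional moment of a single sampled term by $O_p\{r^2(r-n_0)^{-4}\}$; summing over the $r-n_0$ terms produces $O_p\{r^2(r-n_0)^{-3}\}=O_p(r^{-1})\to0$, which is the Lyapunov ($\delta=2$) condition. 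Because $\boldsymbol{\Lambda}(\boldsymbol{\theta}_0)$ is positive definite (Assumption \ref{ass7}), $\boldsymbol{\Lambda}_c$ is nonsingular for large $n$ and $\boldsymbol{\Lambda}_c^{-1/2}$ is well defined, so the multivariate Lindeberg--Feller theorem used for Theorem \ref{thm:genenral-mle} delivers $\boldsymbol{\Lambda}_c^{-1/2}T\xrightarrow{d}N(0,\boldsymbol{I})$ conditionally on $\mathcal{D}_n$, which is the assertion.

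The main obstacle---and the genuine difference from the general-subsampling Lemma \ref{lemma1}---is the mean cancellation of the second step rather than the moment estimates. Since the uncensored data are kept with probability one, the random part of $T$ ranges over the censored data only and does not by itself have zero conditional mean; the deterministic uncensored block must supply exactly the compensating term, which is precisely what the score identity $\sum_{i=1}^n\dot{l}_i(\hat{\boldsymbol{\theta}})=0$ guarantees. The variance of this non-centred, multinomial-type sum is also what generates the subtracted outer product $\left(\frac{1}{n}\sum_{i>n_0}g_i\right)^2$ that distinguishes $\boldsymbol{\Lambda}_c$ from $\boldsymbol{\Lambda}_g$; keeping this bookkeeping correct is where the care lies.
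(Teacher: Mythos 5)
Your proposal is correct and follows essentially the same route as the paper's proof: conditional mean zero via the full-data score identity $\sum_{i=1}^n\dot{l}_i(\hat{\boldsymbol{\theta}})=0$, a conditional variance computation over the multinomially sampled censored block, and a Lindeberg--Feller/Lyapunov verification using Assumptions \ref{ass7} and \ref{ass8}. If anything you are slightly more careful than the paper, which states the conditional variance as exactly $\boldsymbol{\Lambda}_c$ whereas your factor $\tfrac{r}{r-n_0}$ is the correct one (harmless since $n_0/r=o(1)$), and which applies the CLT to uncentred summands where your explicit centring of the i.i.d.\ array is the cleaner bookkeeping.
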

\begin{proof}
    We use the Lindeberg-Feller Theorem to prove \citep{Van_1998}.
    Firstly, note that
    \begin{equation*}
    \begin{aligned}
        &\mathbb{E}\left\{\sum_{i=1}^r\frac{\sqrt{r}}{n} \omega_i^c\dot{l}^c_i(\hat{\boldsymbol{\theta}}) \Big| \mathcal{D}_n\right\}= \frac{\sqrt{r}}{n}\sum_{i=1}^n \dot{l}_i(\hat{\boldsymbol{\theta}}) = 0,\\
        &Var\left\{\sum_{i=1}^r \frac{\sqrt{r}}{n} \omega_i^c\dot{l}^c_i(\hat{\boldsymbol{\theta}}) \Big| \mathcal{D}_n\right\} = \boldsymbol{\Lambda}_c\\
        =& \frac{1}{n^2} \sum_{i=n_0+1}^n \frac{1}{\tilde{\pi}_i} \dot{l}_i(\hat{\boldsymbol{\theta}})^2 - \left[ \frac{1}{n}\sum_{i=n_0+1}^n \dot{l}_i(\hat{\boldsymbol{\theta}}) \right]^2 \\
        \leq& \frac{1}{n}\max_{i=n_0+1,...,n}\left(\frac{1}{n\tilde{\pi}_i} \right) \sum_{i=n_0+1}^n \dot{l}_i(\hat{\boldsymbol{\theta}})^2 \\
        =&O_p(1).
    \end{aligned}
    \end{equation*}
    where the inequality is in the sense Loewner ordering. Besides, for $\forall \epsilon>0$ and some $\delta \in (0,2]$,
    \begin{equation*}
    \begin{aligned}
        &\mathbb{E}\left\{ \sum_{i=1}^r \left\Vert \frac{\sqrt{r}}{n}\omega_i^c \dot{l}^c_i(\hat{\boldsymbol{\theta}}) \right\Vert^2 I\left( \left\Vert \frac{\sqrt{r}}{n}\omega_i^c \dot{l}^c_i(\hat{\boldsymbol{\theta}}) \right\Vert>\epsilon \right) \Big| \mathcal{D}_n \right\} \\
        \leq & r^{1+\frac{\delta}{2}}n^{-2-\delta}\epsilon^{-\delta} \mathbb{E}\left\{ \sum_{i=1}^r \left\Vert \omega_i^c \dot{l}^c_i(\hat{\boldsymbol{\theta}}) \right\Vert^{2+\delta} \Big| \mathcal{D}_n \right\} \\
        \leq & \frac{r^{1+\frac{\delta}{2}}}{n^{2+\delta}\epsilon^{\delta}} \left[\sum_{i=1}^{n_0} \left\Vert \dot{l}^c_i(\hat{\boldsymbol{\theta}}) \right\Vert^{2+\delta} + \sum_{i=n_0+1}^n \frac{1}{[(r-n_0)\tilde{\pi}]^{1+\delta}_i} \left\Vert \dot{l}^c_i(\hat{\boldsymbol{\theta}}) \right\Vert^{2+\delta}\right] \\
        \leq & o_p(1)+ \frac{1}{r^{\frac{\delta}{2}}\epsilon^\delta} \max_{i=n_0+1,...,n}\left(\frac{1}{n\tilde{\pi}_i}\right)^{1+\delta} \frac{1}{n} \sum_{i=n_0+1}^n\left\Vert \dot{l}^c_i(\hat{\boldsymbol{\theta}}) \right\Vert^{2+\delta} \\
        =&O_p\left(r^{-\frac{\delta}{2}}\right)
    \end{aligned}
    \end{equation*}
    Hence, the Lindeberg-Feller conditions are satisfied. The central limit theorem shows that conditionally on $\mathcal{D}_n$,
     \begin{equation*}
       \boldsymbol{\Lambda}^{-\frac{1}{2}}_c \sum_{i=1}^r \frac{\sqrt{r}}{n} \omega_i^c\dot{l}^c_i(\hat{\boldsymbol{\theta}}) \xrightarrow{d} N(0,\boldsymbol{I}).
    \end{equation*}
\end{proof}

Now we use the Lemmas \ref{lemma3} and \ref{lemma4} to prove the Theorem \ref{thm:censor-all}.
\begin{proof}
    Directly calculation shows that, for any $\boldsymbol{\theta}$,
\begin{equation*}
\begin{aligned}
    &\mathbb{E} \left\{ \frac{1}{n}l^*_c(\boldsymbol{\theta}) \Big| \mathcal{D}_n \right\} = l_f(\boldsymbol{\theta}), \\
    &Var\left\{ \frac{1}{n}l^*_c(\boldsymbol{\theta}) \Big| \mathcal{D}_n \right\}\\
    &= \frac{1}{r-n_0}\left\{ \frac{1}{n}\sum_{i=n_0+1}^n \frac{1}{n\tilde{\pi}_i}\dot{l}_i^2(\hat{\boldsymbol{\theta}}) - \left[ \frac{1}{n}\sum_{i=n_0+1}^nl_i(\hat{\boldsymbol{\theta}}) \right]^2 \right\} 
\end{aligned} 
\end{equation*}
By Chebyshev's inequality, for any $\epsilon>0$,
\begin{equation*}
\begin{aligned}
     P\left\{\left\vert \frac{1}{n}l^*_c(\boldsymbol{\theta})-l_f(\boldsymbol{\theta}) \right\vert\geq \epsilon 
    \Big| \mathcal{D}_n \right\} &\leq \frac{Var\left\{ \frac{1}{n}l^*_c(\boldsymbol{\theta}) \Big| \mathcal{D}_n \right\}}{\epsilon^2} \\
    &=O_p\left(\frac{1}{r-n_0}\right) = O_p(r^{-1})
\end{aligned}
\end{equation*}
Therefore, for any $\boldsymbol{\theta}$,
\begin{equation*}
    \frac{1}{n}l^*_c(\boldsymbol{\theta})-l_f(\boldsymbol{\theta}) = o_{p|\mathcal{D}_n}(1)
\end{equation*}
Then, from Theorem 5.9 of \cite{Van_1998}, conditionally on $\mathcal{D}_n$,
\begin{equation*}
    \left\Vert \tilde{\boldsymbol{\theta}}_c - \hat{\boldsymbol{\theta}} \right\Vert = o_{p|\mathcal{D}_n}(1) = o_p(1).
\end{equation*}
By Taylor expansion,
\begin{equation*}
    0=\frac{1}{n}\dot{l}^*_c(\tilde{\boldsymbol{\theta}}_c)=\frac{1}{n}\dot{l}^*_c(\hat{\boldsymbol{\theta}})+B_c\cdot(\tilde{\boldsymbol{\theta}}_c-\hat{\boldsymbol{\theta}}),
\end{equation*}
By Lemma \ref{lemma3}, we have
\begin{equation*}
    0=\frac{1}{n}\dot{l}^*_c(\tilde{\boldsymbol{\theta}}_c)=\frac{1}{n}\dot{l}^*_c(\hat{\boldsymbol{\theta}})+\left[\ddot{l}_f(\hat{\boldsymbol{\theta}})+o_p(1)\right]\cdot(\tilde{\boldsymbol{\theta}}_c-\hat{\boldsymbol{\theta}}),
\end{equation*}
which leads to
\begin{equation*}
\begin{aligned}
    \tilde{\boldsymbol{\theta}}_c-\hat{\boldsymbol{\theta}} &= -\left[\ddot{l}_f(\hat{\boldsymbol{\theta}})+o_p(1)\right]^{-1}\frac{1}{n}\dot{l}^*_c(\hat{\boldsymbol{\theta}})\\
    &=-\left[\ddot{l}_f(\hat{\boldsymbol{\theta}})+o_p(1)\right]^{-1}\frac{1}{n}\sum_{i=1}^r\omega_i^c\dot{l}^c_i(\hat{\boldsymbol{\theta}}) \\
    &= -\frac{1}{\sqrt{r}} \left[\ddot{l}_f(\hat{\boldsymbol{\theta}})+o_p(1)\right]^{-1} \boldsymbol{\Lambda}_c^{\frac{1}{2}}\cdot \sqrt{r}\boldsymbol{\Lambda}_c^{-\frac{1}{2}}\frac{1}{n}\sum_{i=1}^r\omega_i^c\dot{l}^c_i(\hat{\boldsymbol{\theta}})
\end{aligned}
\end{equation*}
Note that $\ddot{l}_f(\hat{\boldsymbol{\theta}}) = \boldsymbol{M}_g$. From Lemma \ref{lemma4} and Slutsky's theorem, we can derive that conditionally on $\mathcal{D}_n$,
\begin{equation*}
    \sqrt{r} (\tilde{\boldsymbol{\theta}}_c - \hat{\boldsymbol{\theta}}) \xrightarrow{d} N(0,\boldsymbol{V}_c).
\end{equation*}
where
\begin{equation*}
\begin{aligned}
    \boldsymbol{V}_c &= \boldsymbol{M}_g^{-1} \boldsymbol{\Lambda}_c \boldsymbol{M}_g^{-1}\\
    \boldsymbol{\Lambda}_c &= \frac{1}{n^2} \sum_{i=n_0+1}^n \frac{1}{\tilde{\pi}_i} \dot{l}_i(\hat{\boldsymbol{\theta}})^2 - \left[ \frac{1}{n}\sum_{i=n_0+1}^n \dot{l}_i(\hat{\boldsymbol{\theta}}) \right]^2\\
    \boldsymbol{M}_g &= \frac{1}{n}\sum_{i=1}^n\ddot{l}_i(\hat{\boldsymbol\theta})
\end{aligned}
\end{equation*}

\end{proof}

\subsubsection{Proof of Theorem \ref{thm:censor-true}}
The proof of Theorem \ref{thm:censor-true} is similar to that of Theorem \ref{thm:general-true}.
\begin{proof}
Let $S_c = \sqrt{r}(\tilde{\boldsymbol{\theta}}_c - \hat{\boldsymbol{\theta}})$ and $Y_c = \sqrt{r}(\hat{\boldsymbol{\theta}} - \boldsymbol{\theta}_0)$, we have
\begin{equation*}
    \sqrt{r}(\tilde{\boldsymbol{\theta}}_c - \boldsymbol{\theta}_0) = S_c+Y_c.
\end{equation*}
Under Theorem \ref{thm:censor-all}, we have
\begin{equation*}
\begin{aligned}
    \mathbb{E}(e^{\mathrm{i}t^\top S_c} | \mathcal{D}_n) = e^{-0.5t^\top\boldsymbol{V}_ct} + o_{p|\mathcal{D}_n}(1)=e^{-0.5t^\top\boldsymbol{V}_ct} + o_{p}(1)
\end{aligned}
\end{equation*}
For every $k,l = 1,2,...,d$, from Assumption \ref{ass6}, we have
\begin{equation*}
\begin{aligned}
    \left\vert \frac{1}{n} \sum_{i=1}^n \ddot{l}_{i(kl)}(\hat{\boldsymbol{\theta}}) - \frac{1}{n}\sum_{i=1}^n\ddot{l}_{i(kl)}(\boldsymbol{\theta}_0) \right\vert \leq& \left\Vert \hat{\boldsymbol{\theta}} - \boldsymbol{\theta}_0 \right\Vert \cdot \frac{1}{n}\sum_{i=1}^n \psi(\boldsymbol{x}_i)\\
    =&o_p(1).
\end{aligned}
\end{equation*}
Hence, 
\begin{equation*}
    \ddot{l}_f(\hat{\boldsymbol{\theta}}) = \frac{1}{n} \sum_{i=1}^n \ddot{l}_i(\hat{\boldsymbol{\theta}}) = \frac{1}{n} \ddot{l}_f(\boldsymbol{\theta}_0) + o_p(1) = \mathbb{E}\left[\ddot{l}(\boldsymbol{x},\boldsymbol{\theta}_0)\right]+o_p(1).
\end{equation*}
Next, we consider the relation between $\boldsymbol{\Lambda}_c(\hat{\boldsymbol{\theta}})$ and $\boldsymbol{\Lambda}_{0c}(\boldsymbol{\theta}_0)$.
\begin{equation*}
\begin{aligned}
    &\left\Vert \boldsymbol{\Lambda}_c(\hat{\boldsymbol{\theta}})-\boldsymbol{\Lambda}_{0c}(\boldsymbol{\theta}_0) \right\Vert = \left\Vert \frac{1}{n}\sum_{i=n_0+1}^n \frac{1}{\tilde{\pi}_i} \dot{l}_i(\hat{\boldsymbol{\theta}})^2 - \left[ \frac{1}{n}\sum_{i=n_0+1}^n \dot{l}_i(\hat{\boldsymbol{\theta}})\right]^2 \right.\\
    &\left. - \frac{1}{n}\sum_{i=n_0+1}^n \frac{1}{\tilde{\pi}_i} \dot{l}_i(\boldsymbol{\theta}_0)^2 +\left[ \frac{1}{n}\sum_{i=n_0+1}^n \dot{l}_i(\boldsymbol{\theta}_0)\right]^2 \right\Vert \\
    &\leq \frac{1}{n} \left\Vert \sum_{i=n_0+1}^n\frac{1}{n\tilde{\pi}_i}\left[\dot{l}_i(\hat{\boldsymbol{\theta}})^2 - \dot{l}_i(\boldsymbol{\theta}_0)^2\right] \right\Vert + \\
    &\frac{1}{n^2}\left\Vert \left[\sum_{i=n_0+1}^n \dot{l}_i(\hat{\boldsymbol{\theta}})\right]^2 - \left[\sum_{i=n_0+1}^n \dot{l}_i(\boldsymbol{\theta}_0)\right]^2  \right\Vert.
\end{aligned}
\end{equation*}
We analyse the two parts of the last equality respectively. 
\begin{equation*}
\begin{aligned}
    &\frac{1}{n} \left\Vert \sum_{i=n_0+1}^n\frac{1}{n\tilde{\pi}_i}\left[\dot{l}_i(\hat{\boldsymbol{\theta}})^2 - \dot{l}_i(\boldsymbol{\theta}_0)^2\right] \right\Vert \\
    & \leq \max_{i=n_0+1,...,n} \left(\frac{1}{n\tilde{\pi}_i} \right)\left[\frac{1}{n}\sum_{i=n_0+1}^n\left\Vert \dot{l}_i(\hat{\boldsymbol{\theta}})^2 - \dot{l}_i(\hat{\boldsymbol{\theta}})\dot{l}_i(\boldsymbol{\theta}_0)\right\Vert \right.\\
    &\left. + \frac{1}{n}\sum_{i=n_0+1}^n\left\Vert\dot{l}_i(\hat{\boldsymbol{\theta}})\dot{l}_i(\boldsymbol{\theta}_0)- \dot{l}_i(\boldsymbol{\theta}_0)^2  \right\Vert\right] \\
    & = \max_{i=n_0+1,...,n} \left(\frac{1}{n\tilde{\pi}_i} \right) \frac{1}{n}\sum_{i=n_0+1}^n \left[\left\Vert \dot{l}_i(\hat{\boldsymbol{\theta}}) \right\Vert +\left\Vert \dot{l}_i(\boldsymbol{\theta}_0) \right\Vert\right]\cdot\\
    &\quad\left\Vert \dot{l}_i(\hat{\boldsymbol{\theta}})-\dot{l}_i(\boldsymbol{\theta}_0) \right\Vert
\end{aligned}
\end{equation*}
By Taylor expansion, we have
\begin{equation*}
\begin{aligned}
    \dot{l}_i(\hat{\boldsymbol{\theta}}) =  \dot{l}_i(\boldsymbol{\theta}_0) + (\hat{\boldsymbol{\theta}}-\boldsymbol{\theta}_0)B_{ci}\\
    B_{ci}=\int_0^1 \ddot{l}_i(\boldsymbol{\theta}_0+s(\hat{\boldsymbol{\theta}}-\boldsymbol{\theta}_0))ds
\end{aligned}
\end{equation*}
Then, using Assumption \ref{ass6},
\begin{equation*}
\begin{aligned}
    \left\Vert B_{ci}- \ddot{l}_i(\boldsymbol{\theta}_0) \right\Vert &= \int_0^1\left[\ddot{l}_i(\boldsymbol{\theta}_0+s(\hat{\boldsymbol{\theta}}-\boldsymbol{\theta}_0)) - \ddot{l}_i(\boldsymbol{\theta}_0)\right]ds \\
    &\leq d\int_0^1 \left\Vert s(\hat{\boldsymbol{\theta}}-\boldsymbol{\theta}_0) \right\Vert \psi(\boldsymbol{x}_i)ds \\
    &= 0.5d\psi(\boldsymbol{x}_i) \left\Vert \hat{\boldsymbol{\theta}}-\boldsymbol{\theta}_0 \right\Vert
\end{aligned}
\end{equation*}
Hence, 
\begin{equation*}
\begin{aligned}
    &\left\Vert \dot{l}_i(\hat{\boldsymbol{\theta}}) -  \dot{l}_i(\boldsymbol{\theta}_0) \right\Vert = \left\Vert \hat{\boldsymbol{\theta}}-\boldsymbol{\theta}_0 \right\Vert \left\Vert B_{ci}\right\Vert \\
    &\leq 0.5d\psi(\boldsymbol{x}_i) \left\Vert \hat{\boldsymbol{\theta}}-\boldsymbol{\theta}_0 \right\Vert^2 + \left\Vert \ddot{l}_i(\boldsymbol{\theta}_0) \right\Vert \left\Vert \hat{\boldsymbol{\theta}}-\boldsymbol{\theta}_0 \right\Vert 
\end{aligned}
\end{equation*}
Therefore, 
\begin{equation*}
\begin{aligned}
    &\frac{1}{n} \left\Vert \sum_{i=n_0+1}^n\frac{1}{n\tilde{\pi}_i}\left[\dot{l}_i(\hat{\boldsymbol{\theta}})^2 - \dot{l}_i(\boldsymbol{\theta}_0)^2\right] \right\Vert \\
    &\leq \max_{i=n_0+1,...,n} \left(\frac{1}{n\tilde{\pi}_i} \right) \frac{1}{n} \sum_{i=n_0+1}^n \\
    &\left[ 0.5d\psi(\boldsymbol{x}_i) \left\Vert \hat{\boldsymbol{\theta}}-\boldsymbol{\theta}_0 \right\Vert^2 + \left\Vert \ddot{l}_i(\boldsymbol{\theta}_0) \right\Vert \left\Vert \hat{\boldsymbol{\theta}}-\boldsymbol{\theta}_0 \right\Vert  \right]\cdot \\
    & \left[\left\Vert \dot{l}_i(\hat{\boldsymbol{\theta}}) \right\Vert +\left\Vert \dot{l}_i(\boldsymbol{\theta}_0) \right\Vert\right] \\
    &\leq \max_{i=n_0+1,...,n} \left(\frac{1}{n\tilde{\pi}_i} \right) \left\{ 0.5d\left\Vert \hat{\boldsymbol{\theta}}-\boldsymbol{\theta}_0 \right\Vert^2 \right.\\
    &\left.\left[\sum_{i=n_0+1}^n \left\Vert \dot{l}_i(\boldsymbol{\theta}_0) \right\Vert \psi(\boldsymbol{x}_i) + \sum_{i=n_0+1}^n \left\Vert \dot{l}_i(\hat{\boldsymbol{\theta}}) \right\Vert \psi(\boldsymbol{x}_i)\right]\right. \\
    &\left.+ \left\Vert \hat{\boldsymbol{\theta}}-\boldsymbol{\theta}_0 \right\Vert \left[ \frac{1}{n}\sum_{i=n_0+1}^n \left\Vert \ddot{l}_i(\boldsymbol{\theta}_0) \right\Vert \left\Vert \dot{l}_i(\hat{\boldsymbol{\theta}}) \right\Vert +\right.\right.\\
    &\left.\left.\frac{1}{n}\sum_{i=n_0+1}^n \left\Vert \ddot{l}_i(\boldsymbol{\theta}_0) \right\Vert \left\Vert \dot{l}_i(\boldsymbol{\theta}_0) \right\Vert \right]
    \right\}
\end{aligned}
\end{equation*}
Using H{\"o}lder's inequality, as $n_0$ is finite, $n\rightarrow\infty$, we have
\begin{equation*}
\begin{aligned}
    &\frac{1}{n}\sum_{i=n_0+1}^n \left\Vert \dot{l}_i(\hat{\boldsymbol{\theta}}) \right\Vert \psi(\boldsymbol{x}_i) \\
    &\leq \frac{n-n_0}{n}\left\{ \frac{1}{n-n_0} \sum_{i=n_0+1}^n \left\Vert \dot{l}_i(\hat{\boldsymbol{\theta}}) \right\Vert^4 \right\}^{\frac{1}{4}} \\
    &\cdot\left\{ \frac{1}{n-n_0} \sum_{i=n_0+1}^n \psi(\boldsymbol{x}_i)^{\frac{4}{3}} \right\}^{\frac{3}{4}}\\
    &= O_p(1)
\end{aligned}
\end{equation*}
Similarly, $\frac{1}{n}\sum_{i=n_0+1}^n \left\Vert \dot{l}_i(\boldsymbol{\theta}_0) \right\Vert \psi(\boldsymbol{x}_i)$, \\
$\frac{1}{n}\sum_{i=n_0+1}^n \left\Vert \dot{l}_i(\hat{\boldsymbol{\theta}}) \right\Vert \left\Vert \ddot{l}_i(\boldsymbol{\theta}_0) \right\Vert$ and \\
$\frac{1}{n}\sum_{i=n_0+1}^n \left\Vert \dot{l}_i(\boldsymbol{\theta}_0) \right\Vert \left\Vert \ddot{l}_i(\boldsymbol{\theta}_0) \right\Vert$ are all $O_p(1)$. Hence,
\begin{equation*}
    \frac{1}{n} \left\Vert \sum_{i=n_0+1}^n\frac{1}{n\tilde{\pi}_i}\left[\dot{l}_i(\hat{\boldsymbol{\theta}})^2 - \dot{l}_i(\boldsymbol{\theta}_0)^2\right] \right\Vert = o_p(1).
\end{equation*}
Besides, by Assumption \ref{ass7}
\begin{equation*}
\begin{aligned}
    &\frac{1}{n^2}\left\Vert \left[\sum_{i=n_0+1}^n \dot{l}_i(\hat{\boldsymbol{\theta}})\right]^2 - \left[\sum_{i=n_0+1}^n \dot{l}_i(\boldsymbol{\theta}_0)\right]^2\right\Vert\\
    &\leq \left\Vert \frac{1}{n}\sum_{i=n_0+1}^n \dot{l}_i(\boldsymbol{\theta}_0)+\frac{1}{n}\sum_{i=n_0+1}^n\dot{l}_i(\hat{\boldsymbol{\theta}}) \right\Vert \cdot 
    \left\Vert \frac{1}{n}\sum_{i=n_0+1}^n \dot{l}_i(\hat{\boldsymbol{\theta}})-\right.\\
    &\left.\frac{1}{n}\sum_{i=n_0+1}^n\dot{l}_i(\boldsymbol{\theta}_0) \right\Vert \\
    &\leq O_p(1) \cdot \left(\frac{1}{n}\sum_{i=n_0+1}^n 0.5d\psi(\boldsymbol{x}_i)\left\Vert \hat{\boldsymbol{\theta}}-\boldsymbol{\theta}_0\right\Vert^2 +\right.\\
    &\left.\frac{1}{n}\sum_{i=n_0+1}^n\left\Vert \hat{\boldsymbol{\theta}}-\boldsymbol{\theta}_0\right\Vert\left\Vert \ddot{l}_i(\boldsymbol{\theta}_0) \right\Vert \right)\\
    &=O_p(1) \cdot o_p(1) = o_p(1).
\end{aligned}
\end{equation*}
Hence, 
\begin{equation*}
\begin{aligned}
    &\left\Vert \boldsymbol{\Lambda}_c(\hat{\boldsymbol{\theta}})-\boldsymbol{\Lambda}_{0c}(\boldsymbol{\theta}_0) \right\Vert \\
    &\leq \frac{1}{n} \left\Vert \sum_{i=n_0+1}^n\frac{1}{n\tilde{\pi}_i}\left[\dot{l}_i(\hat{\boldsymbol{\theta}})^2 - \dot{l}_i(\boldsymbol{\theta}_0)^2\right] \right\Vert +  \\
    &\frac{1}{n^2}\left\Vert \left[\sum_{i=n_0+1}^n \dot{l}_i(\hat{\boldsymbol{\theta}})\right]^2 - \left[\sum_{i=n_0+1}^n \dot{l}_i(\boldsymbol{\theta}_0)\right]^2  \right\Vert = o_p(1).
\end{aligned}
\end{equation*}
Then, 
\begin{equation*}
    \mathbb{E}(e^{\mathrm{i}t^\top S_c}|\mathcal{D}_n) = e^{-0.5t^\top \boldsymbol{M}_{0g}^{-1}\Lambda_{1c}(\boldsymbol{\theta}_0)\boldsymbol{M}_{0g}^{-1} t} + o_{p}(1)
\end{equation*}
Note that  $Y_c = \sqrt{\frac{r}{n}}\sqrt{n}(\hat{\boldsymbol{\theta}} - \boldsymbol{\theta}_0)$. Using the property of $\hat{\boldsymbol{\theta}}$, we have
\begin{equation*}
    \mathbb{E}(e^{\mathrm{i}t^\top Y_c}) \rightarrow e^{-0.5\gamma t^\top\boldsymbol{M}_{0g}^{-1}\Lambda(\boldsymbol{\theta}_0)\boldsymbol{M}_{0g}^{-1} t}
\end{equation*}
\begin{equation*}
    \begin{aligned}
         &\left\vert \mathbb{E}\left\{ e^{\mathrm{i}t^\top (S_c+Y_c)} -  e^{\mathrm{i}t^\top Y_c}e^{-0.5\gamma t^\top\boldsymbol{M}_{0g}^{-1}\Lambda(\boldsymbol{\theta}_0)\boldsymbol{M}_{0g}^{-1} t}     \right\}\right\vert \\
        =&\left\vert   \mathbb{E}\left\{ \mathbb{E} \left[e^{\mathrm{i}t^\top (S_c+Y_c)} -  e^{\mathrm{i}t^\top Y_c}e^{-0.5\gamma t^\top\boldsymbol{M}_{0g}^{-1}\Lambda(\boldsymbol{\theta}_0)\boldsymbol{M}_{0g}^{-1}t}|\mathcal{D}_n \right]\right\}\right\vert \\
        =& \left\vert \mathbb{E} \left\{ e^{\mathrm{i}t^\top Y_c} \left[ \mathbb{E}(e^{\mathrm{i}t^\top S_c}|\mathcal{D}_n) - e^{-0.5\gamma t^\top\boldsymbol{M}_{0g}^{-1}\Lambda(\boldsymbol{\theta}_0)\boldsymbol{M}_{0g}^{-1}t} \right]\right\}\right\vert \\
        \leq& \mathbb{E} \left\{ \left\vert \mathbb{E} (e^{\mathrm{i}t^\top S_c}|\mathcal{D}_n) - e^{-0.5\gamma t^\top\boldsymbol{M}_{0g}^{-1}\Lambda(\boldsymbol{\theta}_0)\boldsymbol{M}_{0g}^{-1}t} \right\vert \right\} 
    \end{aligned}
\end{equation*}
By the dominated convergence theorem, we have
\begin{equation*}
    \left\vert \mathbb{E}\left\{ e^{\mathrm{i}t^\top (S_c+Y_c)} -  e^{\mathrm{i}t^\top Y_c}e^{-0.5\gamma t^\top\boldsymbol{M}_{0g}^{-1}\Lambda(\boldsymbol{\theta}_0)\boldsymbol{M}_{0g}^{-1} t}\right\}\right\vert = o(1)
\end{equation*}
Thus, 
\begin{equation*}
\begin{aligned}
    \mathbb{E} \left\{e^{\mathrm{i}t^\top (S_c+Y_c)} \right\} = &e^{\mathrm{i}t^\top Y_c}e^{-0.5\gamma t^\top\boldsymbol{M}_{0g}^{-1}\Lambda(\boldsymbol{\theta}_0)\boldsymbol{M}_{0g}^{-1} t} + o(1) \\
    \rightarrow&
    \mathbb{E}\left\{ e^{-0.5t^\top \boldsymbol{V}_{0c} t } \right\}
\end{aligned}
\end{equation*}
Therefore, Theorem \ref{thm:censor-true} is proved.

\end{proof}

\subsubsection{Proof of Theorem \ref{ssp:general A-opt}}
\begin{proof}
As $\boldsymbol{V}_g = \boldsymbol{M_g}^{-1} \boldsymbol{\Lambda_g}\boldsymbol{M_g}^{-1}= \boldsymbol{M_g}^{-1} [\frac{1}{rn^2}\sum_{i=1}^n \frac{1}{\pi_i} \dot{l}^2_i(\hat{\boldsymbol{\theta}})]\boldsymbol{M_g}^{-1}$, 
    \begin{equation}
        \begin{aligned}
            tr(\boldsymbol{V}_g) &= \frac{1}{rn^2}tr \left[\boldsymbol{M_g}^{-1}\sum_{i=1}^n \frac{1}{\pi_i} \dot{l}^2_i(\hat{\boldsymbol{\theta}})\boldsymbol{M_g}^{-1} \right] \\
            &= \frac{1}{rn^2}\sum_{i=1}^n \frac{1}{\pi_i}tr(\boldsymbol{M_g}^{-1} \dot{l}_i(\hat{\boldsymbol{\theta}}) \dot{l}_i(\hat{\boldsymbol{\theta}})^\top \boldsymbol{M_g}^{-1})\\
            &= \frac{1}{rn^2}\left(\sum_{i=1}^n \frac{1}{\pi_i} \Vert\boldsymbol{M_g}^{-1} \dot{l}_i(\hat{\boldsymbol{\theta}})\Vert^2\right)\left(\sum_{i=1}^n \pi_i \right) \\
            &\geq \frac{1}{rn^2}\left(\sum_{i=1}^n \Vert\boldsymbol{M_g}^{-1} \dot{l}_i(\hat{\boldsymbol{\theta}})\Vert\right)^2
        \end{aligned}
    \end{equation}
    
The last inequality is based on the Cauchy-Schwarz inequality and it holds if and only if when $\pi_i \propto \Vert\boldsymbol{M_g}^{-1} \dot{l}_i(\hat{\boldsymbol{\theta}})\Vert$.
\end{proof}

\subsubsection{Proof of Theorem \ref{ssp:general L-opt}}
\begin{proof}
    \begin{equation*}
        \begin{aligned}
            tr(\boldsymbol{\Lambda}_g) &= \frac{1}{rn^2}tr \left[\sum_{i=1}^n \frac{1}{\pi_i} \dot{l}^2_i(\hat{\boldsymbol{\theta}}) \right] \\
            &= \frac{1}{rn^2}\left(\sum_{i=1}^n \frac{1}{\pi_i} \Vert \dot{l}_i(\hat{\boldsymbol{\theta}})\Vert^2\right)\left(\sum_{i=1}^n \pi_i \right) \\
            &\geq \frac{1}{rn^2}\left(\sum_{i=1}^n \Vert\dot{l}_i(\hat{\boldsymbol{\theta}})\Vert\right)^2
        \end{aligned}
    \end{equation*}
The last inequality is based on the Cauchy-Schwarz inequality and it holds if and only if when $\pi_i \propto \Vert \dot{l}_i(\hat{\boldsymbol{\theta}})\Vert$.
\end{proof}

\subsubsection{Proof of Proposition \ref{thm:RDS-mle}}
To prove Proposition \ref{thm:RDS-mle}, we need the following two lemmas.

\begin{lemma}\label{lemma5}
    Under Assumptions \ref{ass6}, if $\left\Vert \tilde{\boldsymbol{\theta}}_{RDS} - \hat{\boldsymbol{\theta}} \right\Vert = o_p(1)$, conditional on $\mathcal{D}_n$, 
    \begin{equation*}
        B_p - \ddot{l}_f(\hat{\boldsymbol{\theta}}) = o_p(1),
    \end{equation*}
    where 
    \begin{equation*}
        \begin{aligned}
            &B_p = \frac{1}{n} \int_0^1 \ddot{l}^* \left(\hat{\boldsymbol{\theta}} + s(\tilde{\boldsymbol{\theta}}_{RDS} - \hat{\boldsymbol{\theta}})\right) ds, \\
            &\ddot{l}_f(\hat{\boldsymbol{\theta}}) = \frac{1}{n} \sum_{i=1}^n \ddot{l}_i(\hat{\boldsymbol{\theta}}) = \boldsymbol{M}_g.
        \end{aligned}
    \end{equation*}
\end{lemma}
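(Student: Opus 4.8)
The plan is to follow the template of the proof of Lemma \ref{lemma3}, now conditioning on both the full data $\mathcal{D}_n$ and the pilot estimate $\tilde{\boldsymbol{\theta}}_p$, since the RDS probabilities $\pi_{\xi,i}^{RDS}(\tilde{\boldsymbol{\theta}}_p)$ depend on the pilot. The argument is in fact cleaner than the censoring case, because every observation is drawn with replacement under a single inverse-probability scheme, so no splitting into deterministically collected and subsampled parts is needed. Beyond Assumption \ref{ass6}, the only structural ingredient I need is the bound $\max_{i=1,\dots,n}(n\pi_{\xi,i}^{RDS})^{-1}=O_{p}(1)$ guaranteed by the $\xi$-smoothing in Algorithm \ref{algorithm1}; this plays the role that Assumption \ref{ass9} played in Lemma \ref{lemma3}.

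First I would control a weighted average of the Lipschitz envelope $\psi$. Writing $\tfrac{1}{n}\ddot{l}^*(\cdot)=\tfrac{1}{nr}\sum_{i=1}^r \tfrac{1}{\pi_{\xi,i}^{RDS*}(\tilde{\boldsymbol{\theta}}_p)}\ddot{l}^*_i(\cdot)$, I would show that $\tfrac{1}{nr}\sum_{i=1}^r \tfrac{1}{\pi_{\xi,i}^{RDS*}(\tilde{\boldsymbol{\theta}}_p)}\psi(\boldsymbol{x}_i^*) = O_{p|\mathcal{D}_n,\tilde{\boldsymbol{\theta}}_p}(1)$. Its conditional mean equals $\tfrac{1}{n}\sum_{i=1}^n \psi(\boldsymbol{x}_i)=\mathbb{E}\{\psi(\boldsymbol{x})\}+o_p(1)$ by unbiasedness of the weights and the law of large numbers, while its conditional variance is $O_p(r^{-1})$ because
\begin{equation*}
\frac{1}{n^2}\sum_{i=1}^n \frac{1}{\pi_{\xi,i}^{RDS}}\psi(\boldsymbol{x}_i)^2 \le \max_i\Big(\frac{1}{n\pi_{\xi,i}^{RDS}}\Big)\frac{1}{n}\sum_{i=1}^n \psi(\boldsymbol{x}_i)^2 = O_p(1),
\end{equation*}
using $\mathbb{E}\{\psi^2(\boldsymbol{x})\}<\infty$. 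Chebyshev's inequality then yields the claimed order.

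Next, by the Lipschitz bound in Assumption \ref{ass6}, for every entry $(k,l)$ and $s\in(0,1)$,
\begin{equation*}
\Big|\tfrac{1}{n}\ddot{l}^*_{(kl)}\big(\hat{\boldsymbol{\theta}}+s(\tilde{\boldsymbol{\theta}}_{RDS}-\hat{\boldsymbol{\theta}})\big)-\tfrac{1}{n}\ddot{l}^*_{(kl)}(\hat{\boldsymbol{\theta}})\Big| \le s\,\|\tilde{\boldsymbol{\theta}}_{RDS}-\hat{\boldsymbol{\theta}}\|\cdot \tfrac{1}{nr}\sum_{i=1}^r \tfrac{1}{\pi_{\xi,i}^{RDS*}}\psi(\boldsymbol{x}_i^*) = o_p(1),
\end{equation*}
since $\|\tilde{\boldsymbol{\theta}}_{RDS}-\hat{\boldsymbol{\theta}}\|=o_p(1)$ by hypothesis and the weighted $\psi$-average is $O_p(1)$. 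Separately, the argument of Lemma \ref{lemma1} (equation \eqref{lemma1.1}) applied with the smoothed probabilities gives $\tfrac{1}{n}\ddot{l}^*_{(kl)}(\hat{\boldsymbol{\theta}})-\ddot{l}_{f(kl)}(\hat{\boldsymbol{\theta}})=O_{p|\mathcal{D}_n,\tilde{\boldsymbol{\theta}}_p}(r^{-1/2})=o_p(1)$: the conditional mean is exactly $\ddot{l}_{f(kl)}(\hat{\boldsymbol{\theta}})$, and the conditional variance is $O_p(r^{-1})$ because $\tfrac{1}{n}\sum_i \ddot{l}^2_{i(kl)}(\hat{\boldsymbol{\theta}})=O_p(1)$ (which follows from Assumption \ref{ass6} exactly as in Lemma \ref{lemma3}) together with $\max_i(n\pi_{\xi,i}^{RDS})^{-1}=O_p(1)$.

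Finally I would combine these estimates through the triangle inequality,
\begin{equation*}
\big|B_{p(kl)}-\ddot{l}_{f(kl)}(\hat{\boldsymbol{\theta}})\big| \le \int_0^1 \Big|\tfrac{1}{n}\ddot{l}^*_{(kl)}\big(\hat{\boldsymbol{\theta}}+s(\tilde{\boldsymbol{\theta}}_{RDS}-\hat{\boldsymbol{\theta}})\big)-\tfrac{1}{n}\ddot{l}^*_{(kl)}(\hat{\boldsymbol{\theta}})\Big|\,ds + \big|\tfrac{1}{n}\ddot{l}^*_{(kl)}(\hat{\boldsymbol{\theta}})-\ddot{l}_{f(kl)}(\hat{\boldsymbol{\theta}})\big|,
\end{equation*}
both terms being $o_p(1)$ by the previous two steps, to conclude $B_p-\ddot{l}_f(\hat{\boldsymbol{\theta}})=o_p(1)$ entrywise and hence in norm. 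I anticipate the only delicate point to be the conditioning on the pilot $\tilde{\boldsymbol{\theta}}_p$: one must verify that the $\xi$-smoothed probabilities still satisfy $\max_i(n\pi_{\xi,i}^{RDS})^{-1}=O_p(1)$ so that the conditional-variance bounds pass to unconditional statements after integrating out $\tilde{\boldsymbol{\theta}}_p$, which is precisely what the construction $\pi_{\xi,i}^{RDS}=(1-\xi)\pi_i^{RDS}+\xi/n$ secures.
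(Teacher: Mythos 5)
Your proposal is correct and follows essentially the same route as the paper's proof: Lipschitz control of the integrand via a weighted average of $\psi$, a conditional mean/variance plus Chebyshev argument showing $\tfrac{1}{n}\ddot{l}^*_{(kl)}(\hat{\boldsymbol{\theta}})-\ddot{l}_{f(kl)}(\hat{\boldsymbol{\theta}})=O_{p|\mathcal{D}_n,\tilde{\boldsymbol{\theta}}_p}(r^{-1/2})$, and a triangle inequality to conclude. The only cosmetic difference is that the paper bounds the weighted $\psi$-term by factoring out $\max_i(n\pi_{\xi,i}^{RDS})^{-1}$ rather than computing its conditional variance directly, and your explicit remark that the $\xi$-smoothing secures this maximum bound matches the paper's discussion following Algorithm \ref{algorithm1}.
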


\begin{proof}
For every $k,l =1,...,d$ and $s\in (0,1)$, from Assumption \ref{ass7}, we have
\begin{equation*}
\begin{aligned}
    &\left\vert \frac{1}{n} \ddot{l}^*_g{(kl)} \left(\hat{\boldsymbol{\theta}} + s(\tilde{\boldsymbol{\theta}}_{RDS} - \hat{\boldsymbol{\theta}})\right) -  \frac{1}{n} \ddot{l}^*_{g(kl)} \left( \hat{\boldsymbol{\theta}}\right) \right\vert \\
    =& s\left\Vert \tilde{\boldsymbol{\theta}}_{RDS} - \hat{\boldsymbol{\theta}} \right\Vert \max_{i=1,...,n}\left(\frac{1}{n\pi_{\xi,i}^{RDS}(\tilde{\boldsymbol{\theta}}_p)}\right)\times \frac{1}{r} \sum_{i=1}^r \psi(\boldsymbol{x}^*_i) \\
    =&o_p(1) \times O_{p|\mathcal{D}_n,\tilde{\boldsymbol{\theta}}_p}(1) \\
    =& o_p(1).
\end{aligned}
\end{equation*}
Based on Assumption \ref{ass6}, for an arbitrary fixed $\boldsymbol{\theta}$, we have
\begin{equation*}
\begin{aligned}
    \frac{1}{n} \sum_{i=1}^n \ddot{l}^2_{i(kl)}(\hat{\boldsymbol{\theta}}) &\leq \frac{2}{n} \sum_{i=1}^2 \ddot{l}^2_{i(kl)}(\boldsymbol{\theta}) + \frac{2\left\Vert \hat{\boldsymbol{\theta}} - \boldsymbol{\theta} \right\Vert^2}{n} \sum_{i=1}^n \psi(\boldsymbol{x}_i)^2 \\
    &=O_p(1)+o_p(1)\\
    &= O_p(1)
\end{aligned}
\end{equation*}
According to this, we have
\begin{equation*}
\begin{aligned}
   &\mathbb{E} \left\{\frac{1}{n} \ddot{l}^*_{g(kl)} \left( \hat{\boldsymbol{\theta}}\right) \Big| \mathcal{D}_n \right\} = \frac{1}{n} \sum_{i=1}^n \ddot{l}_{i(kl)}(\hat{\boldsymbol{\theta}}) = \ddot{l}_{f(kl)}(\hat{\boldsymbol{\theta}})\\
   &Var\left\{\frac{1}{n} \ddot{l}^*_{g(kl)} \left( \hat{\boldsymbol{\theta}}\right) \Big| \mathcal{D}_n \right\} \\
   &\leq \frac{1}{r} \sum_{i=1}^n \frac{1}{n^2\pi_{\xi,i}^{RDS}(\tilde{\boldsymbol{\theta}}_p)} \ddot{l}_{i(kl)}(\hat{\boldsymbol{\theta}})^2 \\
   &\leq \frac{1}{nr}\max_{i=1,...,n}\left( \frac{1}{n\pi_{\xi,i}^{RDS}(\tilde{\boldsymbol{\theta}}_p)} \right)\sum_{i=1}^n \ddot{l}_{i(kl)}(\hat{\boldsymbol{\theta}})^2 \\
   &= O_p(r^{-1})
\end{aligned}
\end{equation*}
Then, by Chebyshev's inequality,
\begin{equation*}
    \left\vert \frac{1}{n}\ddot{l}^*_{g(kl)}(\hat{\boldsymbol{\theta}}) -\frac{1}{n} \sum_{i=1}^n \ddot{l}_{i(kl)}(\hat{\boldsymbol{\theta}}) \right\vert = O_{P|\mathcal{D}_n,\tilde{\boldsymbol{\theta}}_p}\left(r^{-\frac{1}{2}}\right) = o_{p|\mathcal{D}_n,\tilde{\boldsymbol{\theta}}_p}(1).
\end{equation*}
Therefore, 
\begin{equation*}
\begin{aligned}
    &\left\vert B_{p(kl)} - \ddot{l}_{f(kl)}(\hat{\boldsymbol{\theta}}) \right\vert\\
    &\leq \int_0^1 \left\vert \frac{1}{n} \ddot{l}^*_{g(kl)} \left(\hat{\boldsymbol{\theta}} + s(\tilde{\boldsymbol{\theta}}_{RDS} - \hat{\boldsymbol{\theta}})\right) - \frac{1}{n} \sum_{i=1}^n \ddot{l}_{i(kl)}(\hat{\boldsymbol{\theta}}) \right\vert ds \\
    & \leq \int_0^1 \left\vert\frac{1}{n} \ddot{l}^*_{g(kl)} \left(\hat{\boldsymbol{\theta}} + s(\tilde{\boldsymbol{\theta}}_{RDS} - \hat{\boldsymbol{\theta}})\right) - \frac{1}{r}\sum_{i=1}^r \frac{\ddot{l}^*_{i(kl)}(\hat{\boldsymbol{\theta}})}{n\pi_{\xi,i}^{RDS*}(\tilde{\boldsymbol{\theta}}_p)}\right\vert \\
    &+\left\vert \frac{1}{r}\sum_{i=1}^r \frac{\ddot{l}^*_{i(kl)}(\hat{\boldsymbol{\theta}})}{n\pi_{\xi,i}^{RDS*}(\tilde{\boldsymbol{\theta}}_p)} - \frac{1}{n} \sum_{i=1}^{n}\ddot{l}_{i(kl)}(\hat{\boldsymbol{\theta}})\right\vert ds\\
    &=o_{p}(1)+o_{p|\mathcal{D}_n,\tilde{\boldsymbol{\theta}}_p}(1)= o_{p|\mathcal{D}_n,\tilde{\boldsymbol{\theta}}_p}(1).
\end{aligned}
\end{equation*}
\end{proof}

\begin{lemma} \label{lemma6}
    Under Assumptions \ref{ass7} and \ref{ass9}, given $\mathcal{D}_n$ in probability, we have
    \begin{equation*}
       {\boldsymbol{\Lambda}_{RDS}^0}^{-\frac{1}{2}} \frac{\sqrt{r}}{rn}  \sum_{i=1}^{r}\frac{1}{\pi_{\xi,i}^{RDS}(\tilde{\boldsymbol{\theta}}_p)} \dot{l}_{i}^*(\boldsymbol{\theta})\xrightarrow{d} N(0,\boldsymbol{I}),
    \end{equation*}
    where  ${\boldsymbol{\Lambda}_{RDS}^0}=\frac{1}{n^2} \sum_{i=1}^n \frac{1}{\pi^{RDS}_{\xi,i}(\tilde{\boldsymbol{\theta}}_p)} \dot{l}_i(\hat{\boldsymbol{\theta}})^2$
\end{lemma}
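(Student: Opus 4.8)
The plan is to recognize that, conditionally on the full data $\mathcal{D}_n$ and the pilot estimate $\tilde{\boldsymbol{\theta}}_p$, the displayed quantity is a normalized sum of $r$ independent and identically distributed random vectors, and then to invoke the Lindeberg--Feller central limit theorem exactly as in the proof of Lemma \ref{lemma4} and the asymptotic-normality step of Theorem \ref{thm:genenral-mle}. Writing $k_i = \frac{1}{n\,\pi_{\xi,i}^{RDS*}(\tilde{\boldsymbol{\theta}}_p)}\dot{l}_i^*(\hat{\boldsymbol{\theta}})$ for the $i$th drawn unit (the score is evaluated at $\hat{\boldsymbol{\theta}}$, consistently with the definition of $\boldsymbol{\Lambda}_{RDS}^0$), the target becomes $\frac{1}{\sqrt r}\sum_{i=1}^r k_i$, since $\frac{1}{\sqrt r}=\frac{\sqrt r}{r}$. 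Because the subsample is drawn with replacement according to the probabilities $\pi_{\xi,i}^{RDS}(\tilde{\boldsymbol{\theta}}_p)$, the $k_i$ are i.i.d.\ given $(\mathcal{D}_n,\tilde{\boldsymbol{\theta}}_p)$.

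First I would compute the conditional first two moments. Because $\hat{\boldsymbol{\theta}}$ is the full-data MLE we have $\frac{1}{n}\sum_{j=1}^n \dot{l}_j(\hat{\boldsymbol{\theta}})=\dot{l}_f(\hat{\boldsymbol{\theta}})=0$, so a direct calculation over the with-replacement draw gives $\mathbb{E}(k_i\mid\mathcal{D}_n,\tilde{\boldsymbol{\theta}}_p)=\frac{1}{n}\sum_{j=1}^n\dot{l}_j(\hat{\boldsymbol{\theta}})=0$; similarly $\mathrm{Var}(k_i\mid\mathcal{D}_n,\tilde{\boldsymbol{\theta}}_p)=\frac{1}{n^2}\sum_{j=1}^n\frac{1}{\pi_{\xi,j}^{RDS}(\tilde{\boldsymbol{\theta}}_p)}\dot{l}_j(\hat{\boldsymbol{\theta}})^2=\boldsymbol{\Lambda}_{RDS}^0$. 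Hence the conditional covariance of $\frac{1}{\sqrt r}\sum_i k_i$ equals $\boldsymbol{\Lambda}_{RDS}^0$, which furnishes the normalizing matrix in the claim.

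The main work, and the step I expect to be the only nontrivial one, is verifying the Lindeberg condition for the triangular array $\{r^{-1/2}k_i\}$. I would take $\delta=2$ and bound the truncated second moment by the untruncated $(2+\delta)$th moment, reducing matters to showing $r^{-\delta/2}\,\mathbb{E}(\|k_1\|^{2+\delta}\mid\mathcal{D}_n,\tilde{\boldsymbol{\theta}}_p)=o_p(1)$. Evaluating this expectation under with-replacement sampling yields $\frac{1}{n^{2+\delta}}\sum_{j=1}^n (\pi_{\xi,j}^{RDS})^{-(1+\delta)}\|\dot{l}_j(\hat{\boldsymbol{\theta}})\|^{2+\delta}$. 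Here the $\xi$-regularization is essential: since $\pi_{\xi,i}^{RDS}(\tilde{\boldsymbol{\theta}}_p)=(1-\xi)\pi_i^{RDS}(\tilde{\boldsymbol{\theta}}_p)+\xi/n\ge \xi/n$, we get $\max_i (n\pi_{\xi,i}^{RDS})^{-1}\le \xi^{-1}=O_p(1)$, which plays the role of Assumption \ref{ass9} in this conditional setting. Factoring out this uniform bound leaves $\frac{1}{n}\sum_{j=1}^n\|\dot{l}_j(\hat{\boldsymbol{\theta}})\|^{4}=O_p(1)$ by Assumption \ref{ass7} (applied at $\hat{\boldsymbol{\theta}}$, which lies in the prescribed neighborhood), so the whole expression is $O_p(r^{-1})=o_p(1)$.

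With the Lindeberg condition verified and the conditional covariance equal to $\boldsymbol{\Lambda}_{RDS}^0$, the Lindeberg--Feller theorem delivers $(\boldsymbol{\Lambda}_{RDS}^0)^{-1/2}\frac{1}{\sqrt r}\sum_{i=1}^r k_i\xrightarrow{d}N(0,\boldsymbol{I})$ conditionally on $(\mathcal{D}_n,\tilde{\boldsymbol{\theta}}_p)$, which is precisely the assertion. The only points requiring care are the bookkeeping that the conditioning is on both $\mathcal{D}_n$ and $\tilde{\boldsymbol{\theta}}_p$, so that the data-dependent sampling probabilities may be treated as fixed, and the observation that the $\xi$-floor replaces Assumption \ref{ass9}, so that no separate lower bound on the raw probabilities $\pi_i^{RDS}$ is needed.
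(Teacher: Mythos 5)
Your proposal is correct and follows essentially the same route as the paper's proof: both treat the subsample scores as conditionally i.i.d.\ given $(\mathcal{D}_n,\tilde{\boldsymbol{\theta}}_p)$, compute the conditional mean (zero, since $\dot{l}_f(\hat{\boldsymbol{\theta}})=0$) and variance ($\boldsymbol{\Lambda}_{RDS}^0$), and verify the Lindeberg condition by bounding the truncated second moment with a $(2+\delta)$th moment controlled via the $\xi$-floor on the subsampling probabilities together with the moment bound of Assumption \ref{ass7}. Your explicit remark that the $\xi$-regularization supplies the role of Assumption \ref{ass9} is a minor but accurate clarification of what the paper does implicitly.
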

\begin{proof}
    We use the Lindeberg-Feller Theorem to prove \citep{Van_1998}.
    Firstly, let
    \begin{equation*}
        \frac{\sqrt{r}}{rn}  \sum_{i=1}^{r} \dot{l}_{i}^*(\boldsymbol{\theta})/\pi_{\xi,i}^{RDS}(\tilde{\boldsymbol{\theta}}_p) \triangleq \frac{1}{\sqrt{r}} \sum_{i=1}^r \boldsymbol{g}^*_i
    \end{equation*}
    Then, by direct calculation,
    \begin{equation*}
    \begin{aligned}
        \mathbb{E}\left\{\boldsymbol{g}^*_i | \mathcal{D}_n, \tilde{\boldsymbol{\theta}}_p\right\} &= \frac{1}{n}\sum_{i=1}^n \dot{l}_i(\hat{\boldsymbol{\theta}}) = 0,\\
        Var\left\{\boldsymbol{g}^*_i | \mathcal{D}_n, \tilde{\boldsymbol{\theta}}_p\right\} &=  {\boldsymbol{\Lambda}_{RDS}^0}= \frac{1}{n^2} \sum_{i=1}^n \frac{1}{\pi^{RDS}_{\xi,i}(\tilde{\boldsymbol{\theta}}_p)} \dot{l}_i(\hat{\boldsymbol{\theta}})^2 \\
        &=O_p(1).
    \end{aligned}
    \end{equation*}
    Besides, for $\forall \epsilon>0$ and some $\delta \in (0,2]$,
    \begin{equation*}
    \begin{aligned}
        &\frac{1}{r}\sum_{i=1}^r\mathbb{E}\left\{\left\Vert \boldsymbol{g}^*_i\right\Vert^2 I\left( \left\Vert \boldsymbol{g}^*_i\right\Vert>\sqrt{r}\epsilon \right) \Big| \mathcal{D}_n,\tilde{\boldsymbol{\theta}}_p \right\} \\
        \leq & \frac{1}{r^{1+\frac{\delta}{2}}\epsilon^{\delta}} \sum_{i=1}^r\mathbb{E}\left\{\left\Vert \boldsymbol{g}^*_i\right\Vert^{2+\delta} \Big| \mathcal{D}_n,\tilde{\boldsymbol{\theta}}_p \right\}\\
        \leq & \frac{1}{r^{\frac{\delta}{2}}\epsilon^{\delta}n} \sum_{i=1}^n  \frac{\left\Vert\dot{l}_i(\hat{\boldsymbol{\theta}}) \right\Vert^{2+\delta}}{\left[n\pi^{RDS}_{\xi,i}(\tilde{\boldsymbol{\theta}}_p)\right]^{1+\delta}}\\
        =&O_p\left(r^{-\frac{\delta}{2}}\right)=o_p(1)
    \end{aligned}
    \end{equation*}
    Hence, the Lindeberg-Feller conditions are satisfied. The central limit theorem shows that conditionally on $\mathcal{D}_n$ and $\tilde{\boldsymbol{\theta}}_p$,
     \begin{equation*}
       {\boldsymbol{\Lambda}^0_{RDS}}^{-\frac{1}{2}} \frac{\sqrt{r}}{rn}  \sum_{i=1}^{r}\frac{1}{\pi_{\xi,i}^{RDS*}(\tilde{\boldsymbol{\theta}}_p)} \dot{l}_{i}^*(\boldsymbol{\theta})\xrightarrow{d} N(0,\boldsymbol{I})
    \end{equation*}
\end{proof}

Now we use the Lemmas \ref{lemma5} and \ref{lemma6} to prove:
\begin{equation*}
        \sqrt{r}{\boldsymbol{V}_{RDS}^0}^{-\frac{1}{2}}(\tilde{\boldsymbol{\theta}}_{RDS} - \hat{\boldsymbol{\theta}}) \xrightarrow{d} N(0,\boldsymbol{I}),
    \end{equation*}
where
    \begin{equation*}            \boldsymbol{V}_{RDS}^0=\boldsymbol{M}_{g}^{-1}\boldsymbol{\Lambda}_{RDS}^0\boldsymbol{M}_{g}^{-1},
    \end{equation*}
and
    \begin{equation*}
        \boldsymbol{\Lambda}_{RDS}^0=\frac{1}{n^2}\sum_{i=1}^n \frac{1}{\pi_{\xi,i}^{RDS}(\tilde{\boldsymbol{\theta}}_p)}\dot{l}_i(\hat{\boldsymbol{\theta}})^2.
    \end{equation*}
\begin{proof}
    Directly calculation shows that, for any $\boldsymbol{\theta}$,
\begin{equation*}
\begin{aligned}
    &\mathbb{E} \left\{ \frac{1}{n}l^*_g(\boldsymbol{\theta}) \Big| \mathcal{D}_n,\tilde{\boldsymbol{\theta}}_p \right\} = l_f(\boldsymbol{\theta}), \\
    &Var\left\{ \frac{1}{n}l^*_g(\boldsymbol{\theta}) \Big| \mathcal{D}_n \right\} = O_p(r^{-1})
\end{aligned} 
\end{equation*}
By Chebyshev's inequality, for any $\boldsymbol{\theta}$,
\begin{equation*}
    \frac{1}{n}l^*_g(\boldsymbol{\theta})-l_f(\boldsymbol{\theta}) = o_{p|\mathcal{D}_n,\tilde{\boldsymbol{\theta}}_p}(1)
\end{equation*}
Then, from Theorem 5.9 of \cite{Van_1998}, 
\begin{equation*}
    \left\Vert \tilde{\boldsymbol{\theta}}_{RDS} - \hat{\boldsymbol{\theta}} \right\Vert = o_p(1).
\end{equation*}
By Taylor expansion,
\begin{equation*}
    0=\frac{1}{n}\dot{l}^*_g(\tilde{\boldsymbol{\theta}}_{RDS})=\frac{1}{n}\dot{l}^*_g(\hat{\boldsymbol{\theta}})+B_p\cdot(\tilde{\boldsymbol{\theta}}_{RDS}-\hat{\boldsymbol{\theta}}),
\end{equation*}
By Lemma \ref{lemma5}, we have
\begin{equation*}
    0=\frac{1}{n}\dot{l}^*_g(\tilde{\boldsymbol{\theta}}_{RDS})=\frac{1}{n}\dot{l}^*_g(\hat{\boldsymbol{\theta}})+\left[\ddot{l}_f(\hat{\boldsymbol{\theta}})+o_p(1)\right]\cdot(\tilde{\boldsymbol{\theta}}_{RDS}-\hat{\boldsymbol{\theta}}),
\end{equation*}
which leads to
\begin{equation*}
\begin{aligned}
    &\tilde{\boldsymbol{\theta}}_{RDS}-\hat{\boldsymbol{\theta}} \\
    =& -\left[\ddot{l}_f(\hat{\boldsymbol{\theta}})+o_p(1)\right]^{-1}\frac{1}{n}\dot{l}^*_g(\hat{\boldsymbol{\theta}})\\
    =& -\frac{1}{\sqrt{r}} \left[\ddot{l}_f(\hat{\boldsymbol{\theta}})+o_p(1)\right]^{-1} {\boldsymbol{\Lambda}_{RDS}^0}^{\frac{1}{2}}\cdot \sqrt{r}{\boldsymbol{\Lambda}_{RDS}^0}^{-\frac{1}{2}}\\
    &\cdot\frac{1}{nr}\sum_{i=1}^r\frac{1}{\pi_{\xi,i}^{RDS*}(\tilde{\boldsymbol{\theta}}_p)}\dot{l}^*_i(\hat{\boldsymbol{\theta}})
\end{aligned}
\end{equation*}
Note that $\ddot{l}_f(\hat{\boldsymbol{\theta}}) = \boldsymbol{M}_g$. From Lemma \ref{lemma6} and Slutsky's theorem, we can derive that conditionally on $\mathcal{D}_n$ and $\tilde{\boldsymbol{\theta}}_p$,
\begin{equation*}
    \sqrt{r} {\boldsymbol{V}^0_{RDS}}^{-\frac{1}{2}}(\tilde{\boldsymbol{\theta}}_{RDS} - \hat{\boldsymbol{\theta}}) \xrightarrow{d} N(0,\boldsymbol{I}).
\end{equation*}

\end{proof}

Next, we will show the difference between $\boldsymbol{\Lambda}_{RDS}^0$ and $\boldsymbol{\Lambda}_{RDS}$. By direct calculation,

\begin{equation}\label{107}
\begin{aligned}
&\left\Vert \boldsymbol{\Lambda}_{RDS}^0 - \boldsymbol{\Lambda}_{RDS} \right\Vert \\
=& \left\Vert \frac{1}{n^2}\sum_{i=1}^n \frac{\dot{l}_i(\hat{\boldsymbol{\theta}})^2}{(1-\xi)\pi_i^{RDS}(\tilde{\boldsymbol{\theta}}_p)+\frac{\xi}{n}} - \frac{1}{n^2}\sum_{i=1}^n \frac{\dot{l}_i(\hat{\boldsymbol{\theta}})^2}{(1-\xi)\pi_i^{RDS}(\hat{\boldsymbol{\theta}})+\frac{\xi}{n}} \right\Vert\\
\leq& \frac{1}{n^2}\sum_{i=1}^n \left\Vert \dot{l}_i(\hat{\boldsymbol{\theta}}) \right\Vert^2\left\vert \frac{1}{(1-\xi)\pi_i^{RDS}(\tilde{\boldsymbol{\theta}}_p)+\frac{\xi}{n}}-\frac{1}{(1-\xi)\pi_i^{RDS}(\hat{\boldsymbol{\theta}})+\frac{\xi}{n}} \right\vert \\
<&\frac{1}{{\xi}^2}\sum_{i=1}^n \left\Vert \dot{l}_i(\hat{\boldsymbol{\theta}}) \right\Vert^2 \left\vert \pi_i^{RDS}(\tilde{\boldsymbol{\theta}}_p) - \pi_i^{RDS}(\hat{\boldsymbol{\theta}}) \right\vert.
\end{aligned}
\end{equation}

And
\begin{equation*} 
\begin{aligned}
&\left\vert \pi_i^{RDS}(\tilde{\boldsymbol{\theta}}_p) - \pi_i^{RDS}(\hat{\boldsymbol{\theta}}) \right\vert \\
=& \left\vert \frac{\Vert  \dot{l}_i(\tilde{\boldsymbol{\theta}}_p)\Vert}{\sum_{i=1}^n \Vert \dot{l}_i(\tilde{\boldsymbol{\theta}}_p) \Vert} - \frac{\Vert \dot{l}_i(\hat{\boldsymbol{\theta}}) \Vert}{\sum_{i=1}^n \Vert  \dot{l}_i(\hat{\boldsymbol{\theta}}) \Vert} \right\vert \\
\leq & \left\Vert \dot{l}_i(\hat{\boldsymbol{\theta}})  \right\Vert \cdot \frac{\sum_{j=1}^n \left\vert \Vert \dot{l}_j(\tilde{\boldsymbol{\theta}}_p)\Vert - \Vert \dot{l}_j(\hat{\boldsymbol{\theta}}) \Vert \right\vert}{\sum_{j=1}^n \Vert\dot{l}_j(\tilde{\boldsymbol{\theta}}_p)\Vert \sum_{j=1}^n \Vert\dot{l}_j(\hat{\boldsymbol{\theta}})\Vert}\\
+& \frac{\left\vert \Vert \dot{l}_i(\tilde{\boldsymbol{\theta}}_p)\Vert - \Vert \dot{l}_i(\hat{\boldsymbol{\theta}}) \Vert \right\vert}{\sum_{j=1}^n \Vert\dot{l}_j(\tilde{\boldsymbol{\theta}}_p)\Vert} \\
\triangleq& E_{i1} + E_{i2}.
\end{aligned}
\end{equation*}

By Assumption \ref{ass6}, for $j=1,2,...,n$
\begin{equation*}
\begin{aligned}
&\left\vert \Vert \dot{l}_j(\tilde{\boldsymbol{\theta}}_p)\Vert - \Vert \dot{l}_j(\hat{\boldsymbol{\theta}}) \Vert \right\vert 
\leq  \left\Vert \dot{l}_j(\tilde{\boldsymbol{\theta}}_p) -  \dot{l}_j(\hat{\boldsymbol{\theta}}) \right\Vert  
\\
\leq&  \sqrt{\sum_{k=1}^d \left[ \dot{l}_{j(k)}(\tilde{\boldsymbol{\theta}}_p) -  \dot{l}_{j(k)}(\hat{\boldsymbol{\theta}})\right]^2} \\
\leq & \sum_{k=1}^d \left\vert \dot{l}_{j(k)}(\tilde{\boldsymbol{\theta}}_p) -  \dot{l}_{j(k)}(\hat{\boldsymbol{\theta}}) \right\vert \leq  \sum_{k=1}^d \left\vert \ddot{l}_{j(k)}^{\top}(\acute{\boldsymbol{\theta}}_k)(\tilde{\boldsymbol{\theta}}_p - \hat{\boldsymbol{\theta}}) \right\vert \\
\leq& 
\Vert \tilde{\boldsymbol{\theta}}_p - \hat{\boldsymbol{\theta}} \Vert \sum_{k=1}^d \Vert \ddot{l}_{j(k)}(\acute{\boldsymbol{\theta}}_k)\Vert,
\end{aligned}
\end{equation*}
where $\dot{l}_{j(k)}$ is the $k$th element of $\dot{l}_j$, $\ddot{l}_{j(k)}$ is the $k$th column of $\ddot{l}_{j}$ and all the $\acute{\boldsymbol{\theta}}_k$ are between $\hat{\boldsymbol{\theta}}$ and $\tilde{\boldsymbol{\theta}}_p$.

Let $\zeta_j \triangleq\sum_{k=1}^d \Vert \ddot{l}_{j(k)}(\acute{\boldsymbol{\theta}}_k)\Vert$, we have
$$E_{i1} \leq  \frac{\Vert \dot{l}_i(\hat{\boldsymbol{\theta}}) \Vert \Vert \tilde{\boldsymbol{\theta}}_p - \hat{\boldsymbol{\theta}} \Vert \sum_{j=1}^n \zeta_j}{\sum_{j=1}^n \Vert\dot{l}_j(\tilde{\boldsymbol{\theta}}_p)\Vert \sum_{j=1}^n \Vert\dot{l}_j(\hat{\boldsymbol{\theta}})\Vert},$$
and 
$$E_{i2}\leq \frac{\Vert \tilde{\boldsymbol{\theta}}_p - \hat{\boldsymbol{\theta}} \Vert \zeta_i}{\sum_{j=1}^n \Vert\dot{l}_j(\tilde{\boldsymbol{\theta}}_p)\Vert}.$$

From Equation \ref{66} and Assumption \ref{ass6}, we have
\begin{equation*}
\begin{aligned}
&\frac{1}{n}\sum_{j=1}^n \zeta_i^2 \\
\leq& \frac{d}{n}\sum_{j=1}^n\sum_{k=1}^d \Vert \ddot{l}_{j(k)}(\acute{\boldsymbol{\theta}}_k) \Vert^2 \\
=& \frac{d}{n}\sum_{j=1}^n\sum_{k=1}^d \sum_{i=1}^d \ddot{l}_{j(ki)}(\acute{\boldsymbol{\theta}}_k)^2 \\
\leq& \frac{d}{n}\sum_{j=1}^n\sum_{k=1}^d \sum_{i=1}^d \left( 2\ddot{l}_{j(ki)}(\hat{\boldsymbol{\theta}})^2 + 2\psi(x_j)^2\Vert \tilde{\boldsymbol{\theta}}_p - \hat{\boldsymbol{\theta}} \Vert^2 \right) \\
=& O_p(1).
\end{aligned}
\end{equation*}

This also indicates that $\frac{1}{n}\sum_{i=1}^n \zeta_i = O_p(1)$.
Hence, 
\begin{equation}\label{equ111}
\sum_{i=1}^n \Vert\dot{l}_i(\hat{\boldsymbol{\theta}})\Vert^2 E_{i1} = \Vert \tilde{\boldsymbol{\theta}}_p - \hat{\boldsymbol{\theta}} \Vert \frac{1}{n}\sum_{i=1}^n \Vert\dot{l}_i(\hat{\boldsymbol{\theta}})\Vert^2\cdot O_p(1)=O_p(\Vert \tilde{\boldsymbol{\theta}}_p - \hat{\boldsymbol{\theta}} \Vert),
\end{equation}

\begin{equation}\label{equ112}
\begin{aligned}
\sum_{i=1}^n \Vert\dot{l}_i(\hat{\boldsymbol{\theta}})\Vert^2 E_{i2} &\leq \Vert \tilde{\boldsymbol{\theta}}_p - \hat{\boldsymbol{\theta}} \Vert\frac{1}{n}\sum_{i=1}^n \Vert\dot{l}_i(\hat{\boldsymbol{\theta}})\Vert^2\zeta_i \cdot O_p(1) \\
&\leq  \sqrt{\frac{1}{n}\sum_{i=1}^n \Vert \dot{l}_i(\hat{\boldsymbol{\theta}}) \Vert^4}\cdot \sqrt{\frac{1}{n}\sum_{i=1}^n \zeta_i^2}\cdot O_p(\Vert \tilde{\boldsymbol{\theta}}_p - \hat{\boldsymbol{\theta}}\Vert)\\
&= O_p(\Vert \tilde{\boldsymbol{\theta}}_p - \hat{\boldsymbol{\theta}} \Vert)
\end{aligned}
\end{equation} 

When $r_0,r,n\rightarrow \infty$, based on Equations \ref{107}, \ref{equ111} and \ref{equ112}, we have
$$\Vert \boldsymbol{\Lambda}_{RDS}^0 - \boldsymbol{\Lambda}_{RDS} \Vert =  O_p(\Vert \tilde{\boldsymbol{\theta}}_p - \hat{\boldsymbol{\theta}} \Vert) = o_p(1).$$

Therefore, given $\mathcal{D}_n$ and $\tilde{\boldsymbol{\theta}}_p$, as $r_0,r,n\rightarrow \infty$, we have 

$$\sqrt{r}\boldsymbol{V}_{RDS}^{-\frac{1}{2}}(\tilde{\boldsymbol{\theta}}_{RDS} - \hat{\boldsymbol{\theta}}) \xrightarrow{d} N(0,\boldsymbol{I}).$$

\subsubsection{Proof of Theorem \ref{ssp:cen-sub}}
Note that if we want to minimize the trace of 
\begin{equation*}
    \boldsymbol{\Lambda}_c = \frac{1}{n^2} \sum_{i=n_0+1}^n \frac{1}{\tilde{\pi}_i} \dot{l}_i(\hat{\boldsymbol{\theta}})^2 - \left[ \frac{1}{n}\sum_{i=n_0+1}^n \dot{l}_i(\hat{\boldsymbol{\theta}}) \right]^2,
\end{equation*}
we only need to minimize
\begin{equation*}
    \frac{1}{n^2} \sum_{i=n_0+1}^n \frac{1}{\tilde{\pi}_i} \dot{l}_i(\hat{\boldsymbol{\theta}})^2 \triangleq \tilde{\boldsymbol{\Lambda}}_c.
\end{equation*}
Then,
\begin{equation*}
\begin{aligned}
    tr(\tilde{\boldsymbol{\Lambda}}_c) &=\frac{1}{n^2}\sum_{i=n_0+1}^n\frac{1}{\tilde{\pi}_i} \left\Vert\dot{l}_i(\hat{\boldsymbol{\theta}})\right\Vert^2 \\
    &= \frac{1}{n^2}\left(\sum_{i=n_0+1}^n\frac{1}{\tilde{\pi}_i} \left\Vert\dot{l}_i(\hat{\boldsymbol{\theta}})\right\Vert^2 \right) \left(\sum_{i=n_0+1}^n \tilde{\pi}_i\right)\\
    &\geq  \frac{1}{n^2} \left(\sum_{i=n_0+1}^n \left\Vert\dot{l}_i(\hat{\boldsymbol{\theta}})\right\Vert \right)^2
\end{aligned}
\end{equation*}

The last inequality is based on the Cauchy-Schwarz inequality and it holds if and only if when $\tilde{\pi}_i \propto \Vert \dot{l}_i(\hat{\boldsymbol{\theta}})\Vert$.

\subsubsection{Proof of Proposition \ref{thm:RDCS-mle}}
The proof of Proposition \ref{thm:RDCS-mle} needs the following two lemmas. The derivation of these two lemmas is similar to lemmas \ref{lemma3} and \ref{lemma4}, so we do not go into details.  For convenience, we denote $\omega_{\xi,i}^{RDCS}$ as $\omega_{\xi,i}^{RDCS}(\tilde{\boldsymbol{\theta}}_{cp})$ and $\tilde{\pi}_{\xi,i}^{RDCS}$ as $\tilde{\pi}_{\xi,i}^{RDCS}(\tilde{\boldsymbol{\theta}}_{cp})$. 

\begin{lemma}\label{lemma7}
    Under Assumptions \ref{ass6} and \ref{ass9}, if $\left\Vert \tilde{\boldsymbol{\theta}}_{RDCS} - \hat{\boldsymbol{\theta}} \right\Vert = o_p(1)$, conditional on $\mathcal{D}_n$ and $\tilde{\boldsymbol{\theta}}_{cp}$, 
    \begin{equation*}
        B_{cp} - \ddot{l}_f(\hat{\boldsymbol{\theta}}) = o_p(1),
    \end{equation*}
    where 
    \begin{equation*}
        \begin{aligned}
            &B_{cp} = \frac{1}{n} \int_0^1 \ddot{l}^*_c \left(\hat{\boldsymbol{\theta}} + s(\tilde{\boldsymbol{\theta}}_{RDCS} - \hat{\boldsymbol{\theta}})\right) ds, \\
            &\ddot{l}_f(\hat{\boldsymbol{\theta}}) = \frac{1}{n} \sum_{i=1}^n \ddot{l}_i(\hat{\boldsymbol{\theta}}) = \boldsymbol{M}_g.
        \end{aligned}
    \end{equation*}
\end{lemma}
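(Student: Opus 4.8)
The plan is to follow the template of the proof of Lemma \ref{lemma3} almost verbatim, with the generic censoring weights $\omega_i^c$ replaced by the regularized RDCS weights $\omega_{\xi,i}^{RDCS}$ and with all stochastic orders taken conditionally on both $\mathcal{D}_n$ and the pilot estimate $\tilde{\boldsymbol{\theta}}_{cp}$. The single structural ingredient that makes this substitution legitimate is the lower bound induced by the mixing constant $\xi_c$: since $\tilde{\pi}_{\xi,i}^{RDCS}(\tilde{\boldsymbol{\theta}}_{cp}) = (1-\xi_c)\tilde{\pi}_i^{RDCS}(\tilde{\boldsymbol{\theta}}_{cp}) + \xi_c/(n-n_0)$, we have $n\tilde{\pi}_{\xi,i}^{RDCS} \geq n\xi_c/(n-n_0) \geq \xi_c$ for every $i>n_0$, so that $\max_{i>n_0}(n\tilde{\pi}_{\xi,i}^{RDCS})^{-1} \leq \xi_c^{-1} = O_{p|\mathcal{D}_n,\tilde{\boldsymbol{\theta}}_{cp}}(1)$. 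This bound, which is deterministic once we condition on $\tilde{\boldsymbol{\theta}}_{cp}$, plays exactly the role Assumption \ref{ass9} plays in Lemma \ref{lemma3}, so no new tail assumption on the RDCS probabilities is needed.

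First I would control the Lipschitz-envelope average. Using the conditional mean $\mathbb{E}\{\frac{1}{n}\sum_{i=1}^r \omega_{\xi,i}^{RDCS}\psi(\boldsymbol{x}_i^c)\mid\mathcal{D}_n,\tilde{\boldsymbol{\theta}}_{cp}\} = \frac{1}{n}\sum_{i=1}^n \psi(\boldsymbol{x}_i) = \mathbb{E}\{\psi(\boldsymbol{x})\} + o_p(1)$ (finite by Assumption \ref{ass6}) and a conditional variance that is $O_p(r^{-1})$ by the $\max(n\tilde{\pi}_{\xi,i}^{RDCS})^{-1}$ bound together with $\mathbb{E}\{\psi^2(\boldsymbol{x})\}<\infty$, Chebyshev's inequality yields $\frac{1}{n}\sum_{i=1}^r \omega_{\xi,i}^{RDCS}\psi(\boldsymbol{x}_i^c) = O_{p|\mathcal{D}_n,\tilde{\boldsymbol{\theta}}_{cp}}(1)$. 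With this in hand, the Lipschitz condition of Assumption \ref{ass6} bounds the perturbation of each Hessian entry: for $k,l=1,\dots,d$ and $s\in[0,1]$,
\[
\left|\frac{1}{n}\ddot{l}^*_{c(kl)}\!\left(\hat{\boldsymbol{\theta}}+s(\tilde{\boldsymbol{\theta}}_{RDCS}-\hat{\boldsymbol{\theta}})\right) - \frac{1}{n}\ddot{l}^*_{c(kl)}(\hat{\boldsymbol{\theta}})\right| \leq s\left\|\tilde{\boldsymbol{\theta}}_{RDCS}-\hat{\boldsymbol{\theta}}\right\|\,\frac{1}{n}\sum_{i=1}^r \omega_{\xi,i}^{RDCS}\psi(\boldsymbol{x}_i^c),
\]
which is $o_p(1)\cdot O_{p|\mathcal{D}_n,\tilde{\boldsymbol{\theta}}_{cp}}(1)=o_p(1)$ by the hypothesis $\|\tilde{\boldsymbol{\theta}}_{RDCS}-\hat{\boldsymbol{\theta}}\|=o_p(1)$.

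Next I would show that the unperturbed subsampled Hessian concentrates on $\boldsymbol{M}_g$. Because the inverse-probability weighting is unbiased---the $n_0$ uncensored points enter with weight one and each censored point with weight $1/[(r-n_0)\tilde{\pi}_{\xi,i}^{RDCS}]$---the conditional mean is $\mathbb{E}\{\frac{1}{n}\ddot{l}^*_{c(kl)}(\hat{\boldsymbol{\theta}})\mid\mathcal{D}_n,\tilde{\boldsymbol{\theta}}_{cp}\} = \frac{1}{n}\sum_{i=1}^n \ddot{l}_{i(kl)}(\hat{\boldsymbol{\theta}}) = \ddot{l}_{f(kl)}(\hat{\boldsymbol{\theta}})$. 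For the variance I would first note $\frac{1}{n}\sum_{i=1}^n \ddot{l}^2_{i(kl)}(\hat{\boldsymbol{\theta}})=O_p(1)$, obtained by bounding $\ddot{l}^2_{i(kl)}(\hat{\boldsymbol{\theta}})$ by twice its value at a fixed $\boldsymbol{\theta}$ plus a $\psi^2$ Lipschitz remainder (Assumption \ref{ass6}); combined with the $\max(n\tilde{\pi}_{\xi,i}^{RDCS})^{-1}$ bound this forces the conditional variance to be $O_p(r^{-1})$, so Chebyshev's inequality gives $|\frac{1}{n}\ddot{l}^*_{c(kl)}(\hat{\boldsymbol{\theta}}) - \ddot{l}_{f(kl)}(\hat{\boldsymbol{\theta}})| = O_{p|\mathcal{D}_n,\tilde{\boldsymbol{\theta}}_{cp}}(r^{-1/2}) = o_p(1)$.

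Finally I would assemble the three pieces by the triangle inequality exactly as in Lemma \ref{lemma3}: writing $|B_{cp,(kl)} - \ddot{l}_{f(kl)}(\hat{\boldsymbol{\theta}})| \leq \int_0^1 |\frac{1}{n}\ddot{l}^*_{c(kl)}(\hat{\boldsymbol{\theta}}+s(\tilde{\boldsymbol{\theta}}_{RDCS}-\hat{\boldsymbol{\theta}})) - \ddot{l}_{f(kl)}(\hat{\boldsymbol{\theta}})|\,ds$ and splitting the subsampled Hessian into its uncensored ($i\le n_0$) and censored ($i>n_0$) blocks, so the integrand is dominated by the uncensored-block perturbation, the censored-block perturbation (both $o_p(1)$ by the Lipschitz step, the uncensored block using $\frac{1}{n}\sum_{i\le n_0}\psi(\boldsymbol{x}_i)\le\frac1n\sum_{i=1}^n\psi(\boldsymbol{x}_i)=O_p(1)$), and the unperturbed concentration term $o_{p|\mathcal{D}_n,\tilde{\boldsymbol{\theta}}_{cp}}(1)$ from the previous step. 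Summing over the $d^2$ entries yields $B_{cp} - \ddot{l}_f(\hat{\boldsymbol{\theta}}) = o_p(1)$. The only genuine subtlety---and the point I would write out most carefully---is that the weights $\tilde{\pi}_{\xi,i}^{RDCS}(\tilde{\boldsymbol{\theta}}_{cp})$ are themselves random through the pilot $\tilde{\boldsymbol{\theta}}_{cp}$; conditioning on $\tilde{\boldsymbol{\theta}}_{cp}$ freezes them, and the $\xi_c$-mixing bound guarantees the required moment control holds uniformly, which is precisely what lets the Lemma \ref{lemma3} argument transfer without invoking any additional assumption.
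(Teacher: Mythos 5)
Your proposal is correct and follows essentially the same route as the paper's proof: control of $\frac{1}{n}\sum_i \omega_{\xi,i}^{c,RDCS}\psi(\boldsymbol{x}_i^c)$ by conditional mean/variance and Chebyshev, the Lipschitz bound on the perturbed Hessian entries, concentration of the unperturbed weighted Hessian on $\ddot{l}_f(\hat{\boldsymbol{\theta}})$, and the final triangle-inequality split into uncensored, censored, and concentration terms. Your explicit remark that the $\xi_c$-mixing forces $(n\tilde{\pi}_{\xi,i}^{RDCS})^{-1}\leq \xi_c^{-1}$ deterministically is a nice clarification of why the moment bounds transfer, but it does not change the argument, which matches the paper's (itself a near-verbatim adaptation of Lemma \ref{lemma3}).
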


\begin{proof}
Note that
\begin{equation*}
\begin{aligned}
    \mathbb{E}\left( \frac{1}{n} \sum_{i=1}^r \omega_{\xi,i}^{c,RDCS} \psi(\boldsymbol{x}^c_i) \Big| \mathcal{D}_n,\tilde{\boldsymbol{\theta}}_{cp} \right) &= \frac{1}{n} \sum_{i=1}^n \psi(\boldsymbol{x}_i)\\
    &= \mathbb{E}\left\{\psi(\boldsymbol{x})\right\} + o_p(1),
\end{aligned}
\end{equation*}
and
\begin{equation*}
    \begin{aligned}
        &Var \left( \frac{1}{n} \sum_{i=1}^r \omega_{\xi,i}^{c,RDCS}\psi(\boldsymbol{x}^c_i) \Big| \mathcal{D}_n ,\tilde{\boldsymbol{\theta}}_{cp}\right) \\
        =& \frac{1}{r-n_0}\sum_{i=n_0+1}^{n} \frac{1}{n^2\tilde{\pi}^{RDCS}_{\xi,i}} \psi(\boldsymbol{x}_i)^2 \\
        \leq & \max_{i=n_0+1,...,n} \left( \frac{1}{n\tilde{\pi}^{RDCS}_{\xi,i}} \right) \frac{\sum_{i=n_0+1}^n \psi(\boldsymbol{x}_i)^2}{n(r-n_0)}  \\
        =& O_p\left(\frac{1}{r-n_0}\right) = O_p\left(r^{-1}\right).
    \end{aligned}
\end{equation*}
Then we have
\begin{equation*}
    \frac{1}{n} \sum_{i=1}^r \omega_{\xi,i}^{c,RDCS} \psi(\boldsymbol{x}^c_i) = O_{p|\mathcal{D}_n,\tilde{\boldsymbol{\theta}}_{cp}}(1)
\end{equation*}
For every $k,l =1,...,d$ and $s\in (0,1)$, from Assumption \ref{ass7}, we have
\begin{equation*}
\begin{aligned}
    &\left\vert \frac{1}{n} \ddot{l}^*_{c(kl)} \left(\hat{\boldsymbol{\theta}} + s(\tilde{\boldsymbol{\theta}}_{RDCS} - \hat{\boldsymbol{\theta}})\right) -  \frac{1}{n} \ddot{l}^*_{c(kl)} \left( \hat{\boldsymbol{\theta}}\right) \right\vert\\
    &= s\left\Vert \tilde{\boldsymbol{\theta}}_{RDCS} - \hat{\boldsymbol{\theta}} \right\Vert \times \frac{1}{n} \sum_{i=1}^r \omega_{\xi,i}^{c,RDCS} \psi(\boldsymbol{x}^c_i) \\
    &= o_p(1).
\end{aligned}
\end{equation*}
From the proof of Theorem \ref{thm:censor-all}, we have
\begin{equation*}
\begin{aligned}
    \frac{1}{n} \sum_{i=1}^n \ddot{l}^2_{i(kl)}(\hat{\boldsymbol{\theta}})
    &= O_p(1)\\
   \mathbb{E} \left\{\frac{1}{n} \ddot{l}^*_{c(kl)} \left( \hat{\boldsymbol{\theta}}\right) \Big| \mathcal{D}_n,\tilde{\boldsymbol{\theta}}_{cp} \right\} &= \frac{1}{n} \sum_{i=1}^n \ddot{l}_{i(kl)}(\hat{\boldsymbol{\theta}}) = \ddot{l}_{f(kl)}(\hat{\boldsymbol{\theta}})\\
   Var\left\{\frac{1}{n} \ddot{l}^*_{c(kl)} \left( \hat{\boldsymbol{\theta}}\right) \Big| \mathcal{D}_n ,\tilde{\boldsymbol{\theta}}_{cp}\right\}& = O_p(r^{-1})
\end{aligned}
\end{equation*}
Then, by Chebyshev's inequality,
\begin{equation*}
    \left\vert \frac{1}{n}\ddot{l}^*_{c(kl)}(\hat{\boldsymbol{\theta}}) -\frac{1}{n} \sum_{i=1}^n \ddot{l}_{i(kl)}(\hat{\boldsymbol{\theta}}) \right\vert = O_{P|\mathcal{D}_n,\tilde{\boldsymbol{\theta}}_{cp}}\left(r^{-\frac{1}{2}}\right) = o_{p}(1).
\end{equation*}
Therefore, 
\begin{equation*}
\begin{aligned}
    &\left\vert B_{cp(kl)} - \ddot{l}_{f(kl)}(\hat{\boldsymbol{\theta}}) \right\vert \leq \int_0^1 \left\vert \frac{1}{n} \ddot{l}^*_{c(kl)} \left(\hat{\boldsymbol{\theta}} + s(\tilde{\boldsymbol{\theta}}_{RDCS} - \hat{\boldsymbol{\theta}})\right) -\right.\\
    &\left. \frac{1}{n} \sum_{i=1}^n \ddot{l}_{i(kl)}(\hat{\boldsymbol{\theta}}) \right\vert ds \\
    & \leq \int_0^1 \left\vert\frac{1}{n} \sum_{i=1}^{n_0} \ddot{l}^c_{i(kl)}\left(\hat{\boldsymbol{\theta}} + s(\tilde{\boldsymbol{\theta}}_c - \hat{\boldsymbol{\theta}})\right) - \frac{1}{n} \sum_{i=1}^{n_0}\ddot{l}^c_{i(kl)}(\hat{\boldsymbol{\theta}}) \right\vert ds \\
    &+ \int_0^1 \left\vert\frac{1}{n} \sum_{i=n_0+1}^r \frac{1}{(r-n_0)\tilde{\pi}_{\xi,i}^{c,RDCS}} \ddot{l}^c_{i(kl)}\left(\hat{\boldsymbol{\theta}} + s(\tilde{\boldsymbol{\theta}}_{RDCS} - \hat{\boldsymbol{\theta}})\right) \right.\\
    &\left.- \frac{1}{n}\sum_{i=n_0+1}^r \frac{1}{(r-n_0)\tilde{\pi}_{\xi,i}^{c,RDCS}}\ddot{l}^c_{i(kl)}(\hat{\boldsymbol{\theta}}) \right\vert ds \\
    &+ \left\vert \frac{1}{n}\sum_{i=n_0+1}^r \frac{1}{(r-n_0)\tilde{\pi}_{\xi,i}^{c,RDCS}}\ddot{l}^c_{i(kl)}(\hat{\boldsymbol{\theta}}) - \frac{1}{n}\sum_{i=n_0+1}^r\ddot{l}^c_{i(kl)}(\hat{\boldsymbol{\theta}})
    \right\vert \\
    &=o_{p}(1) + o_p(1) + o_{p}(1) = o_{p}(1).
\end{aligned}
\end{equation*}
\end{proof}

\begin{lemma} \label{lemma8}
    Under Assumptions \ref{ass7} and \ref{ass9}, given $\mathcal{D}_n$ and $\tilde{\boldsymbol{\theta}}_{cp}$ in probability, we have
    \begin{equation*}
       {\boldsymbol{\Lambda}^0_{RDCS}}^{-\frac{1}{2}} \sum_{i=1}^r \frac{\sqrt{r}}{n} \omega_{\xi,i}^{c,RDCS}\dot{l}^c_i(\hat{\boldsymbol{\theta}}) \xrightarrow{d} N(0,\boldsymbol{I}),
    \end{equation*}
    where
    \begin{equation*}
    \begin{aligned}
        \boldsymbol{\Lambda}^0_{RDCS} 
        &= \frac{1}{n^2} \sum_{i=n_0+1}^n \frac{1}{\tilde{\pi}_{\xi,i}^{RDCS}(\tilde{\boldsymbol{\theta}}_{cp})} \dot{l}_i(\hat{\boldsymbol{\theta}})^2 - \left[ \frac{1}{n}\sum_{i=n_0+1}^n \dot{l}_i(\hat{\boldsymbol{\theta}}) \right]^2\\
        &= \frac{1}{n^2} \sum_{i=n_0+1}^n \frac{1}{\tilde{\pi}_{\xi,i}^{RDCS}(\hat{\boldsymbol{\theta}})} \left[\frac{\partial \log(1-F(t_i,\hat{\boldsymbol{\theta}}))}{\partial \boldsymbol{\theta}} \right.\\
        &\quad- \left.\frac{\partial \log(1-F(t_{il},\hat{\boldsymbol{\theta}}))}{\partial \boldsymbol{\theta}}\right]^2  \\
        &-\left[ \frac{1}{n}\sum_{i=n_0+1}^n \frac{\partial \log(1-F(t_i,\hat{\boldsymbol{\theta}}))}{\partial \boldsymbol{\theta}} -  \frac{\partial \log(1-F(t_{il},\hat{\boldsymbol{\theta}}))}{\partial \boldsymbol{\theta}}\right]^2.
    \end{aligned}
    \end{equation*}
\end{lemma}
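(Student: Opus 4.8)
The plan is to establish this conditional central limit theorem by exactly the Lindeberg--Feller argument used for Lemma \ref{lemma4}, now with the pilot-based, $\xi_c$-floored probabilities $\tilde{\pi}_{\xi,i}^{RDCS}(\tilde{\boldsymbol{\theta}}_{cp})$ in place of the idealized $\tilde{\pi}_i$, and conditioning on both $\mathcal{D}_n$ and the pilot $\tilde{\boldsymbol{\theta}}_{cp}$. First I would split the sum along the two-stage construction: the first $n_0$ summands are the uncensored observations, which carry weight $1$ and are deterministic given $\mathcal{D}_n$, while the remaining $r-n_0$ summands are conditionally i.i.d.\ draws from the censored subset with inverse-probability weights $\{(r-n_0)\tilde{\pi}_{\xi,i}^{RDCS}\}^{-1}$. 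Treating the scaled sum $\frac{\sqrt{r}}{n}\sum_{i=1}^r\omega_{\xi,i}^{c,RDCS}\dot{l}^c_i(\hat{\boldsymbol{\theta}})$ as a triangular array of conditionally i.i.d.\ terms, I would verify the Lindeberg--Feller conditions.

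The conditional mean is zero: the deterministic uncensored block, together with the unbiasedness of the inverse-probability weighting on the sampled censored draws, reproduces the full-data score, so $\mathbb{E}\{\frac{\sqrt{r}}{n}\sum_{i=1}^r\omega_{\xi,i}^{c,RDCS}\dot{l}^c_i(\hat{\boldsymbol{\theta}})\mid\mathcal{D}_n,\tilde{\boldsymbol{\theta}}_{cp}\}=\frac{\sqrt{r}}{n}\sum_{i=1}^n\dot{l}_i(\hat{\boldsymbol{\theta}})=0$ by the first-order optimality condition for $\hat{\boldsymbol{\theta}}$. The conditional variance is contributed only by the random censored draws and equals $\boldsymbol{\Lambda}^0_{RDCS}$, whose two pieces are the weighted second moment $\frac{1}{n^2}\sum_{i=n_0+1}^n(\tilde{\pi}_{\xi,i}^{RDCS})^{-1}\dot{l}_i(\hat{\boldsymbol{\theta}})^2$ and the subtracted squared mean $[\frac{1}{n}\sum_{i=n_0+1}^n\dot{l}_i(\hat{\boldsymbol{\theta}})]^2$. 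I would bound $\boldsymbol{\Lambda}^0_{RDCS}$ in the Loewner order by $\max_{i>n_0}(n\tilde{\pi}_{\xi,i}^{RDCS})^{-1}\cdot\frac{1}{n}\sum_{i=n_0+1}^n\dot{l}_i(\hat{\boldsymbol{\theta}})^2=O_p(1)$, the maximal factor being $O_p(1)$ and the averaged squared-score factor being $O_p(1)$ by Assumption \ref{ass7}.

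For the Lindeberg condition I would pass to a $(2+\delta)$-th moment with $\delta\in(0,2]$, exactly as in Lemma \ref{lemma4}: the deterministic uncensored block contributes $o_p(1)$ (using $n_0/r=o(1)$ together with $r\le n$), while the censored block factors as $r^{-\delta/2}\epsilon^{-\delta}\max_{i>n_0}(n\tilde{\pi}_{\xi,i}^{RDCS})^{-(1+\delta)}\cdot\frac{1}{n}\sum_{i>n_0}\|\dot{l}_i(\hat{\boldsymbol{\theta}})\|^{2+\delta}$, which is $O_p(r^{-\delta/2})=o_p(1)$ once the fourth-moment bound of Assumption \ref{ass7} is invoked. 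The Lindeberg--Feller theorem then delivers $\sum_{i=1}^r\frac{\sqrt{r}}{n}\omega_{\xi,i}^{c,RDCS}\dot{l}^c_i(\hat{\boldsymbol{\theta}})\xrightarrow{d}N(0,\boldsymbol{\Lambda}^0_{RDCS})$, which is equivalent to the stated form after premultiplying by ${\boldsymbol{\Lambda}^0_{RDCS}}^{-1/2}$.

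The main obstacle is the dependence of the probabilities $\tilde{\pi}_{\xi,i}^{RDCS}$ on the \emph{random} pilot $\tilde{\boldsymbol{\theta}}_{cp}$. Conditioning on $\tilde{\boldsymbol{\theta}}_{cp}$ freezes these probabilities, but one must still secure the boundedness $\max_{i>n_0}(n\tilde{\pi}_{\xi,i}^{RDCS})^{-1}=O_p(1)$ that drives both the variance and the Lindeberg bounds; this is precisely where the $\xi_c$-floor is indispensable, since $\tilde{\pi}_{\xi,i}^{RDCS}\ge \xi_c/(n-n_0)$ forces $(n\tilde{\pi}_{\xi,i}^{RDCS})^{-1}\le(n-n_0)/(n\xi_c)=O_p(1)$ regardless of the pilot value, so the censoring counterpart of Assumption \ref{ass9} holds by construction. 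The remaining care is bookkeeping: every $O_p$ statement must be read conditionally on $(\mathcal{D}_n,\tilde{\boldsymbol{\theta}}_{cp})$, and the equivalence between conditional and unconditional convergence (Proposition 2 of \cite{Wang2022comparative}) lets one pass freely between the two, just as in the proofs of Theorems \ref{thm:censor-all} and \ref{thm:general-true}.
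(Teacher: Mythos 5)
Your proposal is correct and follows essentially the same route as the paper: a conditional Lindeberg--Feller argument with zero conditional mean from the full-data score equation, conditional variance equal to $\boldsymbol{\Lambda}^0_{RDCS}$ bounded via $\max_{i>n_0}(n\tilde{\pi}_{\xi,i}^{RDCS})^{-1}=O_p(1)$, and a $(2+\delta)$-moment bound for the Lindeberg condition using Assumption \ref{ass7}. Your added observation that the $\xi_c$-floor enforces the required boundedness by construction is a sensible refinement of the paper's appeal to Assumption \ref{ass9}, but it does not change the argument.
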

\begin{proof}
    We use the Lindeberg-Feller Theorem to prove \citep{Van_1998}.
    Note that
    \begin{equation*}
    \begin{aligned}
        &\mathbb{E}\left\{\sum_{i=1}^r\frac{\sqrt{r}}{n} \omega_{\xi,i}^{c,RDCS}\dot{l}^c_i(\hat{\boldsymbol{\theta}}) \Big| \mathcal{D}_n,\tilde{\boldsymbol{\theta}}_{cp}\right\} = \frac{\sqrt{r}}{n}\sum_{i=1}^n \dot{l}_i(\hat{\boldsymbol{\theta}}) = 0,\\
        &Var\left\{\sum_{i=1}^r \frac{\sqrt{r}}{n} \omega_{\xi,i}^{c,RDCS}\dot{l}^c_i(\hat{\boldsymbol{\theta}}) \Big| \mathcal{D}_n,\tilde{\boldsymbol{\theta}}_{cp}\right\} \\
        =& \boldsymbol{\Lambda}_{RDCS}^0 \\
        =& \frac{1}{n^2} \sum_{i=n_0+1}^n \frac{1}{\tilde{\pi}^{RDCS}_{\xi,i}} \dot{l}_i(\hat{\boldsymbol{\theta}})^2 - \left[ \frac{1}{n}\sum_{i=n_0+1}^n \dot{l}_i(\hat{\boldsymbol{\theta}}) \right]^2 \\
        \leq& \frac{1}{n}\max_{i=n_0+1,...,n}\left(\frac{1}{n\tilde{\pi}^{RDCS}_{\xi,i}} \right) \sum_{i=n_0+1}^n \dot{l}_i(\hat{\boldsymbol{\theta}})^2 \\
        =&O_p(1).
    \end{aligned}
    \end{equation*}
    where the inequality is in the sense Loewner ordering. Besides, for $\forall \epsilon>0$ and some $\delta \in (0,2]$,
    \begin{equation*}
    \begin{aligned}
        &\mathbb{E}\left\{ \sum_{i=1}^r \left\Vert \frac{\sqrt{r}}{n}\omega_{\xi,i}^{c,RDCS} \dot{l}^c_i(\hat{\boldsymbol{\theta}}) \right\Vert^2\right. \\
        &\quad \cdot\left. I\left( \left\Vert \frac{\sqrt{r}}{n}\omega_{\xi,i}^{c,RDCS} \dot{l}^c_i(\hat{\boldsymbol{\theta}}) \right\Vert>\epsilon \right) \Big| \mathcal{D}_n,\tilde{\boldsymbol{\theta}}_{cp} \right\} \\
        \leq & r^{1+\frac{\delta}{2}}n^{-2-\delta}\epsilon^{-\delta} \mathbb{E}\left\{ \sum_{i=1}^r \left\Vert \omega_{\xi,i}^{c,RDCS} \dot{l}^c_i(\hat{\boldsymbol{\theta}}) \right\Vert^{2+\delta} \right.\\
        & \quad \left. I\left( \left\Vert \omega_{\xi,i}^{c,RDCS} \dot{l}^c_i(\hat{\boldsymbol{\theta}}) \right\Vert>\frac{n}{\sqrt{r}}\epsilon \right) \Big| \mathcal{D}_n,\tilde{\boldsymbol{\theta}}_{cp} \right\}\\
        \leq & r^{1+\frac{\delta}{2}}n^{-2-\delta}\epsilon^{-\delta} \mathbb{E}\left\{ \sum_{i=1}^r \left\Vert \omega_{\xi,i}^{c,RDCS} \dot{l}^c_i(\hat{\boldsymbol{\theta}}) \right\Vert^{2+\delta} \Big| \mathcal{D}_n,\tilde{\boldsymbol{\theta}}_{cp} \right\} \\
        \leq & \frac{r^{1+\frac{\delta}{2}}}{n^{2+\delta}\epsilon^{\delta}} \left[\sum_{i=1}^{n_0} \left\Vert \dot{l}^c_i(\hat{\boldsymbol{\theta}}) \right\Vert^{2+\delta} \right.\\
        & \quad \left.+ \frac{1}{(r-n_0)^{1+\delta}}\sum_{i=n_0+1}^n \left(\frac{1}{\tilde{\pi}_{\xi,i}^{RDCS}}\right)^{1+\delta} \left\Vert \dot{l}^c_i(\hat{\boldsymbol{\theta}}) \right\Vert^{2+\delta}\right] \\
        \leq & o_p(1)+ \frac{1}{r^{\frac{\delta}{2}}\epsilon^\delta} \max_{i=n_0+1,...,n}\left(\frac{1}{n\tilde{\pi}^{RDCS}_{\xi,i}}\right)^{1+\delta} \\
        & \quad\cdot\frac{1}{n} \sum_{i=n_0+1}^n\left\Vert \dot{l}^c_i(\hat{\boldsymbol{\theta}}) \right\Vert^{2+\delta} \\
        =&O_p\left(r^{-\frac{\delta}{2}}\right)
    \end{aligned}
    \end{equation*}
    Hence, the Lindeberg-Feller conditions are satisfied. The central limit theorem shows that conditionally on $\mathcal{D}_n$ and $\tilde{\boldsymbol{\theta}}_{cp}$,
      \begin{equation*}
       {\boldsymbol{\Lambda}^0_{RDCS}}^{-\frac{1}{2}}\sum_{i=1}^r \frac{\sqrt{r}}{n} \omega_{\xi,i}^{c,RDCS}\dot{l}^c_i(\hat{\boldsymbol{\theta}}) \xrightarrow{d} N(0,\boldsymbol{I})
    \end{equation*}
\end{proof}

Now we use the Lemmas \ref{lemma7} and \ref{lemma8} to prove:
\begin{align*}
    \sqrt{r}{\boldsymbol{V}^0_{RDCS}}^{-\frac{1}{2}}(\tilde{\boldsymbol{\theta}}_{RDCS} - \hat{\boldsymbol{\theta}}) \xrightarrow{d}N(0,\boldsymbol{I}),
\end{align*}
where
\begin{equation*}
    \begin{aligned}
        \boldsymbol{V}^0_{RDCS} &= \boldsymbol{M}_g^{-1} \boldsymbol{\Lambda}^0_{RDCS} \boldsymbol{M}_g^{-1}
    \end{aligned}
\end{equation*}

\begin{proof}
    Directly calculation shows that, for any $\boldsymbol{\theta}$,
\begin{equation*}
\begin{aligned}
    \mathbb{E} \left\{ \frac{1}{n}l^*_c(\boldsymbol{\theta}) \Big| \mathcal{D}_n,\tilde{\boldsymbol{\theta}}_{cp} \right\} &= l_f(\boldsymbol{\theta}), \\
    Var\left\{ \frac{1}{n}l^*_c(\boldsymbol{\theta}) \Big| \mathcal{D}_n,\tilde{\boldsymbol{\theta}}_{cp}\right\} &= O_p(r^{-1})
\end{aligned} 
\end{equation*}
By Chebyshev's inequality, for any $\boldsymbol{\theta}$,
\begin{equation*}
    \frac{1}{n}l^*_c(\boldsymbol{\theta})-l_f(\boldsymbol{\theta}) = o_{p|\mathcal{D}_n}(1)
\end{equation*}
Then, from Theorem 5.9 of \cite{Van_1998}, conditionally on $\mathcal{D}_n$,
\begin{equation*}
    \left\Vert \tilde{\boldsymbol{\theta}}_{RDCS} - \hat{\boldsymbol{\theta}} \right\Vert = o_p(1).
\end{equation*}
By Taylor expansion,
\begin{equation*}
    0=\frac{1}{n}\dot{l}^*_c(\tilde{\boldsymbol{\theta}}_{RDCS})=\frac{1}{n}\dot{l}^*_c(\hat{\boldsymbol{\theta}})+B_{cp}\cdot(\tilde{\boldsymbol{\theta}}_{RDCS}-\hat{\boldsymbol{\theta}}),
\end{equation*}
By Lemma \ref{lemma7}, we have
\begin{equation*}
    0=\frac{1}{n}\dot{l}^*_c(\tilde{\boldsymbol{\theta}}_{RDCS})=\frac{1}{n}\dot{l}^*_c(\hat{\boldsymbol{\theta}})+\left[\ddot{l}_f(\hat{\boldsymbol{\theta}})+o_p(1)\right]\cdot(\tilde{\boldsymbol{\theta}}_{RDCS}-\hat{\boldsymbol{\theta}}),
\end{equation*}
which leads to
\begin{equation*}
\begin{aligned}
    &\tilde{\boldsymbol{\theta}}_{RDCS}-\hat{\boldsymbol{\theta}}\\
    =& -\left[\ddot{l}_f(\hat{\boldsymbol{\theta}})+o_p(1)\right]^{-1}\frac{1}{n}\dot{l}^*_c(\hat{\boldsymbol{\theta}})\\
    =& - \left[\ddot{l}_f(\hat{\boldsymbol{\theta}})+o_p(1)\right]^{-1} {\boldsymbol{\Lambda}^0_{RDCS}}^{\frac{1}{2}}\cdot{\boldsymbol{\Lambda}^0_{RDCS}}^{-\frac{1}{2}}\frac{1}{n}\sum_{i=1}^r\omega_{\xi,i}^{c,RDCS}\dot{l}^c_i(\hat{\boldsymbol{\theta}}).
\end{aligned}
\end{equation*}
Note that $\ddot{l}_f(\hat{\boldsymbol{\theta}}) = \boldsymbol{M}_g$. From Lemma \ref{lemma8} and Slutsky's theorem, we can derive that conditionally on $\mathcal{D}_n$ and $\tilde{\boldsymbol{\theta}}_{pc}$:
   \begin{align}
        \sqrt{r}{\boldsymbol{V}_{RDCS}^0}^{-\frac{1}{2}}(\tilde{\boldsymbol{\theta}}_{RDCS} - \hat{\boldsymbol{\theta}}) \xrightarrow{d}N(0,\boldsymbol{I}).
    \end{align}

\end{proof}

Next, we calculate the difference between $\boldsymbol{\Lambda}_{RDCS}^0$ and $\boldsymbol{\Lambda}_{RDCS}$. By direct calculation,

\begin{equation*}
\begin{aligned}
&\left\Vert \boldsymbol{\Lambda}_{RDCS}^0 - \boldsymbol{\Lambda}_{RDCS} \right\Vert \\
=& \left\Vert \frac{1}{n^2}\sum_{i=n_0+1}^n \frac{\dot{l}_i(\hat{\boldsymbol{\theta}})^2}{(1-\xi_c)\tilde{\pi}_i^{RDCS}(\tilde{\boldsymbol{\theta}}_{cp})+\frac{\xi_c}{n-n_0}} \right.\\
\quad-& \left. \frac{1}{n^2}\sum_{i=n_0+1}^n \frac{\dot{l}_i(\hat{\boldsymbol{\theta}})^2}{(1-\xi_c)\tilde{\pi}_i^{RDCS}(\hat{\boldsymbol{\theta}})+\frac{\xi_c}{n}} \right\Vert\\
\leq& \frac{1}{n^2}\sum_{i=n_0+1}^n \left\Vert \dot{l}_i(\hat{\boldsymbol{\theta}}) \right\Vert^2\left\vert \frac{1}{(1-\xi_c)\tilde{\pi}_i^{RDCS}(\tilde{\boldsymbol{\theta}}_{cp})+\frac{\xi_c}{n-n_0}}- \right. \\
\quad-& \left.\frac{1}{(1-\xi_c)\tilde{\pi}_i^{RDCS}(\hat{\boldsymbol{\theta}})+\frac{\xi_c}{n-n_0}} \right\vert \\
<&\left(\frac{n-n_0}{n\xi_c}\right)^2\sum_{i=n_0+1}^n \left\Vert \dot{l}_i(\hat{\boldsymbol{\theta}}) \right\Vert^2 \left\vert \tilde{\pi}_i^{RDCS}(\tilde{\boldsymbol{\theta}}_{cp}) - \tilde{\pi}_i^{RDCS}(\hat{\boldsymbol{\theta}}) \right\vert.
\end{aligned}
\end{equation*}

And
\begin{equation*} 
\begin{aligned}
&\left\vert \tilde{\pi}_i^{RDCS}(\tilde{\boldsymbol{\theta}}_{cp}) - \tilde{\pi}_i^{RDCS}(\hat{\boldsymbol{\theta}}) \right\vert \\
=& \left\vert \frac{\Vert  \dot{l}_i(\tilde{\boldsymbol{\theta}}_{cp})\Vert}{\sum_{i=n_0+1}^n \Vert \dot{l}_i(\tilde{\boldsymbol{\theta}}_{cp}) \Vert} - \frac{\Vert \dot{l}_i(\hat{\boldsymbol{\theta}}) \Vert}{\sum_{i=n_0+1}^n \Vert  \dot{l}_i(\hat{\boldsymbol{\theta}}) \Vert} \right\vert \\
\leq & \left\Vert \dot{l}_i(\hat{\boldsymbol{\theta}})  \right\Vert \cdot \frac{\sum_{j=n_0+1}^n \left\vert \Vert \dot{l}_j(\tilde{\boldsymbol{\theta}}_{cp})\Vert - \Vert \dot{l}_j(\hat{\boldsymbol{\theta}}) \Vert \right\vert}{\sum_{j=n_0+1}^n \Vert\dot{l}_j(\tilde{\boldsymbol{\theta}}_{cp})\Vert \sum_{j=n_0+1}^n \Vert\dot{l}_j(\hat{\boldsymbol{\theta}})\Vert} \\
+ & \frac{\left\vert \Vert \dot{l}_i(\tilde{\boldsymbol{\theta}}_p)\Vert - \Vert \dot{l}_i(\hat{\boldsymbol{\theta}}) \Vert \right\vert}{\sum_{j=n_0+1}^n \Vert\dot{l}_j(\tilde{\boldsymbol{\theta}}_p)\Vert} \\
\triangleq& E^c_{i1} + E^c_{i2}.
\end{aligned}
\end{equation*}

By Assumption \ref{ass6}, for $j=1,2,...,n$
\begin{equation*}
\begin{aligned}
&\left\vert \Vert \dot{l}_j(\tilde{\boldsymbol{\theta}}_{cp})\Vert - \Vert \dot{l}_j(\hat{\boldsymbol{\theta}}) \Vert \right\vert 
\leq  \left\Vert \dot{l}_j(\tilde{\boldsymbol{\theta}}_{cp}) -  \dot{l}_j(\hat{\boldsymbol{\theta}}) \right\Vert  
\\
\leq&  \sqrt{\sum_{k=1}^d \left[ \dot{l}_{j(k)}(\tilde{\boldsymbol{\theta}}_{cp}) -  \dot{l}_{j(k)}(\hat{\boldsymbol{\theta}})\right]^2} \\
\leq & \sum_{k=1}^d \left\vert \dot{l}_{j(k)}(\tilde{\boldsymbol{\theta}}_{cp}) -  \dot{l}_{j(k)}(\hat{\boldsymbol{\theta}}) \right\vert \leq  \sum_{k=1}^d \left\vert \ddot{l}_{j(k)}^{\top}(\acute{\boldsymbol{\theta}}_{ck})(\tilde{\boldsymbol{\theta}}_{cp} - \hat{\boldsymbol{\theta}}) \right\vert \\
\leq& 
\Vert \tilde{\boldsymbol{\theta}}_{cp} - \hat{\boldsymbol{\theta}} \Vert \sum_{k=1}^d \Vert \ddot{l}_{j(k)}(\acute{\boldsymbol{\theta}}_{ck})\Vert,
\end{aligned}
\end{equation*}
where $\dot{l}_{j(k)}$ is the $k$th element of $\dot{l}_j$, $\ddot{l}_{j(k)}$ is the $k$th column of $\ddot{l}_{j}$ and all the $\acute{\boldsymbol{\theta}}_{ck}$ are between $\hat{\boldsymbol{\theta}}$ and $\tilde{\boldsymbol{\theta}}_{cp}$.

Let $\zeta_{cj} \triangleq\sum_{k=1}^d \Vert \ddot{l}_{j(k)}(\acute{\boldsymbol{\theta}}_{ck})\Vert$, we have
$$E^c_{i1} \leq  \frac{\Vert \dot{l}_i(\hat{\boldsymbol{\theta}}) \Vert \Vert \tilde{\boldsymbol{\theta}}_{cp} - \hat{\boldsymbol{\theta}} \Vert \sum_{j=n_0+1}^n \zeta_{cj}}{\sum_{j=n_0+1}^n \Vert\dot{l}_j(\tilde{\boldsymbol{\theta}}_{cp})\Vert \sum_{j=n_0+1}^n \Vert\dot{l}_j(\hat{\boldsymbol{\theta}})\Vert},$$
and 
$$E^c_{i2}\leq \frac{\Vert \tilde{\boldsymbol{\theta}}_{cp} - \hat{\boldsymbol{\theta}} \Vert \zeta_{ci}}{\sum_{j=n_0+1}^n \Vert\dot{l}_j(\tilde{\boldsymbol{\theta}}_{cp})\Vert}.$$

From Equation \ref{66} and Assumption \ref{ass6}, we have
\begin{equation*}
\frac{1}{n}\sum_{j=n_0+1}^n \zeta_{ci}^2 \leq \frac{1}{n}\sum_{j=1}^n \zeta_{ci}^2 \leq O_p(1).
\end{equation*}

This also indicates that $\frac{1}{n}\sum_{i=n_0+1}^n \zeta_{ci} = O_p(1)$.
Hence, 

\begin{equation*}
\begin{aligned}
    \sum_{i=n_0+1}^n \Vert\dot{l}_{i}(\hat{\boldsymbol{\theta}})\Vert^2 E_{i1} &= \Vert \tilde{\boldsymbol{\theta}}_{cp} - \hat{\boldsymbol{\theta}} \Vert \frac{1}{n}\sum_{i=n_0+1}^n \Vert\dot{l}_i(\hat{\boldsymbol{\theta}})\Vert^2\cdot O_p(1)\\
    &=O_p(\Vert \tilde{\boldsymbol{\theta}}_{cp} - \hat{\boldsymbol{\theta}} \Vert),
\end{aligned}
\end{equation*}

\begin{equation*}
\begin{aligned}
&\sum_{i=n_0+1}^n \Vert\dot{l}_i(\hat{\boldsymbol{\theta}})\Vert^2 E_{i2} \\
\leq& \Vert \tilde{\boldsymbol{\theta}}_{cp} - \hat{\boldsymbol{\theta}} \Vert\frac{1}{n}\sum_{i=n_0+1}^n \Vert\dot{l}_i(\hat{\boldsymbol{\theta}})\Vert^2\zeta_{ci} \cdot O_p(1) \\
\leq&  \sqrt{\frac{1}{n}\sum_{i=n_0+1}^n \Vert \dot{l}_i(\hat{\boldsymbol{\theta}}) \Vert^4}\cdot \sqrt{\frac{1}{n}\sum_{i=n_0+1}^n \zeta_{ci}^2}\cdot O_p(\Vert \tilde{\boldsymbol{\theta}}_{cp} - \hat{\boldsymbol{\theta}}\Vert)\\
=& O_p(\Vert \tilde{\boldsymbol{\theta}}_{cp} - \hat{\boldsymbol{\theta}} \Vert)
\end{aligned}
\end{equation*} 

Thus, when $r_0,r,n\rightarrow \infty$, we have
$$\Vert \boldsymbol{\Lambda}_{RDCS}^0 - \boldsymbol{\Lambda}_{RDCS} \Vert =  O_p(\Vert \tilde{\boldsymbol{\theta}}_{cp} - \hat{\boldsymbol{\theta}} \Vert) = o_p(1).$$

Therefore, given $\mathcal{D}_n$ and $\tilde{\boldsymbol{\theta}}_{cp}$, as $r_0,r,n\rightarrow \infty$, we have 

$$\sqrt{r}\boldsymbol{V}_{RDCS}^{-\frac{1}{2}}(\tilde{\boldsymbol{\theta}}_{RDCS} - \hat{\boldsymbol{\theta}}) \xrightarrow{d} N(0,\boldsymbol{I}).$$

\newpage
\bibhang=1.7pc
\bibsep=2pc
\fontsize{9}{14pt plus.8pt minus .6pt}\selectfont
\renewcommand\bibname{\large \bf References}
\expandafter\ifx\csname
natexlab\endcsname\relax\def\natexlab#1{#1}\fi
\expandafter\ifx\csname url\endcsname\relax
  \def\url#1{\texttt{#1}}\fi
\expandafter\ifx\csname urlprefix\endcsname\relax\def\urlprefix{URL}\fi


\end{document}